\documentclass[preprint, 10pt]{elsarticle}
\usepackage[utf8]{inputenc}

\usepackage{graphicx}
\usepackage{amssymb}

\usepackage{url}            % simple URL typesetting
\usepackage{booktabs}       % professional-quality tables
\usepackage{amsthm}
\usepackage{amsmath}\usepackage{amsfonts}       % blackboard math symbols
\usepackage{amsfonts}       % blackboard math symbols
\usepackage{mathabx} 
\usepackage{multirow}
\usepackage{geometry}
\geometry{left=25mm, 
          right=25mm, 
          top=30mm, 
          bottom=29mm}
\usepackage{caption}

\usepackage[labelformat=simple]{subcaption}

\captionsetup{subrefformat=parens}
% \ref{fig:subfigure-label} -> Figure 9(a)

\usepackage{afterpage}

\usepackage{tikz-cd}
\usepackage{rotating}
  % vertical isomorphism
\usepackage[Symbol]{upgreek} 

%=== HYPERLINKS
\usepackage[unicode = true,
            colorlinks = true,
            linkcolor = black,
            urlcolor  = blue,
            citecolor = black,
            anchorcolor = black]{hyperref}

%=== ALGORITHMS
\usepackage{algorithm}      % for algorithms
\usepackage{algorithmic}    % for algorithms
% ==> use \STATE \IF \FOR \ENDIF \ENDFOR

% newtheorem definitions
\theoremstyle{plain}
\newtheorem{proposition}{Proposition}[section]
\newtheorem{theorem}[proposition]{Theorem}
\newtheorem{corollary}[proposition]{Corollary}
\newtheorem{lemma}[proposition]{Lemma}

\theoremstyle{definition}
\newtheorem{definition}[proposition]{Definition}

\theoremstyle{remark}
\newtheorem*{remark}{Remark}

\newenvironment{proofsketch}{%
  \proof}{\endproof} 
  
%=== MACROS
% general
\newcommand{\din}{d_{\infty}}
\newcommand{\linf}{\ell_{\infty}}

\newcommand{\diamInfty}[1]{\textrm{diam}_{\infty}({#1})}

% Proof Equivalence Alpha Flag complexes
\newcommand{\linkY}{\bar{\mathcal{Y}}}
\newcommand{\openY}{\mathring{{\mathcal{Y}}}}

\newcommand{\KC}[1]{K_{#1}^{\check{C}}}
\newcommand{\KAF}[1]{K_{#1}^{AF}}
\newcommand{\LC}[1]{L_{#1}^{\check{C}}}

\newcommand{\tauC}{\check{\tau}}

\newcommand{\yC}{\check{y}}
\newcommand{\YC }{\check{\mathcal{Y}}}

% voronoi and bisector
\newcommand{\bisector}[1]{\textrm{bis}_{#1}}

% alpha, cech, rips 
\newcommand{\cechri}{K_{r_i}^{\check{C}}}

\newcommand{\flagri}{K_{r_i}^{AF}}

\newcommand{\cechr}{K_r^{\check{C}}}
\newcommand{\alphar}{K_r^A}
\newcommand{\ripsr}{K_r^{VR}}
\newcommand{\minir}{K_r^M}
\newcommand{\flagr}{K_r^{AF}}

\newcommand{\cechre}{K_{r + \varepsilon}^{\check{C}}}

\newcommand{\flagre}{K_{r + \varepsilon}^{AF}}

% filtrations

\newcommand{\Falpha}{K_{\mathcal{R}}^A}
\newcommand{\Fcech}{K_{\mathcal{R}}^{\check{C}}}
\newcommand{\Frips}{K_{\mathcal{R}}^{VR}}
\newcommand{\Fflag}{K_{\mathcal{R}}^{AF}}
\newcommand{\Fmini}{K_{\mathcal{R}}^M}

% dgm
\newcommand{\modulek}{\text{M}_k(K_{\mathcal{R}})}

\newcommand{\dgmk}{\text{Dgm}_k}

% star, link, closure
\newcommand{\cl}{\text{Cl}}
\newcommand{\st}{\text{St}}

% delaunay edges
\newcommand{\A}{A_{e}}
\newcommand{\cB}[2]{\overline{B_{#1}(#2)}}

% minibox 
\newcommand{\minipq}{\text{Mini}_{pq}}

\newcommand{\nrv}{\textrm{Nrv}}
\newcommand{\R}{\mathbb{R}}
\newcommand{\Z}[1]{\mathcal{Z}_{#1}}

% bound expected number Minibox edges
\newcommand{\boundMiniEdges}{O\big(2^{d-1} n\ln^{d-1}(n)\big)}

% Algorithms
\newcommand{\minitwodim}{\textrm{Mini}_{p_{xy}q_{xy}}}

% appendix C
\newcommand{\finalcomplex}{\cl(\st(e)) \setminus \st(e)}

\newcommand{\nrvOpenY}{\nrv\left( \{\cB{\bar{r}}{y}\}_{y \in \openY} \right)}
\newcommand{\openX}{\mathring{\mathcal{X}}}
%================================================

\begin{document}

\begin{frontmatter}

\title{\texorpdfstring{Persistent Homology in $\linf$ Metric}
    {Persistent Homology in l∞ Metric}}
\tnotetext[t1]{This work was partially funded by the School of Mathematical Sciences at Queen Mary University of London and by the SSHRC-NFRF and DSTL/Turing Institute grant DS-015.}

\author{Gabriele Beltramo}\corref{cor1}
\ead{g.beltramo@qmul.ac.uk}

\author{Primoz Skraba}%\fnref{fn2}
\ead{p.skraba@qmul.ac.uk}

\cortext[cor1]{Corresponding author}

\address{School of Mathematical Sciences\\
Queen Mary University of London, London, E1 4NS}

\journal{COMP GEOM-THEOR APPL}

\begin{abstract}
Proximity complexes and filtrations are central constructions in topological data analysis. 
Built using distance functions, or more generally metrics, they are often used to infer connectivity information from point clouds. 
Here we investigate proximity complexes and filtrations built over the Chebyshev metric, also known as the maximum metric or $\ell_{\infty}$ metric, rather than the classical Euclidean metric.  
Somewhat surprisingly, the $\linf$ case has not been investigated thoroughly.
In this paper, we examine a number of classical complexes under this metric, including the \v{C}ech, Vietoris-Rips, and Alpha complexes.
We define two new families of flag complexes, which we call the Alpha flag and Minibox complexes, and prove their equivalence to \v{C}ech complexes in homological degrees zero and one.
Moreover, we provide algorithms for finding Minibox edges of two, three, and higher-dimensional points.
Finally, we present computational experiments on random points, which shows that Minibox filtrations can often be used to speed up persistent homology computations in homological degrees zero and one by reducing the number of simplices in the filtration. 
\end{abstract}

\begin{keyword}
Topological data analysis \sep Persistent homology \sep Chebyshev distance \sep Delaunay triangulation

\end{keyword}

\end{frontmatter}

%%
%% Start line numbering here if you want
%%
% \linenumbers

%=============================
% SECTION: Introduction
\section{Introduction}
\label{sec:intro}
Topological data analysis (TDA) has been the subject of intense research over the last decade~\cite{carlsson2009topology, chazal2017introduction, edelsbrunner2008persistent}. 
Persistent (co)homology is by far the most studied and popular algebraic invariant considered in TDA. This is an invariant which is assigned to a filtration -- an increasing sequence of spaces. A common filtration arises from the sub-level sets of the distance to a finite sample of a space under consideration. Most commonly, the finite sample is on or near a manifold embedded in Euclidean space, $\R^d$. 
In the standard Euclidean setting, the \v{C}ech and the Alpha filtrations~\cite{bauer2014morse, chazal2014persistencestability, edelsbrunner1994three} directly capture the topology of the corresponding sub-level sets. 
Relatedly, the Vietoris-Rips filtration~\cite{vietoris1927hoheren} provides an approximation to this topology. 
In particular, the corresponding filtrations in Euclidean space may be related via a sandwiching argument~\cite{ghrist2008barcodes}.

In this paper, we study the \v{C}ech persistent homology of a finite set of points $S$ in a $\linf$ metric space.
Given $n$ points in
% CHANGE: added "a"
a $d$-dimensional space, the \v{C}ech filtration has $\Theta(n^{d+1})$ simplices. % in order to compute the persistence diagrams of $S$.
In the Euclidean setting, the number of simplices to be considered can be reduced by using Alpha filtrations, restricting simplices to those of the Delaunay triangulation of $S$. Furthermore, the Alpha filtration is known to carry the same topological information as  the  \v{C}ech filtration (via \emph{homotopy equivalence}).
On the other hand, $\linf$-Voronoi regions are generally not convex, which invalidates the standard proof used to show the equivalence of Alpha and \v{C}ech filtrations.
Moreover, $\linf$-Voronoi regions and their dual Delaunay triangulations have been studied primarily from a geometric standpoint and/or in low dimension \cite{shute1991planesweep, boissonnat1998voronoi}.
To overcome some of these limitations, we define two novel families of complexes: Alpha flag complexes and Minibox complexes.
These are both flag complexes defined on a subset of the edges of \v{C}ech complexes.
Our contributions can be summarised as follows:
\begin{itemize}
    %---------
    \item Under genericity assumptions, we prove that Alpha complexes are equivalent to \v{C}ech complexes for two-dimensional point sets in $\linf$ metric, i.e. filtrations built from these complexes produce the same persistence diagrams. Moreover, we give a counterexample to this equivalence for points sets in three-dimensions.
    %---------
    \item For arbitrary dimension, we prove the equivalence of Alpha flag and Minibox complexes with \v{C}ech complexes of point sets in $\linf$ metric in homological degrees zero and one.
    %---------
    \item We study algorithms for finding edges contained in Minibox complexes.
    We recall known results on direct dominance and rectangular visibility for two-dimensional point sets.
    In three dimensions, we describe two novel algorithms for finding Minibox edges taking $O(k\log^2(n))$ and $O(n^2 \log(n))$ time respectively, where $k$ is the number of edges to be found.
    Finally, using orthogonal range queries, we achieve a running time of $O(n^2\log^{d-1}(n))$ for point sets in $\R^d$.
    For $d \geq 4$, this improves over a brute force approach, but does not improve over the algorithms given for the three-dimensional case.
    %---------
    \item We show that for randomly sampled points in $\R^d$ the expected number of Minibox edges is bounded by $\boundMiniEdges$.
    This is an improvement over the quadratic number of edges contained in \v{C}ech complexes, and results in smaller filtrations. 
    Moreover, Minibox edges can be found independently of higher-dimensional Minibox simplices.
    By comparison, Delaunay triangulations and hence Alpha complexes built over random points using the Euclidean metric are known to be $O(n)$ (linear in the number of vertices )~\cite{dwyer1991higher,schneider2008stochastic},  showing that in this setting, Minibox complexes (using the $\ell_\infty$ metric) are only larger by a polylogarithmic factor.
    %
   % On the other hand, to build Alpha complexes in the Euclidean setting it is necessary to find full $\ell_2$-Delaunay complexes, which contain $O(n^{\lceil \frac{d}{2} \rceil})$ $d$-simplices in $\R^d$ \cite{boissonnat2008delaunay}. 
    %---------
    \item We provide experimental evidence for speed ups in computation of persistence diagrams by means of Minibox filtrations in homological degrees zero and one.
\end{itemize}
While there is a much smaller body of work on complexes in the $\ell_\infty$ metric, as opposed to the $\ell_2$ metric, there are several relevant related works. 
In particular, approximations of $\linf$-Vietoris-Rips filtrations are studied in \cite{kerber2}.
Moreover, the equivalence  of the different complexes in zero and one homology is related to the results of \cite{kerber1}. 
In this work offset filtrations of convex objects in two and three-dimensional space are considered. 
As in our case, an equivalence of filtrations is proven in homological degrees zero and one by restricting offsets with Voronoi regions.
While this result holds for general convex objects, Minibox filtrations can be used to reduce the size of $\linf$-\v{C}ech filtration in dimensions higher than three.
We also note that our approach is similar in spirit to the preprocessing step via collapses of \cite{boissonnat2020edge-collapses}, but works directly on the geometry of the given finite point set $S$.

% Paragraph: Outline
\paragraph{Outline} 
We introduce background information in Section \ref{sec:pre}, where we also define Alpha flag and Minibox complexes.
Then, we study Alpha complexes and their properties in the $\linf$ setting in Section \ref{sec:alpha}.
In Section \ref{sec:alpha-flag} and \ref{sec:minibox}, we prove the equivalence of Alpha flag and Minibox complexes with \v{C}ech complexes in homological degrees zero and one.
Next, algorithms for finding Minibox edges, and results on worst-case and expected number of Minibox edges, are given in Section \ref{sec:algorithms}.
Finally, in Section \ref{sec:experiments} we present the results of computational experiments using Alpha flag and Minibox complexes.
Proof details of various technical results, as well as a summary of the notation, can be found in the Appendix.

%=============================
% Preliminaries
\section{Preliminaries}
\label{sec:pre}
We first introduce the relevant definitions used in later sections, then define the two new families of complexes studied in this paper.
We point the reader to \cite{hatcher2002algebraic, edelsbrunner2010computational, oudot2015persistence} for a more detailed introduction to homology and persistent homology.

% Paragraph: simplicial complexes
\paragraph{Simplicial complexes} In this work we limit ourselves to simplicial complexes built on a finite set of points in $\mathbb{R}^d$.
We denote the simplicial complex by $K$.
We say that $K$ is a \emph{flag complex} if it is the clique complex of its $1$-skeleton, i.e. it contains a simplex $\sigma$ if and only if it contains all the one-dimensional faces of $\sigma$.
We now introduce several constructions we will use. Let $\tau$ denote a simplex in $K$.
\begin{itemize}
    \item The \emph{nerve} of a finite collection of closed sets $\{A_i\}_{i\in I}$ in $\R^d$ is the abstract simplicial complex $\nrv(\{A_i\}_{i\in I}) = \big\{ \sigma \subseteq I \ \vert \ \bigcap_{i\in \sigma} A_i \neq \emptyset \big\}$.
    \item The \emph{star} of $\tau$ in $K$ is the subset of simplices of $K$ defined by $\st(\tau) = \{\sigma \in K \ \vert \ \tau \leq \sigma \}.$
    \item The \emph{closed star} $\cl(\st(\tau))$ of $\tau$ in $K$ is the smallest subcomplex of $K$ containing $\st(\tau)$.
\end{itemize}

% Paragraph: Balls and boxes in linf metric
\paragraph{Balls and boxes in \texorpdfstring{$\linf$}{l∞} metric}
Given $p,q \in \R^d$, the \emph{$\ell_{\infty}$ distance}, also known as maximum distance or Chebyshev distance, is defined by
\begin{equation*}
    \label{eq:distance-infinity}
	\din(p, q) = 
	    \max_{1 \leq i \leq d} 
	    \{ \vert p_i - q_i \vert \}.
\end{equation*}
The $\linf$-\emph{diameter} of a finite set of points $\sigma$ is $\diamInfty{\sigma} = \max_{p, q \in \sigma} \din(p, q)$.
Given a point $p\in (\mathbb{R}^d, \din)$ and $r \geq 0$, the \emph{open ball} of radius $r$ and center $p$ is $B_r(p) = \{ x \in \mathbb{R}^d \ \vert \ \din(x,p) < r \}$.
We denote the \emph{closed ball} of radius $r$ and center $p$ by $\overline{B_r(p)}$, its \emph{boundary} by  $\partial \overline{B_r(p)}$.
We have $\varepsilon(\overline{B_r(p)}) = \overline{B_{r+\varepsilon}(p)}$, where  $\varepsilon(A) = \{x \in \R^d \ \vert \ \din(x, A) \leq \varepsilon\}$ is the \emph{$\varepsilon$-thickening} of $A \subseteq \R^d$.
Alternatively, an open ball $B_r(p)$ consists of the points $x$ such that $p_i - r < x_i < p_i + r$ for $1 \leq i \leq d$.
Thus, $B_r(p) = \prod_{i=1}^d I_i^p$, where $I_i^p$ are intervals of the form $(p_i - r, p_i +r)$ for all $i = 1, \ldots, d$, is the interior of an axis-parallel hypercube centered at $p$ with sides of length $2r$.
In general, we call any such Cartesian product of open (closed) intervals, a $d$-dimensional \emph{open (closed) box}.
In case $l$ of the $d$ intervals defining a closed box are degenerate, i.e. their endpoints coincide, we obtain a $(d-l)$-dimensional closed box in $\R^d$.
To conclude, we recall two properties of boxes, which we often refer to in the rest of the paper. 
%
% Proposition: Pairwise intersection of boxes
\begin{proposition}
\label{prop:intersection-boxes}
Let $\mathcal{B}$ be a finite collection of either open or closed boxes in $\R^d$.
\begin{enumerate}
    \item[\emph{(i)}] The intersection of the boxes in $\mathcal{B}$ is equal to the Cartesian product of the intersections of intervals defining these boxes, i.e. this intersection is either empty or a box.
    \item[\emph{(ii)}] The intersection of any subset of boxes in $\mathcal{B}$ is non-empty if and only if all the pairwise intersections of these boxes are non-empty.
\end{enumerate}
\end{proposition}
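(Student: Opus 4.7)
The plan is to reduce both parts to elementary one-dimensional statements about intervals, exploiting the fact that a box is a Cartesian product. Writing each box as $B_j = \prod_{i=1}^d I_i^j$, the whole proposition is really about what happens to Cartesian products under intersection.

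For part (i), I would simply chase the definition: a point $(x_1, \ldots, x_d)$ lies in $\bigcap_{j \in J} B_j$ if and only if $x_i \in I_i^j$ for every $i$ and every $j$, which is the same as saying $x_i \in \bigcap_{j \in J} I_i^j$ for every $i$. Hence
\[
    \bigcap_{j \in J} B_j \;=\; \prod_{i=1}^d \Bigl( \bigcap_{j \in J} I_i^j \Bigr).
\]
Since the intersection of open (resp.\ closed) intervals on $\R$ is either empty or again an open (resp.\ closed) interval, the right-hand side is either empty (when some coordinate intersection is empty) or a box, which proves (i).

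For part (ii), the forward direction is immediate: any pair of boxes in the subset contains the full intersection, so if the full intersection is non-empty, so are all pairwise intersections. The substantive direction is the converse, and this is where I would invoke the Helly property for intervals in $\R$, namely that a finite family of intervals has common intersection iff every pair intersects (this is Helly's theorem with Helly number $2$ in dimension one, and is elementary to check by considering suprema of left endpoints versus infima of right endpoints). Suppose all pairwise intersections of the boxes in the subset are non-empty. By part (i), $B_j \cap B_k \neq \emptyset$ is equivalent to $I_i^j \cap I_i^k \neq \emptyset$ for every coordinate $i$. Applying the one-dimensional Helly statement to each coordinate separately yields $\bigcap_j I_i^j \neq \emptyset$ for each $i$, and part (i) then gives $\bigcap_j B_j = \prod_i \bigcap_j I_i^j \neq \emptyset$.

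There is no real obstacle here; the only thing to be careful about is the open/closed distinction, which is handled uniformly because the Helly argument for intervals works the same way whether the intervals are all open or all closed (one just checks that the coordinate $\sup_j (\text{left endpoint of } I_i^j)$ and $\inf_j (\text{right endpoint of } I_i^j)$ give a non-empty interval of the same type). The whole proof is essentially a two-line coordinate-wise reduction, and I would present it as such rather than developing any machinery.
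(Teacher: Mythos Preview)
Your argument is correct and is essentially the same as the paper's, which dispatches both parts in one sentence by noting that Cartesian products and intersections commute and that a finite intersection of intervals is either empty or an interval. You have simply unpacked that sentence, naming the one-dimensional fact as the Helly property for intervals; there is no substantive difference in approach.
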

\begin{proof}
Both \emph{(i)} and {\emph{(ii)}} follow from the facts that Cartesian products and intersections commute, and that the intersection of a finite number of intervals is either empty or an interval.
\end{proof}
% Paragraph: Voronoi and Delaunay
\paragraph{Voronoi diagrams and Delaunay triangulations} These constructions have been extensively studied in computational geometry \cite{de2010computational}, primarily for Euclidean space. We refer the reader to \cite{aurenhammer2013voronoi} for a reference on general Voronoi diagrams and Delaunay triangulations.
% Definition: Voronoi region/diagram
\begin{definition}
\label{def:voronoi}
Let $S$ be a finite set of points in $(\mathbb{R}^d, \din)$.
The $\ell_{\infty}$-\emph{Voronoi region} of a point $p\in S$ is
$$ V_p = \left\{ x \in \mathbb{R}^d \ \vert \ 
                 \din(p,x) \leq \din(q,x), 
                 \ \forall q \in S \right\}.$$
The \emph{bisector} of a subset $\sigma \subseteq S$ is 
$\bisector{\sigma}
= \left\{x \in \R^d \ \vert \
        \din(p, x) = \din(q, x) 
        \textrm{ for } p,q \in \sigma \right\} =$ 
$\bigcap_{p \in \sigma} V_p$.
The set of $\ell_{\infty}$-Voronoi regions $\{V_p\}_{p\in S}$ is the $\ell_{\infty}$-\emph{Voronoi diagram} of $S$.
\end{definition}
% Definition: Delaunay complex
\begin{definition}
\label{def:delaunay-complex}
The $\linf$-\emph{Delaunay complex} of a finite set of points $S$ in $(\R^d, \din)$ is the simplicial complex 
$$K^D = \left\{ \sigma \subseteq S \ \vert \ 
              \bisector{\sigma} 
              \neq
              \emptyset \right\}.$$
\end{definition}
If $d+2$ points lie on the boundary of a closed ball, see Figure \ref{fig:deg1}, then $K^D$ contains a $(d+1)$-dimensional simplex even if the points set $S$ is embedded in dimension $d$.
Moreover, given two points $p, q\in S$ lying on an axis-parallel hyperplane, their bisector may be degenerate. For instance, this is the case for the points in Figure \ref{fig:deg2}, where $V_p \cap V_q$ is the union of a line segment and two cones.
We define a concept of general position to avoid such cases for points in $\R^2$.
This is necessary for proving the equivalence of Alpha and \v{C}ech complexes of $S \subseteq \R^2$, and makes the geometric realization of $K^D$ well-defined.
%
% Figure: L_{\infty} Voronoi/Delaunay and degenerate case
\begin{figure}[tb]
    \centering
        \begin{subfigure}[b]{2in}
            \centering
            \includegraphics[width=2in]{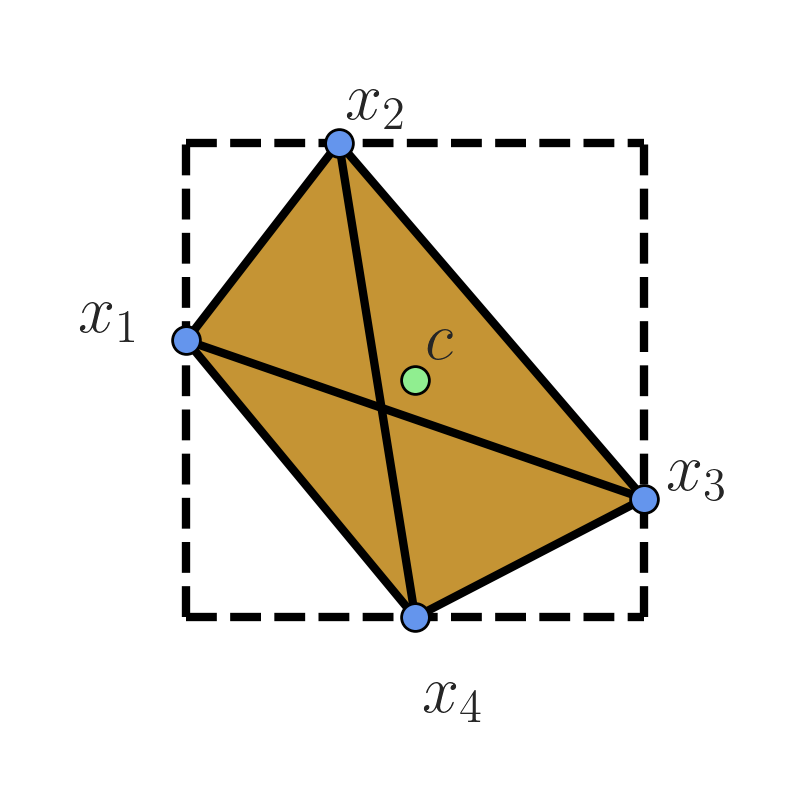}
            \caption{}
            \label{fig:deg1}
        \end{subfigure}
        \begin{subfigure}[b]{2in}
            \centering
            \includegraphics[width=2in]{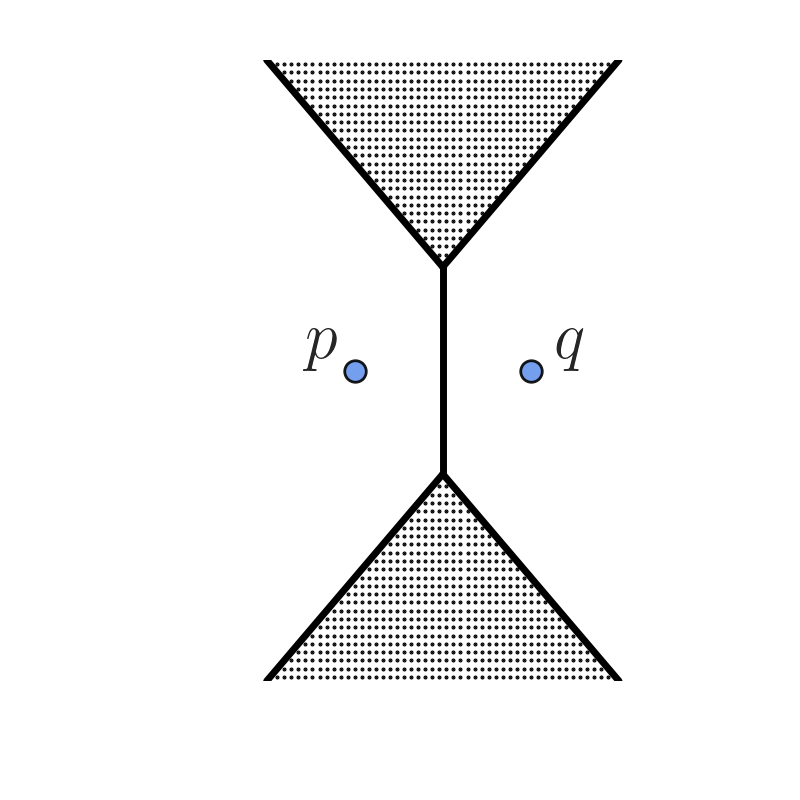}
            \caption{}
            \label{fig:deg2}
        \end{subfigure}
        \begin{subfigure}[b]{2in}
            \centering
            \includegraphics[width=2in]{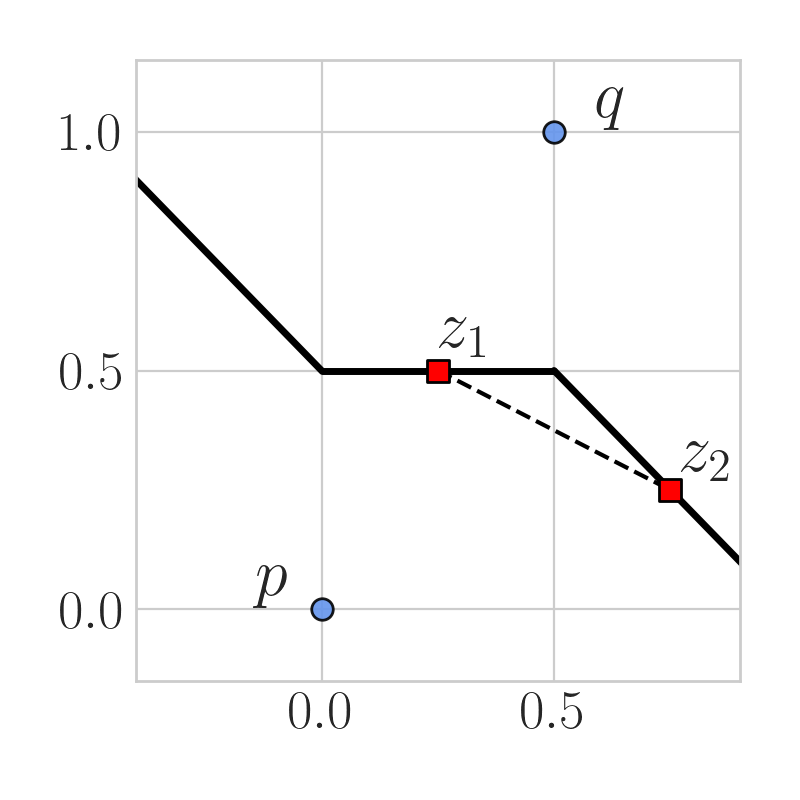}
            \caption{}
            \label{fig:non-conv}
        \end{subfigure}
    \caption{\textbf{(a)} Four points in $\mathbb{R}^2$ whose $\linf$-Delaunay complex is three-dimensional. \textbf{(b)} Degenerate intersection of $\linf$-Voronoi regions. \textbf{(c)} $\linf$-Voronoi regions are not convex.}
    \label{fig:voronoi-delaunay-and-degenerate-case}
\end{figure}
%
% Definition: General Position
\begin{definition}
\label{def:general-position}
Let $S$ be a finite set of points in $(\R^2, \din)$.
We say that $S$ is in \emph{general position} if no four points of $S$ to lie on the boundary of a square, and no two points share a coordinate.
\end{definition}
Corollary $3.18$ of \cite{criado2019tropical} ensures that if $S \subseteq (\R^2, \din)$ is in general position, then $\bisector{\{p_1, p_2, p_3\}} = V_{p_1} \cap V_{p_2} \cap V_{p_3}$ is either empty or a point for any distinct $p_1, p_2, p_3 \in S$.\footnote{Our general position corresponds to the weak general position of \cite{criado2019tropical}, where strong general position is also defined and implies weak general position.
We use the former of the two concepts, because it is sufficient to obtain the property of intersections of $\linf$-Voronoi regions of points in $\R^2$ used in this paper.}
% Definition: Delaunay triangulation
\begin{definition}
\label{def:delaunay-triangulation}
The $\linf$-\emph{Delaunay triangulation} of a finite set of points $S$ in general position in $(\mathbb{R}^2, \din)$ is the geometric realisation of the $\linf$-Delaunay complex $K^D$ of $S$, which is the set of convex hulls of simplices of $K^D$.
\end{definition}
Finally, we note that $\linf$-Voronoi regions are not generally convex.
To see this, consider $p=\left(0, 0\right)$ and $q=\left(\frac{1}{2}, 1\right)$ as in Figure \ref{fig:non-conv}.
These are such that $z_1=\left(\frac{1}{4}, \frac{1}{2}\right), z_2=\left(\frac{3}{4}, \frac{1}{4}\right) \in V_p, V_q$, but the middle point on the line segment from $z_1$ to $z_2$ is $\frac{z_1 + z_2}{2} = \left(\frac{1}{2}, \frac{3}{8}\right)$ which belongs to $V_p$ only.

% Paragraph: l-inf Delaunay characterization
\paragraph{$\linf$-Delaunay edges} 
By definition of $\linf$-Delaunay complex $K^D$, pairs of points $\{p, q\} \subseteq S$ are an edge of $K^D$ if and only if $V_p \cap V_q$ is non-empty.
We can further characterize $\linf$-Delaunay edges making use of the concept of witness points.
The proof of this characterization is given in Appendix \ref{sec-app:delaunay-edges}.
% Definition: witness points
\begin{definition}
\label{def:witness-points}
Let $S$ be a finite set of points in $(\mathbb{R}^d, \din)$.
A \emph{witness} point of $\sigma \subseteq S$ is a point $z$ such that $z \in \bisector{\sigma} = \bigcap_{p \in \sigma} V_p$ and $\din(z,p) = \frac{\diamInfty{\sigma}}{2}$ for each $p \in \sigma$.
We write $\Z{\sigma}$ for the \emph{set of witness points} of $\sigma$.
\end{definition}
% Proposition: characterization of edges
\begin{proposition}
\label{prop:delaunay-edge}
Let $S$ be a finite set of points in $(\R^d, \din)$.
Given a subset $e = \{p, q\} \subseteq S$, we define $\A^{\bar{r}} 
= 
\partial \cB{\bar{r}}{p} \cap \partial \cB{\bar{r}}{q}$, where $\bar{r} = \frac{\din(p,q)}{2}$.
We have that $\A^{\bar{r}} = \cB{\bar{r}}{p} \cap \cB{\bar{r}}{q}$ is a non-empty degenerate closed box.
Moreover, the set of witness points of $e$ is
$\Z{e} = \A^{\bar{r}} \setminus \big(\bigcup_{y\in S\setminus e} B_{\bar{r}}(y)\big)$, and $e$ belongs to the $\linf$-Delaunay complex of $S$ if and only if $\Z{e}$ is non-empty.
\end{proposition}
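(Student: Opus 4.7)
The plan is to establish the three assertions in order, reducing each to explicit box arithmetic and the Voronoi conditions.

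For the first claim, I would apply Proposition~\ref{prop:intersection-boxes}(i) to write $\cB{\bar{r}}{p} \cap \cB{\bar{r}}{q}$ coordinate-wise as $\prod_{i=1}^{d} [\max(p_i,q_i) - \bar{r},\, \min(p_i,q_i) + \bar{r}]$. Each factor has length $2\bar{r} - |p_i - q_i| \geq 0$, and at the coordinate $i^*$ achieving $|p_{i^*} - q_{i^*}| = \din(p,q) = 2\bar{r}$ it collapses to a point, so the box is non-empty (the midpoint $(p+q)/2$ witnesses this) and degenerate. To match $\A^{\bar{r}} = \partial \cB{\bar{r}}{p} \cap \partial \cB{\bar{r}}{q}$, apply the triangle inequality to any $x$ in the solid intersection: $2\bar{r} = \din(p,q) \leq \din(x,p) + \din(x,q) \leq 2\bar{r}$ forces both summands equal to $\bar{r}$, so $x$ lies on both boundaries.

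The identification $\Z{e} = \A^{\bar{r}} \setminus \bigcup_{y \in S\setminus e} B_{\bar{r}}(y)$ is then just an unpacking of Definition~\ref{def:witness-points}: the condition $\din(z,p) = \din(z,q) = \bar{r}$ is exactly $z \in \A^{\bar{r}}$ by the first claim, and once it holds, the Voronoi conditions $\din(z,y) \geq \din(z,p)$ for $y \in S \setminus e$ reduce to $z \notin B_{\bar{r}}(y)$.

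The heart of the proof is the equivalence $e \in K^D \Leftrightarrow \Z{e} \neq \emptyset$. The ``$\Leftarrow$'' direction is immediate since $\Z{e} \subseteq V_p \cap V_q$. For ``$\Rightarrow$'', given $z \in V_p \cap V_q$ with $t := \din(z,p) = \din(z,q) \geq \bar{r}$, my plan is to construct a witness via the coordinate-wise clamp $\pi(z)_i = \mathrm{med}(a_i, z_i, b_i)$ onto $\A^{\bar{r}} = \prod_i [a_i,b_i]$, and verify $\pi(z) \in \Z{e}$. Only $\din(\pi(z), y) \geq \bar{r}$ for $y \in S \setminus e$ is non-trivial. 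Choose $j^*$ with $|z_{j^*} - y_{j^*}| \geq t$ (possible because $\din(z, y) \geq t$); if $z_{j^*} \in [a_{j^*}, b_{j^*}]$ then $\pi(z)_{j^*} = z_{j^*}$ inherits the inequality, giving $\din(\pi(z), y) \geq t \geq \bar{r}$. Otherwise, without loss of generality $z_{j^*} > b_{j^*}$ and $p_{j^*} \leq q_{j^*}$, so $\pi(z)_{j^*} = p_{j^*} + \bar{r}$; the bound $|z_{j^*} - p_{j^*}| \leq t$ sandwiches $z_{j^*}$ in $(p_{j^*} + \bar{r},\, p_{j^*} + t]$, and a short case split on the sign of $z_{j^*} - y_{j^*}$ rules out the localisation $y_{j^*} \in (p_{j^*},\, p_{j^*} + 2\bar{r})$ that $|\pi(z)_{j^*} - y_{j^*}| < \bar{r}$ would imply.

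The main obstacle is precisely this last argument. A tempting alternative --- pushing $z$ along the segment towards $(p+q)/2$ --- fails because convexity of $\din(\cdot, y)$ only upper-bounds distances along the path, whereas verifying $\gamma(s) \in V_p \cap V_q$ requires lower bounds on $\din(\gamma(s), y)$. The coordinate-wise projection $\pi$ circumvents this by acting independently on each coordinate: either a ``far'' coordinate between $z$ and $y$ is preserved by $\pi$, or it has been clamped to a box face, in which case the bounds on $|z_i - p_i|$ and $|z_i - q_i|$ supplied by $z \in V_p \cap V_q$ are just tight enough to keep $y_i$ far from $\pi(z)_i$.
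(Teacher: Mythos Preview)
Your argument is correct, and for the first two claims it essentially coincides with the paper's. The substantive divergence is in the forward direction of the equivalence $e \in K^D \Rightarrow \Z{e} \neq \emptyset$.

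The paper argues the contrapositive via $\varepsilon$-thickenings: assuming $\A^{\bar{r}}$ is covered by $\bigcup_{y \in S\setminus e} B_{\bar{r}}(y)$, it uses the box identity $\varepsilon(B_1 \cap B_2) = \varepsilon(B_1) \cap \varepsilon(B_2)$ to deduce $\A^{\bar{r}+\varepsilon} \subseteq \varepsilon(\A^{\bar{r}}) \subseteq \bigcup_{y} B_{\bar{r}+\varepsilon}(y)$ for every $\varepsilon \geq 0$, which precludes any bisector point and hence $V_p \cap V_q = \emptyset$. Your route is instead constructive: starting from some $z \in V_p \cap V_q$ at distance $t \geq \bar{r}$, you clamp coordinate-wise onto $\A^{\bar{r}}$ and verify directly that the clamped point remains at distance $\geq \bar{r}$ from every $y \in S \setminus e$. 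The case analysis you sketch is sound: when the ``far'' coordinate $j^*$ is clamped to $b_{j^*} = p_{j^*} + \bar{r}$, the constraint $|z_{j^*} - p_{j^*}| \leq t$ pins $z_{j^*} \in (p_{j^*}+\bar{r}, p_{j^*}+t]$, and then $|z_{j^*} - y_{j^*}| \geq t$ forces $y_{j^*} \leq p_{j^*}$ or $y_{j^*} > p_{j^*} + 2\bar{r}$, either of which keeps $|\pi(z)_{j^*} - y_{j^*}| \geq \bar{r}$.

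Your projection argument is more elementary and self-contained; the paper's thickening argument is slicker once the box-thickening lemma is in hand, and that lemma is reused elsewhere (the three-dimensional counterexamples in Section~\ref{sec:alpha} and the nerve retraction in Appendix~\ref{sec-app:proof-main}), so its approach amortises better across the paper.
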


% Paragraph: Persistent homology
\paragraph{Persistent homology}
A \emph{filtration} of a simplicial complex $K$ parameterized by $\mathcal{R}$ is a nested sequence of subcomplexes 
$K_{\mathcal{R}} = \{ K_{r_1} \subseteq K_{r_2} \subseteq \ldots \subseteq K_{r_m} \}$, where $\mathcal{R} = \{r_i\}_{i=1}^m$ a finite set of monotonically increasing real values. 
We list three types of complexes used to define filtrations on a finite set of points $S \subseteq (\R^d, \din)$.
\begin{itemize}
    \item The \emph{Vietoris-Rips complex} with radius $r$ of $S$ is $\ripsr = \big\{\sigma \subseteq S \ \vert \ \diamInfty{\sigma} \leq 2r \big\}$.
    \item The \emph{\v{C}ech complex} with radius $r$ of $S$ is $\cechr = \big\{\sigma \subseteq S \ \vert \ \bigcap_{p \in \sigma} \overline{B_r(p)} \neq \emptyset \big\}$.
    \item The \emph{Alpha complex} with radius $r$ of $S$ is $\alphar = \big\{\sigma \subseteq S \ \vert \ \bigcap_{p \in \sigma} \big(\overline{B_r(p)} \cap V_p\big) \neq \emptyset \big\}$.
\end{itemize}
For each of the complexes above, we have $K_{r_1}^{\bullet} \subseteq K_{r_2}^{\bullet}$ if $r_1 < r_2$. 
So, given a monotonically increasing set of real values $\mathcal{R}$, we have the filtration $K_{\mathcal{R}}^{\bullet} = \{ K_{r_1}^{\bullet} \subseteq K_{r_2}^{\bullet} \subseteq \ldots \subseteq K_{r_m}^{\bullet} \}$.
% Proposition: equality Cech and Vietoris-Rips
\begin{proposition}
\label{prop:equality-cech-and-rips}
Let $S$ be a finite set of points in $(\R^d, \din)$. The \v{C}ech and Vietoris-Rips complexes of $S$ coincide, i.e. $\cechr = \ripsr$ for any $r \in \R$. 
\end{proposition}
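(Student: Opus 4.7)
The plan is to use the Helly-type property for boxes established in Proposition \ref{prop:intersection-boxes}(ii), namely that a finite collection of axis-parallel boxes has non-empty intersection if and only if every pairwise intersection is non-empty. Since closed $\linf$-balls $\overline{B_r(p)}$ are precisely closed axis-parallel boxes centered at $p$ with side length $2r$, this reduces the Čech condition to a pairwise condition, matching the Vietoris-Rips definition.

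I would prove the two inclusions separately. For $\ripsr \subseteq \cechr$, fix $\sigma \in \ripsr$, so that $\diamInfty{\sigma}\leq 2r$. For any pair $p,q\in\sigma$ one has $\din(p,q)\leq 2r$, and the midpoint-style witness (e.g.\ the coordinatewise midpoint) lies in $\overline{B_r(p)}\cap\overline{B_r(q)}$; more formally, $\overline{B_r(p)}$ and $\overline{B_r(q)}$ are closed boxes whose $i$-th defining intervals are $[p_i-r,p_i+r]$ and $[q_i-r,q_i+r]$, and since $|p_i-q_i|\leq 2r$ these two intervals intersect for every $i$, so by Proposition \ref{prop:intersection-boxes}(i) their intersection is a non-empty box. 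Then Proposition \ref{prop:intersection-boxes}(ii) promotes the pairwise non-emptiness of the family $\{\overline{B_r(p)}\}_{p\in\sigma}$ to full non-emptiness, i.e.\ $\sigma\in\cechr$.

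For the reverse inclusion $\cechr \subseteq \ripsr$, suppose $\sigma \in \cechr$ and pick $x\in\bigcap_{p\in\sigma}\overline{B_r(p)}$. For any $p,q\in\sigma$, the triangle inequality in $(\R^d,\din)$ yields $\din(p,q)\leq \din(p,x)+\din(x,q)\leq 2r$. Taking the maximum over $p,q\in\sigma$ gives $\diamInfty{\sigma}\leq 2r$, so $\sigma\in\ripsr$.

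I do not expect any real obstacle here: the non-trivial content is entirely contained in Proposition \ref{prop:intersection-boxes}, which already delivers the Helly property needed for the first inclusion, while the second inclusion is just the triangle inequality and holds in any metric. The only small care point is that one must use \emph{closed} balls so that the pairwise intersections of boxes are non-empty precisely when $\din(p,q)\leq 2r$ (rather than $<2r$), matching the weak inequality in the definitions of both $\ripsr$ and $\cechr$.
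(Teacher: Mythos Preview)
Your proof is correct and follows the same approach as the paper, which simply states that the result ``follows from the definitions of \v{C}ech and Vietoris-Rips complexes and Proposition~\ref{prop:intersection-boxes}(ii).'' You have just unpacked that one-line argument: the pairwise-to-global step via Proposition~\ref{prop:intersection-boxes}(ii) is exactly the non-trivial direction, and the reverse inclusion via the triangle inequality is the trivial one.
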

\begin{proof}
Follows from the definitions of \v{C}ech and Vietoris-Rips complexes and Proposition \ref{prop:intersection-boxes}(ii).
\end{proof}
\begin{corollary}
\label{cor:cech-is-flag}
The \v{C}ech complexes of $S \subseteq (\R^d, \din)$ are flag complexes.
The smallest radius such that $\sigma \in \cechr$ is $\bar{r} = \frac{\diamInfty{\sigma}}{2}$ for each $\sigma \subseteq S$.
\end{corollary}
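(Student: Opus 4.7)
The plan is to reduce both claims to the Vietoris-Rips characterization via Proposition \ref{prop:equality-cech-and-rips}, which already does the heavy lifting.

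First, I would observe that since $\cechr = \ripsr$ for every $r$, a simplex $\sigma$ lies in $\cechr$ if and only if $\diamInfty{\sigma} \leq 2r$, i.e. if and only if $\din(p,q) \leq 2r$ for every pair $\{p,q\} \subseteq \sigma$. This pair condition is exactly the statement that every edge (every $1$-face) of $\sigma$ belongs to $\cechr$. Hence $\sigma \in \cechr$ iff all its $1$-faces are in $\cechr$, which is the definition of a flag complex.

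For the second claim, the same equivalence $\sigma \in \cechr \Leftrightarrow \diamInfty{\sigma} \leq 2r$ immediately identifies the threshold radius: $\sigma$ enters the filtration precisely when $2r$ first reaches $\diamInfty{\sigma}$, i.e. at $\br = \diamInfty{\sigma}/2$. One should also note this minimum is actually attained, since $\diamInfty{\sigma} \leq 2\br$ with equality.

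No part of this is really an obstacle — Proposition \ref{prop:equality-cech-and-rips} (which ultimately relies on the Helly-type property in Proposition \ref{prop:intersection-boxes}(ii) for boxes) converts the global Čech intersection condition into a pairwise condition, and everything follows essentially by unpacking definitions. The only thing to be slightly careful about is making explicit the two directions of the flag condition and ensuring the minimum is actually realized rather than an infimum, both of which follow from using the closed inequality $\diamInfty{\sigma} \leq 2r$ throughout.
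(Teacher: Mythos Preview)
Your proposal is correct and is essentially the paper's own approach: the corollary is stated immediately after Proposition~\ref{prop:equality-cech-and-rips} without a separate proof, so the intended argument is exactly the reduction to the Vietoris--Rips description that you spell out. Your added care about the flag biconditional and about the infimum being attained is fine but not strictly needed.
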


Given a filtration $K_{\mathcal{R}}$, we obtain the $k$-th \emph{persistence module} $\modulek = \{ H_k(K_{r_1}; \mathbb{F}) \rightarrow H_k(K_{r_2}; \mathbb{F}) \rightarrow \cdots \rightarrow H_k(K_{r_m}; \mathbb{F}) \}$ by applying the $k$-th homology functor $H_k(-;\mathbb{F})$, with coefficients in a field $\mathbb{F}$, to its elements.
This admits a unique decomposition, as shown in \cite{zomorodian2005computing}, which is in bijection with a set of intervals of the form $[r_i, r_j)$ and $[r_i, +\infty)$.
Mapping these intervals into the points $(r_i, r_j)$ and $(r_i, +\infty)$, we obtain the $k$-th \emph{persistence diagram} $\dgmk(K_{\mathcal{R}})$ of the filtration $K_{\mathcal{R}}$.
This is a multi-set of points in the extended plane $\overline{\R}^2$, where $\overline{\mathbb{R}} = \mathbb{R}\cup \{ + \infty \}$.
%
% The \emph{bottleneck distance} between two persistence diagrams $\dgmk(K^1_{\mathcal{R}})$, $\dgmk(K^2_{\mathcal{R}})$ is
% \begin{equation*}
%     d_B(\dgmk(K_{\mathcal{R}}^1), \dgmk(K_{\mathcal{R}}^2)) 
%     = \inf_{\eta: X \rightarrow Y} \sup_{x\in X} \din(x, \eta(x)),
% \end{equation*}
% where the infimum is taken over the set of all possible bijections $\eta: X \rightarrow Y$, $X=\dgmk(K_{\mathcal{R}}^1) \cup D$, $Y=\dgmk(K_{\mathcal{R}}^2) \cup D$, and $D$ is the set of diagonal points counted with infinite multiplicity.
% % stability theorem
%
Importantly, the Stability Theorem of \cite{cohen2007stability} implies that the persistence diagrams of filtrations of \v{C}ech complexes of $S \subseteq (\R^d, \din)$ are infinitesimally perturbed if $S$ is infinitesimally perturbed.

In practice, the persistent homology algorithm, first described in \cite{edelsbrunner2002topo-simplification}, takes a filtration, and outputs its persistence diagrams up to a fixed homological degree.
A substantial amount of work has been done on the computational complexity of computing persistent homology, with a large number of results \cite{ chen2011persistent, de2011dualities, wagner2012efficient, mischaikow2013morse, bauer2014clear, bauer2014distributed} which have greatly sped up computations in practice \cite{otter2017roadmap}. 
The standard algorithm has a complexity of $O(m^3)$, which can be reduced to $O(m^\omega)$ where $m$ is the number of simplices in the input filtrations and $\omega$ is the matrix multiplication exponent \cite{milosavljevic2011zigzag}.
However, it has been observed that the majority of computation time is spent constructing the filtration. Thus, smaller complexes generally result in faster computation. 
For instance, in the case of \v{C}ech filtrations, we have to consider $\Uptheta(n^{k+2})$ simplices in order to compute their $k$-th persistence diagram.
In Euclidean metric, \v{C}ech persistent homology can be computed using Alpha filtrations, which greatly reduces the number of simplices to be considered.

% Paragraph: New complexes
\paragraph{New complexes}
We propose the use of the following families of complexes as an alternative to Alpha complexes for the computation of \v{C}ech persistence diagrams of a finite set of points $S \subseteq (\R^d, \din)$.
\begin{itemize}
    \item The \emph{Alpha flag complex} with radius $r$ of $S$ is 
    $$\flagr 
    = 
    \big\{ 
      \sigma \subseteq S 
      \ \vert \ 
      \diamInfty{\sigma} \leq 2r 
      \textrm{ and } 
      \{p,q\} \in K^D 
      \textrm{ for each } 
      \ p, q \in \sigma   
    \big\},$$
    where $K^D$ is the $\linf$-Delaunay complex of $S$.
    \item The \emph{Minibox complex} with radius $r$ of $S$ is 
    $$
    \minir 
    = 
    \left\{ 
    \sigma \subseteq S 
    \ \vert \ 
    \diamInfty{\sigma} \leq 2r
    \textrm{ and }
    \minipq \cap S \neq \emptyset
    \textrm{ for each }
    p, q \in \sigma
    \right\}, 
    $$
    where $\minipq = \prod_{i=1}^d (\min \{p_i, q_i\}, \max\{p_i, q_i\})$ is the interior of the minimal bounding box of $p$ and $q$.
\end{itemize}
\begin{remark}
In Section \ref{sec:alpha}, we show that $\alphar$ is not in general a flag complex, implying $\alphar \neq \flagr$. 
\end{remark}

It should be noted that both Alpha flag and Minibox complexes are flag complexes. 
Thus, we only need to determine their edges in order to build them.
In particular, we need to find the $\linf$-Delaunay edges of $S$ for $\flagr$.
On the other hand, we have to find the all the pairs of points $\{p, q\} \subseteq S$ satisfying $\minipq \cap S \neq \emptyset$ for $\minir$.

In the remainder of the paper, we prove the equivalence of the above complexes with \v{C}ech complexes in homological degrees zero and one.
Moreover, we provide efficient algorithms for the computation of the edges contained in Minibox complexes.
Finally, in Section \ref{sec:experiments}, we study how the reduced size Minibox filtrations helps with persistence computations in practice.

%=============================
% Alpha complexes
\section{Alpha Complexes}
\label{sec:alpha}
Given a finite set of points $S$ in a Euclidean space, it is known that Alpha and \v{C}ech complexes are equivalent, i.e. produce the same persistence diagrams.
This is a consequence of the Nerve Theorem, see \cite[Section 3.4]{edelsbrunner2010computational}.
In this section, we study the properties of Alpha complexes of points in a $\linf$ metric space.
In particular, we show their equivalence with \v{C}ech complexes for two-dimensional points.
On the other hand, we give a counterexample to this equivalence for points in thee-dimensions.

% Paragraph: Alpha complexes in two dimensions
\paragraph{Alpha complexes in $\R^2$}
% Figure: Euclidean Voronoi vs L_infty Voronoi in 2D
\begin{figure}[tb]
    \centering
        \begin{subfigure}[b]{2in}
            \centering
            \includegraphics[width=2in]{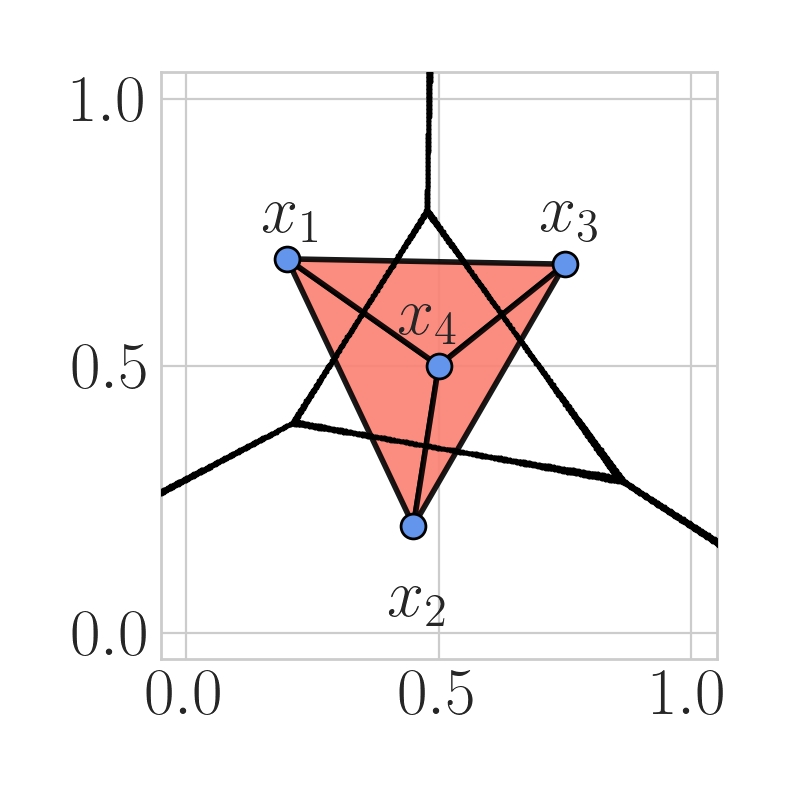}
            \caption{}
            \label{fig:voronoi-euclidean}
        \end{subfigure}
        \qquad
        \begin{subfigure}[b]{2in}
            \centering
            \includegraphics[width=2in]{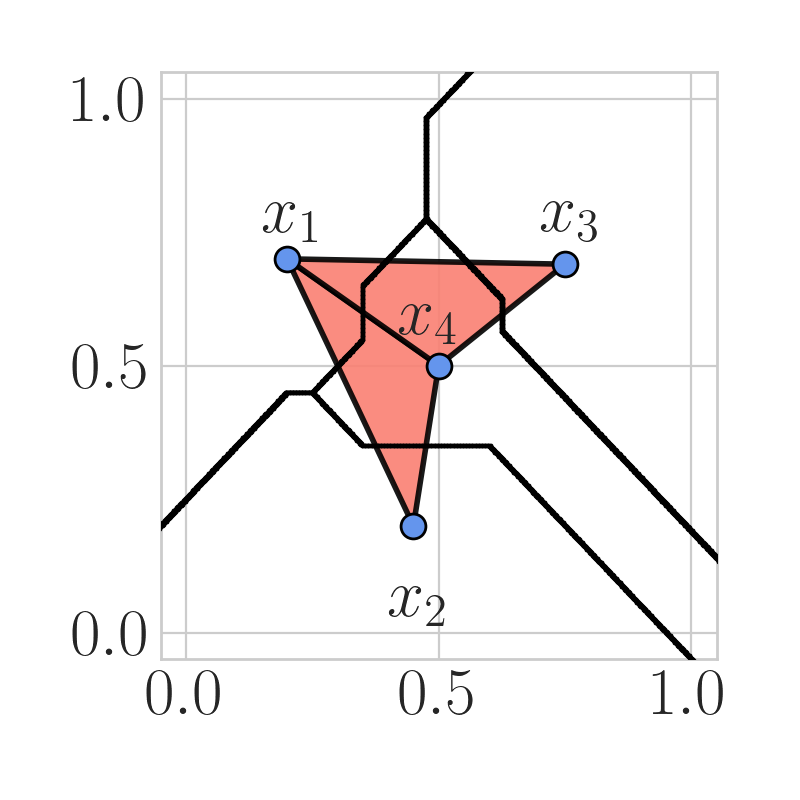}
            \caption{}
            \label{fig:voronoi-inf}
        \end{subfigure}
    \caption{Voronoi diagrams and Delaunay triangulations of four points in $\mathbb{R}^2$, with Euclidean and $\linf$ metric in \textbf{(a)} and \textbf{(b)} respectively. 
    Note that the triangle $\{x_1, x_2, x_3\}$ is missing from the Euclidean Delaunay triangulation in \textbf{(a)}.}
    \label{fig:example-voronoi}
\end{figure}
Assuming general position (Definition \ref{def:general-position}), we are able to show the following equivalence of complexes.
Here we provide a proof sketch highlighting the main ideas of the proof. The full details of the proof can be found in Appendix \ref{sec-app:alpha-cech-R2}.
% Theorem: equivalence in \R^2
\begin{theorem}
\label{thm:equiv-2D}
Let $S$ be a finite set of points in $(\R^2, \din)$ in general position.
The Alpha and \v{C}ech filtrations of $S$ are equivalent, i.e. produce the same persistence diagrams.
\end{theorem}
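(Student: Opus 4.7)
The plan is to invoke the Persistent Nerve Theorem on two compatible good covers of the same offset filtration $\{X_r\}_{r \in \mathcal{R}}$, with $X_r = \bigcup_{p \in S} \cB{r}{p}$. Both complexes arise as nerves of natural covers of $X_r$: the \v{C}ech cover $\{\cB{r}{p}\}_{p \in S}$ and the Alpha cover $\{\cB{r}{p} \cap V_p\}_{p \in S}$ (the latter indeed covers $X_r$ because the $\linf$-Voronoi regions cover $\R^2$). These covers are both monotone in $r$, so if I verify that every non-empty finite intersection is contractible at each level, the Persistent Nerve Theorem furnishes homotopy equivalences $|\cechr| \simeq X_r \simeq |\alphar|$ commuting with the filtration inclusions, so that the two persistence modules are isomorphic and the diagrams coincide.

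Goodness of the \v{C}ech cover is immediate from Proposition \ref{prop:intersection-boxes}(i), since any non-empty intersection of closed $\linf$-balls is a closed box, hence convex. For the Alpha cover I would argue by cases on the cardinality of $\sigma$. For singletons, I would prove that $V_p$ is star-shaped with respect to $p$: picking a coordinate $i^*$ realising $\din(q,x) = |x_{i^*} - q_{i^*}|$, one checks that $p_{i^*} - q_{i^*}$ and $x_{i^*} - q_{i^*}$ must have the same sign (otherwise $q_{i^*}$ would lie strictly between $p_{i^*}$ and $x_{i^*}$, giving $|x_{i^*} - p_{i^*}| > |x_{i^*} - q_{i^*}|$ and contradicting $\din(q,x) \geq \din(p,x) \geq |x_{i^*} - p_{i^*}|$), and the same-sign case yields $\din(q, tp + (1-t)x) \geq (1-t)\din(q,x) \geq (1-t)\din(p,x) = \din(p, tp+(1-t)x)$; intersecting with the convex ball $\cB{r}{p}$ preserves star-shapedness, so $\cB{r}{p} \cap V_p$ is contractible. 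For a pair $\{p,q\}$, the intersection equals $(\cB{r}{p} \cap \cB{r}{q}) \cap \bisector{\{p,q\}}$; general position in $\R^2$ rules out the two-dimensional bisector components displayed in Figure \ref{fig:deg2}, leaving a simple piecewise-linear staircase arc, and I would verify by a direct case analysis that this arc meets the axis-aligned box in a single connected sub-arc. For $\sigma$ of cardinality at least three, Corollary $3.18$ of \cite{criado2019tropical} forces $\bigcap_{p \in \sigma} V_p$ to be empty or a single point under general position, so the corresponding Alpha intersection is empty or a point, and in particular contractible.

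With both covers shown to be good at every filtration level, a direct application of the Persistent Nerve Theorem finishes the proof. The main obstacle I anticipate is the pairwise case: although the individual ingredients are geometrically simple (a convex axis-aligned box and a connected piecewise-linear bisector arc), one must still rule out the possibility that a growing axis-aligned box meets the staircase bisector in several components, and this is exactly where both hypotheses of general position, the no-shared-coordinate condition and the no-four-points-on-a-common-square condition, are each used to constrain the geometry of $V_p \cap V_q$.
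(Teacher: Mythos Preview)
Your proposal is correct and follows essentially the same approach as the paper: both apply the Nerve Theorem to the \v{C}ech and Alpha covers of $X_r$, verify goodness via the same cardinality case analysis (star-shapedness for $|\sigma|=1$, Corollary~3.18 of \cite{criado2019tropical} for $|\sigma|\geq 3$), and conclude via persistence equivalence. The only substantive difference is in the pairwise case, where the paper avoids your anticipated staircase-meets-box case analysis by constructing an explicit deformation retraction of $(\cB{r}{p}\cap V_p)\cap(\cB{r}{q}\cap V_q)$ onto the witness segment $A' = A_e^{\bar r}\setminus\bigcup_{y\in S\setminus\{p,q\}} B_{\bar r}(y)$ via Euclidean projection, using convexity of $\cB{r}{p}\cap\cB{r}{q}$ to check that the retraction restricts.
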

\begin{proofsketch}
We apply the Nerve Theorem of \cite{handbook-comb} to the Alpha complex $\alphar$ for any $r \in \R$, which is the nerve of $\{\cB{r}{r} \cap V_p\}_{p \in S}$.
Given $e = \{p, q\}$ and $\bar{r} = \frac{\din(p,q)}{2}$ as in Proposition \ref{prop:delaunay-edge}, this is possible because:
\begin{itemize}
    \item Each $\cB{r}{p} \cap V_p$ is star-like, and so contractible;
    \item The intersections $\big(\cB{r}{p} \cap V_p\big) \cap \big(\cB{r}{q} \cap V_q\big)$ are either a line segment contained in $\A^{\bar{r}}= \cB{\bar{r}}{p} \cap \cB{\bar{r}}{q}$ or retract on such a line segment;
    \item Intersections of $3$ or more $\cB{r}{p} \cap V_p$ are either empty or consist of a single point by the general position assumption. 
\end{itemize}
Moreover, the Nerve Theorem applies to the collection $\{\cB{r}{p}\}_{p \in S}$ by convexity, and $\bigcup_{p\in S} \big(\cB{r}{p} \cap V_p\big) = \bigcup_{p \in S} \cB{r}{p}$.
So $\alphar \simeq \cechr$ for any $r \in \R$, and the result follows by applying the Persistence Equivalence Theorem of \cite[Section 7.2]{edelsbrunner2010computational}.
\end{proofsketch}
This is similar to the results of \cite{kerber1}, which proves that the nerve of offsets of convex shapes is equivalent, in homological degrees zero and one, to the union of the shapes in two and three dimensions.
Furthermore, the above theorem implies that the degree two homology of Alpha complexes of $S \subseteq (\R^2, \din)$ is trivial, because it equals the one of the two-dimensional sets $\bigcup_{p \in S} \cB{r}{p}$.

We conclude our discussion of the properties of Alpha complexes of point sets in $(\R^2, \din)$ with the following result, proven in Appendix \ref{sec-app:alpha-cech-R2}, which shows that they are flag complexes on $\linf$-Delaunay edges.
The same does not hold for Euclidean Alpha complexes, which are built on Euclidean Delaunay triangulations, which are not flag in general. 
See Figure \ref{fig:example-voronoi} for an example.
%
% Proposition: Delaunay complex is flag
\begin{proposition}
\label{prop:delaunay-alpha-flag-in-2D}
Let $S$ be a finite set of points in general position in $(\mathbb{R}^2, \din)$ and $r \geq 0$.
Both the $\linf$-Delaunay complex $K^D$ and the Alpha complex $\alphar$ of $S$ are flag complexes.
Moreover, $e = \{p, q\} \in K^D$ belongs to $\flagr$ if and only if $\frac{\din(p, q)}{2} \leq r$.
\end{proposition}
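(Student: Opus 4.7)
\begin{proofsketch}
The plan has three parts: bound the dimension of both complexes, verify the flag property on triangles, and unwind the definition for the ``Moreover'' clause.

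First, I would show both complexes have dimension at most two. Suppose $K^D$ contained a $3$-simplex $\sigma = \{p_0, p_1, p_2, p_3\}$. By the definition of $K^D$ there would be a point $z \in \bigcap_{i=0}^{3} V_{p_i}$, and such a $z$ is equidistant from all four $p_i$, so the four points would lie on the boundary of a common axis-aligned square, contradicting general position (Definition \ref{def:general-position}). The same bound carries over to $\alphar$ since $\bigcap_{p \in \sigma}(\cB{r}{p} \cap V_p) \subseteq \bigcap_{p \in \sigma} V_p$.

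Since both complexes are at most two-dimensional, flagness reduces to triangles: given a triple $\{p, q, r\}$ whose three edges lie in the complex, show $\{p, q, r\}$ also lies in the complex. For $K^D$ I would use the structure of $\linf$-Voronoi diagrams in $\R^2$ under general position: by Corollary $3.18$ of \cite{criado2019tropical}, every non-empty triple intersection $V_p \cap V_q \cap V_s$ is a single point, so the Voronoi diagram is a planar subdivision with trihedral vertices. The pairwise intersections $V_p \cap V_q$, $V_q \cap V_r$, $V_p \cap V_r$ are then one-dimensional bisector pieces bounded by trihedral Voronoi vertices, and a case analysis tracking how these three pieces meet in the plane produces a common Voronoi vertex in $V_p \cap V_q \cap V_r$. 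For $\alphar$ I would additionally establish the geometric lemma that whenever $V_p \cap V_q \cap V_r$ reduces to a single point $z$, the common value $\din(z, p) = \din(z, q) = \din(z, r)$ equals $\frac{\diamInfty{\{p, q, r\}}}{2}$: under general position, three points on the boundary of the axis-aligned square centred at $z$ must include a pair on opposite sides of the square, which realises $\din$ equal to the side length. The edge hypotheses $\frac{\din(p_i, p_j)}{2} \leq r$ then force $z \in \cB{r}{p} \cap \cB{r}{q} \cap \cB{r}{r}$ and so $\{p, q, r\} \in \alphar$.

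The ``Moreover'' clause is immediate: for $e = \{p, q\} \in K^D$, unwinding the definition of $\flagr$ applied to $e$ yields $e \in \flagr$ iff $\diamInfty{e} = \din(p, q) \leq 2r$, i.e.\ $\frac{\din(p, q)}{2} \leq r$. The main obstacle is the flag property at the triangle level: $\linf$-Voronoi cells are in general non-convex (Figure \ref{fig:non-conv}), so pairwise non-emptiness does not automatically yield triple non-emptiness, and the argument depends crucially on two-dimensional general position both to rule out degenerate bisectors and to force the common Voronoi vertex, when it exists, to realise exactly half the diameter.
\end{proofsketch}
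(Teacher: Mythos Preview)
Your dimension bound and the ``Moreover'' clause are fine, and your geometric lemma for $\alphar$ --- that the unique Voronoi vertex $z$ of a Delaunay triangle sits at distance exactly $\tfrac{1}{2}\diamInfty{\sigma}$ from its vertices, via the opposite-sides-of-a-square pigeonhole --- is correct and neat. The gap is in the flag property of $K^D$ itself. Your argument there (planar subdivision, trihedral vertices, ``a case analysis tracking how these three pieces meet'') invokes nothing specific to the $\linf$ metric, and the same outline applies verbatim to the \emph{Euclidean} Voronoi diagram in $\R^2$, where the conclusion is false: place $x_4$ at the origin and $x_1,x_2,x_3$ at the vertices of a small triangle around it; then $V_{x_1},V_{x_2},V_{x_3}$ pairwise meet along outward rays, yet $V_{x_1}\cap V_{x_2}\cap V_{x_3}=\emptyset$ because the circumcentre of $\{x_1,x_2,x_3\}$ lies in $V_{x_4}$. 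So some genuinely $\linf$-specific input is needed at exactly this step, and your sketch does not supply it. Your lemma for $\alphar$ then inherits the gap, since it presupposes the Voronoi vertex exists.

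The paper's proof takes a different route and supplies that input directly. It fixes the longest of the three edges, say $\{p,q\}$ with $\bar r=\tfrac{1}{2}\din(p,q)$, and works with $\A^{\bar r}=\cB{\bar r}{p}\cap\cB{\bar r}{q}$, which in $\R^2$ under general position is an axis-parallel segment of length strictly less than $2\bar r$ (Proposition~\ref{prop:delaunay-edge}). The third ball $\cB{\bar r}{q'}$ must meet this segment and cannot cover it (else $\{p,q\}$ would be non-Delaunay), so it cuts off a subsegment and yields an explicit candidate witness $w\in\A^{\bar r}\cap\partial\cB{\bar r}{q'}$. One then argues by contradiction: if some $B_{\bar r}(y)$ covers $w$, a two-case split (cover ``from above'' versus ``from below'' along the segment) together with Proposition~\ref{prop:minibox-property} forces one of the three assumed Delaunay edges to be non-Delaunay. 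This is where the $\linf$ geometry does the real work --- the one-dimensionality of $\A^{\bar r}$ and the box structure of the balls --- and it simultaneously exhibits a witness at radius exactly $\bar r$, so the flag property for $\alphar$ comes for free without a separate lemma.
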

\begin{remark}
Note that the $\linf$-Delaunay edges of $S \subseteq (\R^d, \din)$ can be found with the $O(n \log(n))$ plane-sweep algorithm of \cite{shute1991planesweep}, and used to build Alpha filtrations.
\end{remark}
% % Figure: Counterexample Delaunay flag in 3D
\begin{figure}[tb]
    \centering
    \begin{subfigure}[b]{2in}
        \centering
        \includegraphics[width=2in]{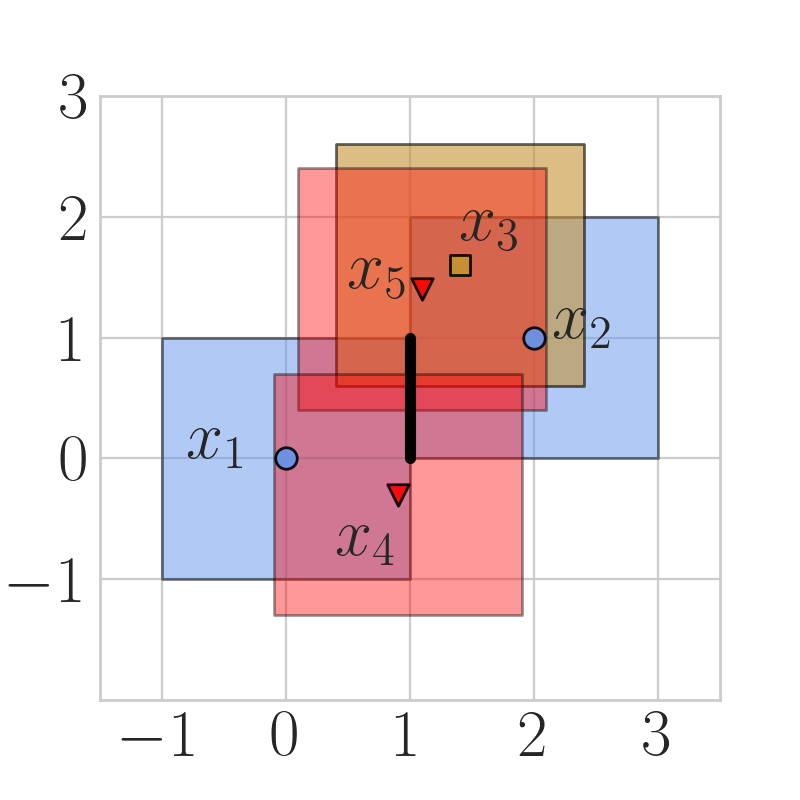}
        \caption{Projection along $x$ and $y$ axes.}
        \label{fig:proj1}
    \end{subfigure}
    \qquad
    \begin{subfigure}[b]{2in}
        \centering
        \includegraphics[width=2in]{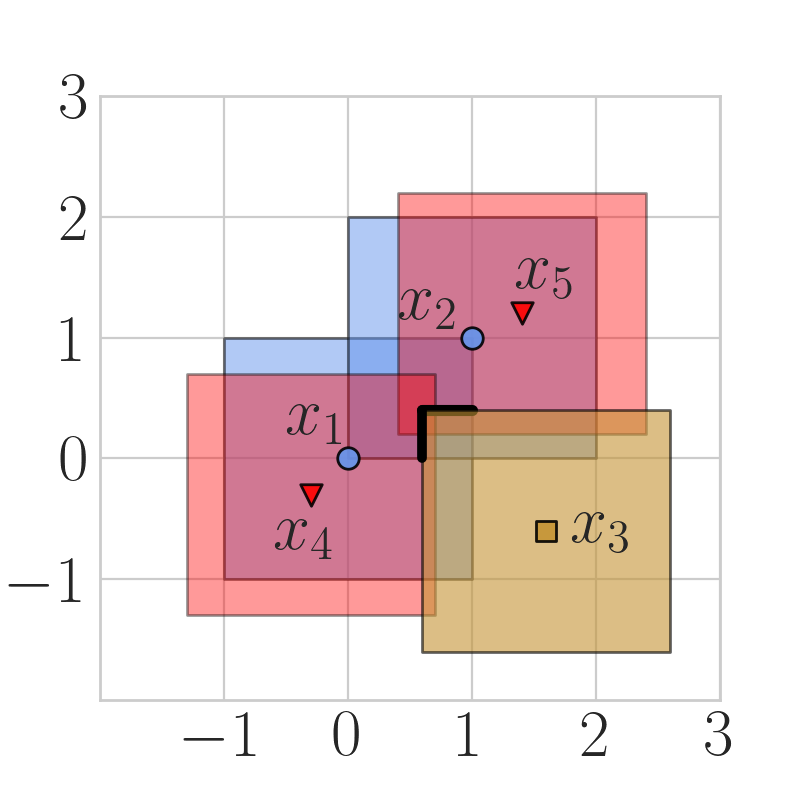}
        \caption{Projection along $y$ and $z$ axes.}
        \label{fig:proj2}
    \end{subfigure}
    \caption{Five points in $\mathbb{R}^3$ such that their $\linf$-Delaunay complex $K^D$ is not flag.}
    \label{fig:counterexample-3D}
\end{figure}
%
%
% Paragraph: Counterexample Alpha is flag in \R^3
\paragraph{Counterexample: $\alphar$ not flag in $(\R^3, \din)$}
It can be shown that in general the $\linf$-Delaunay complexes of three-dimensional points do not contain all the cliques on their edges.
Thus, $\alphar \subseteq \flagr \subseteq \cechr$ for any $r \in \R$, where $\flagr$ is the Alpha flag complex defined in Section \ref{sec:pre}.

Let $S = \{x_i\}_{i=1}^5$ be the set of five points in $(\R^3, \din)$ such that $x_1=(0,0,0)$, $x_2=(2,1,1)$, $x_3=(1.4, 1.6, -0.6)$, $x_4=(0.9, -0.3, -0.3)$, and $x_5=(1.1, 1.4, 1.2)$.
One can check that the $\linf$-Delaunay complex $K^D$ of $S$ is not a flag complex.

First, we have that $(1,0,1)$, $(0.8,0.8,0.0)$, and $(1.5, 1.5, 0.2)$ are witness points of the edges $\{x_1, x_2\}$, $\{x_1, x_3\}$, and $\{x_2, x_3\}$ respectively.
Thus, $K^D$ contains these edges by Proposition \ref{prop:delaunay-edge}.

On the other hand $\tau = \{x_1, x_2, x_3\}$ is not a triangle in $K^D$.
This follows from the fact that $A_{\tau}^{1} = \partial \cB{1}{x_1} \cap \partial \cB{1}{x_2} \cap \partial \cB{1}{x_3}$ is formed by the two line segments, plotted as thickened lines in Figure \ref{fig:counterexample-3D}, with endpoints $(1, 0.6, 0)$, $(1, 0.6, 0.4)$ and $(1, 0.6, 0.4)$, $(1, 1, 0.4)$, which are covered by $B_1(x_4) \cup B_1(x_5)$.
The $\varepsilon$-thickenings of these line segments contain $A_{\tau}^{1+\varepsilon}$ for any $\varepsilon \geq 0$, by the properties of $\varepsilon$-thickenings described in Appendix \ref{sec-app:delaunay-edges}.
In turn, the $\varepsilon$-thickenings of the two line segments are contained in $\varepsilon(B_1(x_4) \cup B_1(x_4)) = B_{1+\varepsilon}(x_4) \cup B_{1+\varepsilon}(x_5)$.
This implies that
% CHANGE: it
there does not exist a point $z \in V_{x_1} \cap V_{x_2} \cap V_{x_3}$, as this would require $A_{\tau}^{1+\varepsilon} \setminus \big( B_{1+\varepsilon}(x_4) \cup B_{1+\varepsilon}(x_5) \big)$ to be non-empty for some $\varepsilon \geq 0$.

%
% Paragraph: Counterexample equivalence
\paragraph{Counterexample: $\alphar$ and $\cechr$ not equivalent in $(\R^3, \din)$}
We conclude this section by providing a counterexample to the equivalence of Alpha and \v{C}ech complexes of three-dimensional points.
In particular, we give a configuration of eight points $S = \{ x_i \}_{i=1}^8 \subseteq \R^3$ such that their $\linf$-Delaunay complex contains the four faces of the tetrahedron $\{x_1, x_2, x_3, x_4\}$, but not the tetrahedron itself.
This way the Alpha complexes of $S$ never contain $\{x_1, x_2, x_3, x_4\}$ as a simplex, but they contain its the four faces for a big enough radius parameter.
Moreover, the $\linf$-Delaunay complex of $S$ also does not contain other tetrahedra that could possibly fill in the two-dimensional void created by the faces of $\{x_1, x_2, x_3, x_4\}$.
We list the coordinates of the points giving a counterexample in Table \ref{table:counterexample}, and plot them in Figure \ref{fig:alpha-counterexamle} by projecting along two of the three coordinate axes.
These were found by randomly sampling many sets of eight points in $\R^3$, and testing whether their Alpha and \v{C}ech persistence diagrams were equal.
The existence of such a counterexample can be thought of as a consequence of the non-convexity of $\linf$-Voronoi regions, even if one may have hoped the nerve of general Voronoi regions to be well behaved enough to prevent this from happening.

% Figure: Comparison Delaunay edges and Minibox edges
\begin{figure}[tb]
    \centering
    \begin{subfigure}[b]{2in}
        \centering
        \includegraphics[width=2in]{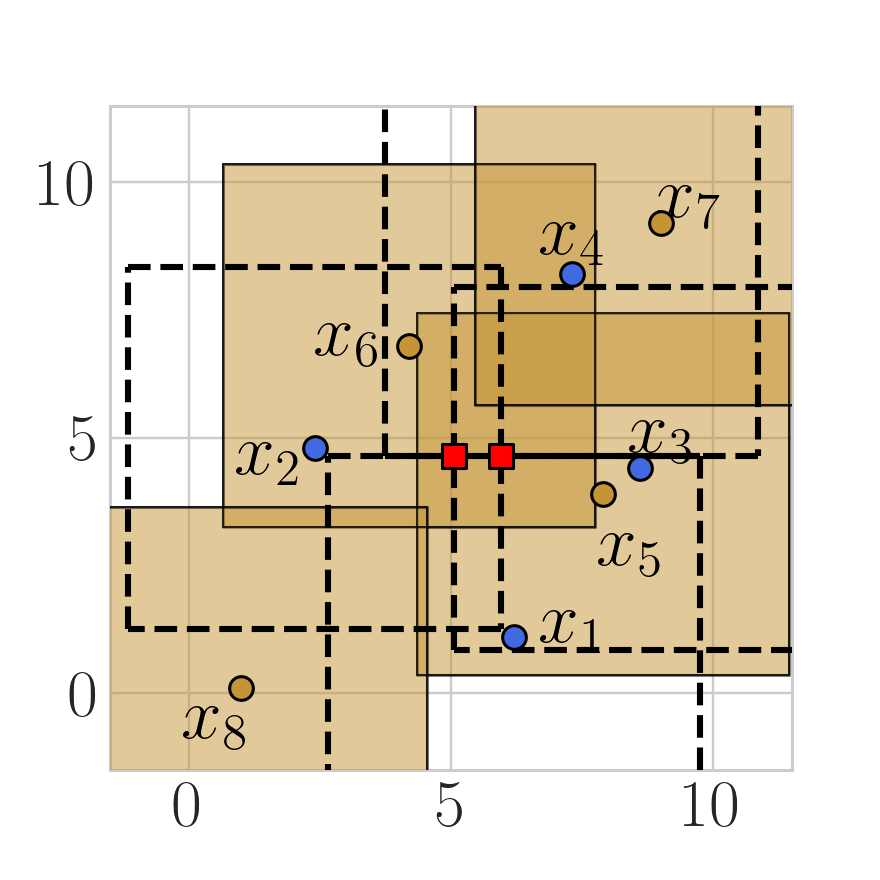}
        \caption{Projection along $x$ and $y$ axes.}
        \label{fig:alpha-count-1}
    \end{subfigure}
    \quad
    \begin{subfigure}[b]{2in}
        \centering
        \includegraphics[width=2in]{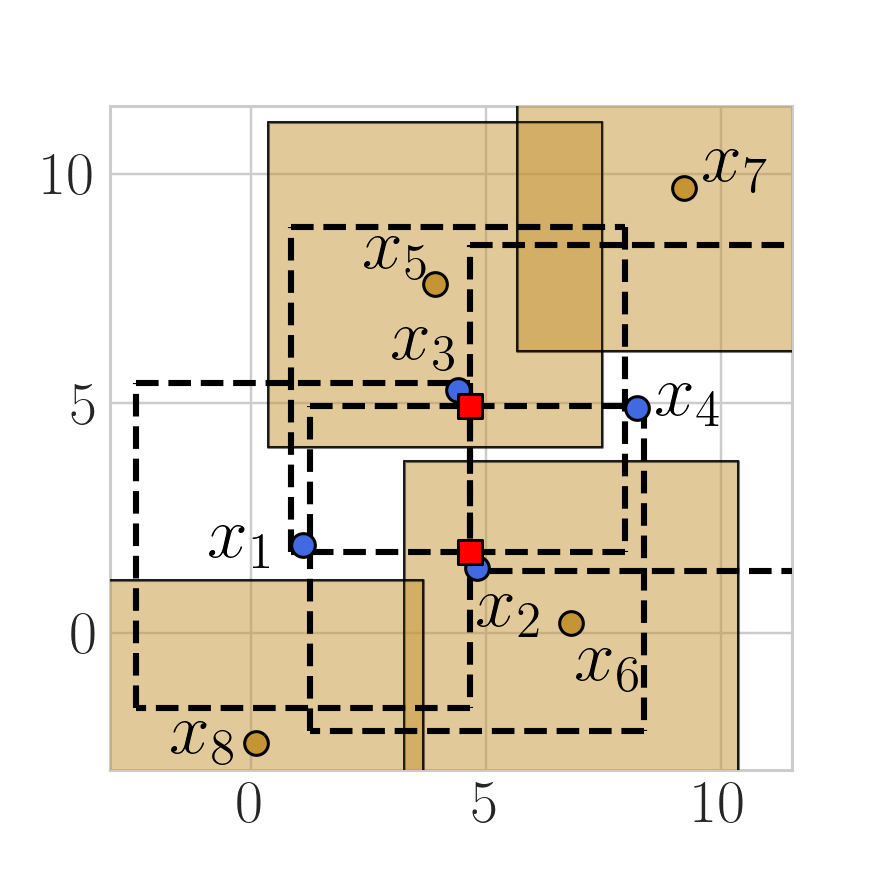}
        \caption{Projection along $y$ and $z$ axes.}
        \label{fig:alpha-count-2}
    \end{subfigure}
    \caption{Counterexample to the equivalence of Alpha and \v{C}ech persistent homology in $\ell_{\infty}$ metric. The two circumcenters of the tetrahedron $\{x_1, x_2, x_3, x_4\}$ are the red square markers. The boundaries of cubes centered in the vertices of $\{x_1, x_2, x_3, x_4\}$ are shown as dashed lines.}
    \label{fig:alpha-counterexamle}
\end{figure}
%
% % Table: counterexample equivalence Alpha and Cech
\begin{table}[tb]
 \caption{Coordinates of points $S \subseteq (\R^3, \din)$ giving a counterexample to the equivalence of Alpha and \v{C}ech filtrations.}
 \label{table:counterexample}
  \centering
  \begin{tabular}{l | c | c | c |}
        \cline{2-4}
        & x  &  y  & z \\
    \hline
    \multicolumn{1}{|l|}{$x_1$} 
        &  6.2 & 1.1 &  1.9  \\
    \hline
    \multicolumn{1}{|l|}{$x_2$}  
        &  2.4 & 4.8 & 1.4  \\
    \hline
    \multicolumn{1}{|l|}{$x_3$} 
        &  8.6 & 4.4 & 5.3  \\
    \hline
    \multicolumn{1}{|l|}{$x_4$}  
      &  7.3 & 8.2 & 4.9  \\
    \hline
    \multicolumn{1}{|l|}{$x_5$}  
        &  7.9 & 3.9 & 7.6  \\
    \hline
    \multicolumn{1}{|l|}{$x_6$}  
        &  4.2 & 6.8 & 0.2  \\
    \hline
    \multicolumn{1}{|l|}{$x_7$}  
        &  9.0 &  9.2 & 9.7  \\
    \hline
    \multicolumn{1}{|l|}{$x_8$}  
        & 1.0  & 0.1 & -2.4 \\
    \hline
    \end{tabular}
\end{table}
One can check that there are six tetrahedra belonging to the $\linf$-Delaunay complex $K^D$ of $S$: $\{x_1, x_2, x_3, x_5\}$, $\{x_1, x_2, x_3, x_6\}$,
$\{x_1, x_2, x_4, x_5\}$, $\{x_1, x_3, x_4, x_6\}$,
$\{x_2, x_3, x_4, x_5\}$, and $\{x_2, x_3, x_4, x_6\}$.
This can be done by finding the circumcenters of any four given points, and checking that the circumspheres of these (which in this case are cubes) do not contain any of the other points.
It is important to note that in $\linf$ metric four three-dimensional points might have two distinct circumcenters. 
For instance this is the case for $\{x_1, x_2, x_3, x_4\}$, the circumcenters of which are represented as red square markers in Figures \ref{fig:alpha-count-1} and \ref{fig:alpha-count-2}, having coordinates $w_1 = (5.95, 4.65, 1.75)$ and $w_2 = (5.05, 4.65, 4.95)$\footnote{Note that $w_1$ and $w_2$ are two distinct points, of which the bisector of $\{x_1, x_2, x_3, x_4\}$ is a subset. Thus, it is possible to have a non-contractible bisector, which invalidates the equivalence proof strategy making use of the Nerve Theorem.}.
On the other hand, in Euclidean metric four affinely independent three-dimensional points have exactly one circumcenter.
Moreover, $w_1$ and $w_2$ are not witnesses of $\{x_1, x_2, x_3, x_4\}$, because they are closer to $x_5$ and $x_6$ than to the vertices of this tetrahedron.
Thus, $\{x_1, x_2, x_3, x_4 \} \not\in K^D$.
Regarding the faces of $\{x_1, x_2, x_3, x_4\}$, we have that:
\begin{itemize}
    \item $(5.5, 4.2, 3.9)$ is a witness of $\{x_1, x_2, x_3\}$ at distance $3.1$ from $x_1$, $x_2$, and $x_3$.
    \item $(4.05, 4.65, 4.95)$ is a witness of $\{x_1, x_2, x_4\}$ at distance $3.55$ from $x_1$, $x_2$, and $x_4$.
    \item $(8.75, 4.65, 1.75)$ is a witness of $\{x_1, x_3, x_4\}$ at distance $3.55$ from $x_1$, $x_3$, and $x_4$.
    \item $(5.5, 5.1, 3.9)$ is a witness point of $\{x_2, x_3, x_4\}$ at distance $3.1$ from $x_2$, $x_3$, and $x_4$.
\end{itemize}
The tetrahedra belonging to the $\linf$-Delaunay complex of $S$ (listed in the above discussion) do not create a boundary for the degree-two homology class created by adding $\{x_1, x_2, x_3\}$, $\{x_1, x_2, x_4\}$, $\{x_1, x_3, x_4\}$, and $\{x_2, x_3, x_4\}$ into $\alphar$, for $r > 0$ big enough.
Thus, the persistence diagram in homological degree two of the Alpha filtration of $S$ has a point at infinity, i.e. an homology class that never dies.
On the other hand, the persistence diagrams in homological degree two of the \v{C}ech filtration of $S$ cannot have such a point, because \v{C}ech complexes have trivial homology for a big enough radius.

%=============================
% Alpha clique complexes
\section{Equivalence of Alpha Flag Complexes}
\label{sec:alpha-flag}
We prove that Alpha flag and \v{C}ech complexes of a finite set of points $S \subseteq (\R^d, \din)$ produce the same persistence diagrams in homological degrees zero and one.
Thus, the complexes $\flagr$ can be used to limit the size of filtrations used in the computation of \v{C}ech persistence diagrams.
However, Alpha flag complexes, similarly to Alpha complexes, cannot be used for the computation of persistence in homological degree two or higher.
One advantage over Alpha complexes is that it is only necessary to find the $\linf$-Delaunay edges on $S$, rather than the full $\linf$-Delaunay complex.

In the proof of the theorem presented in this section, we make use of the following two results.
We again provide proof sketches for readability, with full proofs in Appendix \ref{sec-app:proof-main}. 
We omit referencing the field $\mathbb{F}$ when referring to the homology of complexes.
Finally, a \emph{non-Delaunay} edge is a pair of points $\{p, q\} \subseteq S$ if it does not belong to the $\linf$-Delaunay complex of $S$.
Importantly, we do not make use of any general position assumption on $S$.

Our proof is based on the idea of decomposing \v{C}ech complexes into filtrations adding a single edge, and the cliques it forms, at each filtration step.
To deal with the problem of multiple edges possibly having the same length $2\bar{r}$ we introduce the following definitions.
%
% Definitions: Single edge-length range and edge-by-edge filtration
\begin{definition}
\label{def:edge-by-edge-filtration}
Let $S$ be a finite set of points in $(\R^d, \din)$.
A \emph{single edge-length range} of \v{C}ech complexes of $S$ is an open interval $(r, r + \varepsilon) \subseteq \R$ such that all the edges not in $\cechr$ and contained in $\cechre$ have the same length $2\bar{r}$.
Given a single edge-length range $(r, r+\varepsilon)$, the  \emph{\v{C}ech edge-by-edge filtration} of $S$ on this range is 
$$\cechr 
=
\KC{0}
\subseteq 
\KC{1}
\subseteq 
\ldots 
\subseteq
\KC{n_i}
=
\cechre,$$
where $\KC{i}$ contains exactly one edge not in $\KC{i-1}$, together with the cliques containing this edge, for each $1 \leq i \leq n_i$.
The corresponding \emph{Alpha flag edge-by-edge filtration} of $S$ on the same range is
$$\flagr
=
\KAF{0}
\subseteq 
\KAF{1}
\subseteq 
\ldots 
\subseteq
\KAF{n_i}
=
\flagre,$$
where $\KAF{i} = \KC{i} \cap \flagre$ for each $1 \leq i \leq n_i$.
\end{definition}
%
% Lemma: if e only added in KAF_i, then e only added in KC_i
\begin{lemma}
\label{lemma:proof-main-1-add-one-edge}
Let $(r, r+\varepsilon)$ be a single edge-length range of \v{C}ech complexes of $S \subseteq (\R^d, \din)$, and $\{\KAF{i}\}_{i=0}^{n_i}$, $\{\KC{i}\}_{i=0}^{n_i}$ the Alpha flag and \v{C}ech edge-by-edge filtrations on this range. 
If going from $\KAF{i-1}$ to $\KAF{i}$ a $\linf$-Delaunay edge is the only simplex added in $\KAF{i}$, then this is also the only simplex added going from $\KC{i-1}$ to $\KC{i}$.
\end{lemma}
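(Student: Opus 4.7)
The plan is to unpack the definition $\KAF{i} = \KC{i} \cap \flagre$ so that the hypothesis and conclusion become parallel combinatorial statements, reduce the problem to the triangle case, and then establish a single geometric existence claim via an extremal argument anchored at a witness point of $e_i$.

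I would begin by rewriting the new simplices on both sides. From $\KAF{i} = \KC{i} \cap \flagre$ one reads off $\KAF{i} \setminus \KAF{i-1} = \flagre \cap (\KC{i} \setminus \KC{i-1})$. By the flag property of $\linf$-\v{C}ech complexes (Corollary \ref{cor:cech-is-flag}), $\KC{i} \setminus \KC{i-1}$ is exactly the collection of cliques $\sigma \supseteq e_i$ with $\diamInfty{\sigma} \leq 2\bar{r}$, while $\KAF{i} \setminus \KAF{i-1}$ picks out those among them whose every edge is $\linf$-Delaunay. In these terms, the hypothesis is that no clique strictly containing $e_i$ of diameter at most $2\bar{r}$ has all its edges Delaunay, and the goal is to show that no such clique exists at all.

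I would then pass to the contrapositive and reduce to triangles. If some $\sigma \supsetneq e_i$ lies in $\KC{i}$, then for any $z \in \sigma \setminus e_i$ the face $\{p_i, q_i, z\}$ also lies in $\KC{i} \setminus \KC{i-1}$, and the absence of any new edges other than $e_i$ between steps $i-1$ and $i$ forces $\din(p_i, z), \din(q_i, z) \leq 2\bar{r}$. Thus it suffices to produce a single $z' \in S \setminus e_i$ with $\din(p_i, z'), \din(q_i, z') \leq 2\bar{r}$ and both $\{p_i, z'\}$, $\{q_i, z'\}$ $\linf$-Delaunay; flagness of $\flagre$ then places $\{p_i, q_i, z'\}$ inside $\KAF{i} \setminus \KAF{i-1}$, contradicting the hypothesis.

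To exhibit $z'$, define $T = \{z \in S \setminus e_i : \din(p_i, z), \din(q_i, z) \leq 2\bar{r}\}$, non-empty by assumption, and fix a witness $w$ of $e_i$ (Proposition \ref{prop:delaunay-edge}). Given $z \in T$, if $\{p_i, z\}$ fails to be Delaunay, Proposition \ref{prop:delaunay-edge} supplies a blocker $y \in S$ lying inside $B_{\bar{r}'}(w')$ for some $w' \in \cB{\bar{r}'}{p_i} \cap \cB{\bar{r}'}{z}$ with $\bar{r}' = \din(p_i, z)/2$, hence $\din(p_i, y) < \din(p_i, z)$. I would then show $y \in T$ by combining the constraint $\din(w, y) \geq \bar{r}$ (which holds because $e_i$ is Delaunay and $y \neq p_i, q_i$) with the axis-parallel box structure of $\linf$ balls (Proposition \ref{prop:intersection-boxes}): a coordinate-wise comparison of the boxes $\cB{\bar{r}}{p_i}$, $\cB{\bar{r}}{q_i}$, and $\cB{\bar{r}'}{p_i} \cap \cB{\bar{r}'}{z}$ with the position of $w$ on $\partial \cB{\bar{r}}{p_i} \cap \partial \cB{\bar{r}}{q_i}$ delivers the bound $\din(q_i, y) \leq 2\bar{r}$. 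Because $S$ is finite and $\din(p_i, \cdot)$ strictly decreases with each replacement, the iteration terminates at a $z'$ for which $\{p_i, z'\}$ is Delaunay; a symmetric iteration in the role of $q_i$, or instead extremising $\max(\din(p_i, \cdot), \din(q_i, \cdot))$ over $T$ from the start, secures Delaunayness of $\{q_i, z'\}$ simultaneously.

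The main obstacle is precisely the containment $y \in T$. The bare triangle inequality only yields $\din(q_i, y) \leq \din(q_i, z) + \din(z, y) \leq 4\bar{r}$, so the sharper bound $\din(q_i, y) \leq 2\bar{r}$ must be squeezed out of the specific position of $w$ on $\partial \cB{\bar{r}}{p_i} \cap \partial \cB{\bar{r}}{q_i}$ together with $\din(w, y) \geq \bar{r}$. Executing this coordinate-wise in arbitrary dimension, while keeping both Delaunayness conditions simultaneously under control across the iteration, is where the bulk of the technical work lies; as a sanity check, in dimension one the set $T$ is forced to be empty by Delaunayness of $e_i$, so the lemma holds vacuously there.
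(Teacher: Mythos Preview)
Your reduction to triangles and the contrapositive framing are correct, but the iterative replacement step has a genuine gap that you yourself flag and do not close. The inequality $\din(w,y)\geq \bar r$ coming from the witness $w$ of $e_i$ is a \emph{lower} bound on $\din(w,y)$; combined with $\din(q_i,w)=\bar r$ it yields nothing useful about $\din(q_i,y)$ from above, and a direct coordinate-wise comparison fails too: with $p=(0,0)$, $q=(2,0)$, $z=(0,2)$ one has $\bar r=\bar r'=1$, $A_{\{p,z\}}^{1}=[-1,1]\times\{1\}$, and a blocker $y$ near $w'=(-1,1)$ can sit at, say, $(-1.5,1)$ with $\din(q,y)=3.5>2\bar r$ while still satisfying $\din(w,y)\geq\bar r$ for any witness $w\in\{1\}\times[-1,1]$. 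So the replacement $z\mapsto y$ does not in general preserve membership in $T$, and the descent on $\din(p_i,\cdot)$ alone (or on $\max(\din(p_i,\cdot),\din(q_i,\cdot))$) does not terminate inside $T$ without a further idea. There is also a secondary issue: even if you land on a $z'\in T$ with both $\{p_i,z'\}$ and $\{q_i,z'\}$ Delaunay, you need those two edges to already lie in $\KC{i-1}$ (not merely to have length $\leq 2\bar r$) for the triangle to appear in $\KC{i}$ and hence in $\KAF{i}$; your iteration does not track this.

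The paper avoids the replacement mechanism entirely. It works on the box $\A^{\bar r}=\cB{\bar r}{p}\cap\cB{\bar r}{q}$ directly: defining $\openY=\{y\in S:\din(y,p)<2\bar r,\ \din(y,q)<2\bar r\}$, every $\cB{\bar r}{y}$ with $y\in\openY$ meets $\A^{\bar r}$, yet by Delaunayness of $e$ the open balls $B_{\bar r}(y)$ do not cover $\A^{\bar r}$. Hence the boundary of $\overline{\bigcup_{y\in\openY}B_{\bar r}(y)}$ meets $\A^{\bar r}$ at some $z$, and since that boundary is contained in $\bigcup_{y\in\openY}\partial\cB{\bar r}{y}$, the point $z$ lies on $\partial\cB{\bar r}{\hat y}$ for some $\hat y\in\openY$ and is a witness of the \emph{triangle} $\{p,q,\hat y\}$. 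This single geometric step produces a Delaunay triangle with both side edges strictly shorter than $2\bar r$, hence already in $\KAF{i-1}$, giving the contradiction without any iteration; the remaining boundary case $\din(\cdot,\cdot)=2\bar r$ is then dispatched by observing that $\openY=\emptyset$ forces any such triangle to have a witness on $\A^{\bar r}\cap\partial\cB{\bar r}{y'}$.
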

\begin{proofsketch}
Let $e=\{p, q\}$ be the $\linf$-Delaunay edge added in $\KAF{i}$. 
We define $\bar{r} = \frac{\din(p, q)}{2}$, so that $r < \bar{r} < r + \varepsilon$, and $\openY = \{ y \in S \ \vert \  \din(y, p) < 2 \bar{r} \textrm{ and  } \din(y, q) < 2 \bar{r} \}$.

It is possible to show that $\openY$ has to be empty, otherwise at least a triangle would be added in $\KAF{i}$ together with $e$.
Finally, a contradiction is obtained with $e$ adding a higher-dimensional simplex in $\KC{i}$, because this would require $\openY$ to be non-empty or the same simplex to be added in $\KAF{i}$.
\end{proofsketch}
%
% Lemma: removing non-Delaunay
\begin{lemma}
\label{lemma:removing-non-delaunay-edge}
Let $(r, r+\varepsilon)$ be a single edge-length range of \v{C}ech complexes of $S \subseteq (\R^d, \din)$, and $\{\KC{i}\}_{i=0}^{n_i}$ the \v{C}ech edge-by-edge filtration on this range. 
If the edge $e = \{p, q\}$ added going from $\KC{i-1}$ to $\KC{i}$ is non-Delaunay for $1 \leq i \leq n_i$, then $H_k(\KC{i} \setminus \st(e)) = H_k(\KC{i-1})$ and $H_k(\KC{i})$ are isomorphic for $k=0,1$.
\end{lemma}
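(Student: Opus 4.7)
The plan is to apply Mayer--Vietoris to the decomposition $\KC{i} = \KC{i-1} \cup \cl(\st(e, \KC{i}))$, whose intersection is $\lk(e, \KC{i})$ and whose second term is the cone $e \ast \lk(e, \KC{i})$, hence contractible. Reading off the resulting long exact sequence, the isomorphisms $H_k(\KC{i-1}) \cong H_k(\KC{i})$ for $k = 0, 1$ will follow once I show that $\lk(e, \KC{i})$ is connected and satisfies $H_1 = 0$: connectedness forces the boundary map $H_1(\KC{i}) \to H_0(\lk(e, \KC{i}))$ to vanish (yielding surjectivity of $H_1(\KC{i-1}) \to H_1(\KC{i})$ and the $H_0$ isomorphism), while $H_1(\lk(e, \KC{i})) = 0$ makes that same map injective.

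The geometric core is the Nerve Theorem. By the non-Delaunay hypothesis and Proposition~\ref{prop:delaunay-edge}, $\A^{\bar{r}} \subseteq \bigcup_{y \in S \setminus e} B_{\bar{r}}(y)$. A short check using Proposition~\ref{prop:intersection-boxes}(ii) shows that any $y$ contributing to this cover must satisfy $\din(y, p), \din(y, q) < 2\bar{r}$, so the relevant index set is exactly $\openY$. I will then apply the Nerve Theorem to the closed cover $\{\cB{\bar{r}}{y} \cap \A^{\bar{r}}\}_{y \in \openY}$ of $\A^{\bar{r}}$: its finite non-empty intersections are themselves boxes by Proposition~\ref{prop:intersection-boxes}(i) and hence contractible, so the cover is good, and since $\A^{\bar{r}}$ is itself a (possibly degenerate) box it is contractible, making the nerve $N$ contractible as well.

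It remains to transfer triviality of $H_0$ and $H_1$ from $N$ to $\lk(e, \KC{i})$. For the inclusion, any $\sigma \in N$ has all pairwise $\linf$-distances strictly less than $2\bar{r}$, so every edge of $\sigma \cup e$ lies in $\KC{0} \subseteq \KC{i-1}$, and flagness of \v{C}ech complexes (Corollary~\ref{cor:cech-is-flag}) promotes this to $\sigma \cup e \in \KC{i}$, whence $N \subseteq \lk(e, \KC{i})$. For connectedness, a vertex $y$ of $\lk(e, \KC{i})$ not in $\openY$ necessarily has $\din(y, p) = 2\bar{r}$ or $\din(y, q) = 2\bar{r}$; picking $z \in \cB{\bar{r}}{y} \cap \A^{\bar{r}}$, applying the cover to find $y' \in \openY$ with $\din(z, y') < \bar{r}$, and using the triangle inequality yields $\din(y, y') < 2\bar{r}$, so the edge $\{y, y'\}$ lies in $\KC{0}$ and joins $y$ to $N$ inside $\lk(e, \KC{i})$. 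An analogous construction applied to the vertices appearing in an arbitrary $1$-cycle of $\lk(e, \KC{i})$ will homotope it to a $1$-cycle supported in $N$, which is a boundary there.

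The main obstacle I expect is this last $H_1$ reduction. Because the edge-by-edge filtration fixes an arbitrary ordering of length-$2\bar{r}$ edges inside the single edge-length range $(r, r + \varepsilon)$, $\lk(e, \KC{i})$ can genuinely differ from $\lk(e, \cechre)$ by missing some length-$2\bar{r}$ edges. Turning the informal ``push every cycle into $N$'' strategy into a rigorous filling of $1$-cycles by $2$-chains in $\lk(e, \KC{i})$ will require carefully exploiting the fact that the auxiliary edges produced above always have $\linf$-length strictly less than $2\bar{r}$, so that all the $2$-simplices invoked lie in $\KC{0} \subseteq \KC{i-1}$ by flagness, independently of the edge ordering.
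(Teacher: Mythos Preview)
Your Mayer--Vietoris decomposition is the same one the paper uses, but your identification of the intersection is wrong. With $A=\KC{i-1}$ and $B=\cl(\st(e,\KC{i}))$ you have $A\cap B=\cl(\st(e))\setminus\st(e)$, and this complex contains $p$ and $q$ (neither lies in $\st(e)$). In fact $\cl(\st(e))\setminus\st(e)$ is exactly the simplicial suspension $\Sigma\,\lk(e,\KC{i})$: every simplex in it is $e'\cup\tau$ with $e'\subsetneq e$ and $\tau\in\lk(e)$, i.e.\ the join $\partial e * \lk(e)$. This is good news for you. By the suspension isomorphism $\tilde H_k(\Sigma\,\lk(e))\cong\tilde H_{k-1}(\lk(e))$, the vanishing you need for $k=0,1$ in the Mayer--Vietoris sequence reduces to $\lk(e,\KC{i})$ being \emph{nonempty and connected}. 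The ``main obstacle'' you flagged, namely $H_1(\lk(e,\KC{i}))=0$, is simply not required; your connectedness argument (which is correct) already finishes the proof.

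There is one genuine gap in the connectedness argument as written. You take the nerve $N$ of the \emph{closed} cover $\{\cB{\bar r}{y}\cap\A^{\bar r}\}_{y\in\openY}$ and then assert that every $\sigma\in N$ has all pairwise distances \emph{strictly} less than $2\bar r$. That is false: two closed balls $\cB{\bar r}{y_1},\cB{\bar r}{y_2}$ can meet with $\din(y_1,y_2)=2\bar r$, producing an edge of $N$ of length exactly $2\bar r$ which, depending on the arbitrary edge ordering inside the single edge-length range, may fail to lie in $\KC{i}$ and hence in $\lk(e,\KC{i})$. The fix is immediate: use the \emph{open} cover $\{B_{\bar r}(y)\cap\A^{\bar r}\}_{y\in\openY}$ instead. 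Proposition~\ref{prop:delaunay-edge} already gives $\A^{\bar r}\subseteq\bigcup_{y\in\openY}B_{\bar r}(y)$, the finite open cover is still good (intersections are open boxes), and now every edge of the nerve has length $<2\bar r$, so the inclusion into $\lk(e,\KC{i})$ goes through exactly as you wrote. The paper handles this same issue differently, by perturbing $\openY$ to a set $\openX$ with no pairwise distances equal to $2\bar r$, but your open-cover fix is cleaner.

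For comparison, the paper's proof works directly with $\cl(\st(e))\setminus\st(e)$ and never names it as a suspension: it builds this complex up from a nerve-type subcomplex $K_0$ on $\openY$ by first adjoining the two cone points $p,q$ and then adding the remaining vertices and edges one at a time, checking at each step that a collapse keeps reduced homology trivial in degrees $0$ and $1$. Your route, once the intersection is corrected and the open cover is used, replaces that filtration-and-collapse bookkeeping by a single appeal to the suspension isomorphism and the Nerve Theorem, and is shorter.
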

\begin{proofsketch}
Let $A = \cl(\st(e)) \subseteq \KC{i}$ and $B=\KC{i} \setminus \st(e)$, so that $A \cap B = \cl(\st(e)) \setminus \st(e)$.
Note that $\KC{i} \setminus \st(e) = \KC{i-1}$ by definition of \v{C}ech edge-by-edge filtration.
Applying the reduced Mayer-Vietoris sequence with these $A$ and $B$, we obtain
\begin{align*}
	\cdots \rightarrow 
	\tilde{H}_k(\cl(\st(e)) \setminus \st(e)) 
	\rightarrow 
	\tilde{H}_k(\KC{i} \setminus \st(e)) & 
	\rightarrow
	\tilde{H}_k(\KC{i}) 
	\rightarrow
	\tilde{H}_{k-1}(\cl(\st(e)) \setminus \st(e))
	\rightarrow \cdots
\end{align*}
Thus, it is sufficient to show that $\tilde{H}_k(\cl(\st(e)) \setminus \st(e))$ is trivial for $k=0,1$.

It is possible to define a complex $K_0$ on $\openY = \{y \in S \ \vert \ \din(y, p) < 2\bar{r} \textrm{ and } \din(y, q) < 2\bar{r} \}$, where $\bar{r} = \frac{\din(p,q)}{2}$, such that $K_0 \subseteq 
\cl(\st(e)) \setminus \st(e) 
\subseteq 
\KC{i}$.
The proof follows by showing that $K_0$ has trivial homology, and proving that $\cl(\st(e)) \setminus \st(e)$ has the same homology of $K_0$ in degrees zero and one.
\end{proofsketch}
%
% Theorem: Adding one edge changes homology in the same way
%The complete proofs of these lemmas can be found in Appendix~\ref{sec-app:proof-main}.
\begin{theorem}
\label{thm:main}
Let $S$ be a finite set of points in $(\mathbb{R}^d, \din)$. 
Given a single edge-length range $(r, r + \varepsilon)$ of \v{C}ech complexes of $S$, if $H_k(\flagr) \rightarrow H_k(\cechr)$ is an isomorphism, then $H_k(\flagre) \rightarrow H_k(\cechre)$ is also an isomorphism for $k=0,1$.
\end{theorem}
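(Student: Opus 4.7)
The plan is to proceed by induction on the edge-by-edge filtrations $\{\KAF{i}\}_{i=0}^{n_i}$ and $\{\KC{i}\}_{i=0}^{n_i}$ of $(r, r+\varepsilon)$ from Definition~\ref{def:edge-by-edge-filtration}, showing at each step that the inclusion $\KAF{i}\hookrightarrow\KC{i}$ induces isomorphisms on $H_k$ for $k=0,1$. The base case $i=0$ is exactly the hypothesis $H_k(\flagr)\cong H_k(\cechr)$, and the case $i=n_i$ gives the theorem's conclusion $H_k(\flagre)\cong H_k(\cechre)$.

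At the inductive step, let $e_i=\{p,q\}$ be the edge added and set $\bar{r}=\din(p,q)/2$. First suppose $e_i$ is non-Delaunay. Then $e_i\notin\flagre$, and because every simplex of $\KC{i}\setminus\KC{i-1}$ contains $e_i$, we get $\KAF{i}=\KAF{i-1}$. Lemma~\ref{lemma:removing-non-delaunay-edge} gives that $\KC{i-1}\hookrightarrow\KC{i}$ is an isomorphism on $H_0$ and $H_1$, so $\KAF{i}=\KAF{i-1}\hookrightarrow\KC{i}$ factors as a composition of two isomorphisms by the inductive hypothesis, completing this case.

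Now suppose $e_i$ is Delaunay, so that $e_i$ lies in both $\KAF{i}$ and $\KC{i}$. The plan is to apply reduced Mayer--Vietoris to the decompositions $\KC{i}=\KC{i-1}\cup\cl(\st_{\KC{i}}(e_i))$ and $\KAF{i}=\KAF{i-1}\cup\cl(\st_{\KAF{i}}(e_i))$, then compare the two resulting long exact sequences via the inclusion $\KAF{\bullet}\hookrightarrow\KC{\bullet}$. Both closed stars are contractible, being simplicial joins of the $1$-simplex $e_i$ with the respective links of $e_i$, and therefore contribute trivially to reduced homology. The corresponding intersections $\cl(\st(e_i))\setminus\st(e_i)$ are unreduced suspensions of the two links, so $\tilde H_k$ of each intersection equals $\tilde H_{k-1}$ of the corresponding link. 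Combined with the inductive isomorphism $H_k(\KAF{i-1})\cong H_k(\KC{i-1})$ for $k=0,1$, the Five Lemma applied to the commutative ladder reduces the problem to showing that the inclusion $\lk_{\KAF{i}}(e_i)\hookrightarrow\lk_{\KC{i}}(e_i)$ induces isomorphisms on $\tilde H_{-1}$ (equivalently, the two links are both empty or both nonempty) and on $\tilde H_0$ (they have matching sets of connected components).

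The main obstacle is this link-level comparison. Setting $\openY=\{y\in S:\din(y,p)<2\bar r\text{ and }\din(y,q)<2\bar r\}$, the case $\openY=\emptyset$ is dispatched by Lemma~\ref{lemma:proof-main-1-add-one-edge}: only $e_i$ is added to each filtration, both links are empty, and the Mayer--Vietoris diagram collapses to the trivial comparison. When $\openY\neq\emptyset$, every vertex $u\in\openY$ lies in $\lk_{\KC{i}}(e_i)$, while $\lk_{\KAF{i}}(e_i)$ contains only those $u$ for which both $\{p,u\}$ and $\{q,u\}$ are Delaunay. The plan is to argue, using the Delaunay property of $e_i$, the witness characterization of Proposition~\ref{prop:delaunay-edge}, and the inductive hypothesis applied to $\KAF{i-1}\hookrightarrow\KC{i-1}$, that each connected component of $\lk_{\KC{i}}(e_i)$ already contains a vertex Delaunay-adjacent to both $p$ and $q$; this forces the two links to have matching $\pi_0$ and closes the induction.
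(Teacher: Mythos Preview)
Your Mayer--Vietoris ladder plus Five Lemma approach is a genuine alternative to the paper's proof. The paper handles the Delaunay case by refining $\KC{i-1}\subseteq\KC{i}$ into a subfiltration $\LC{0}\subseteq\cdots\subseteq\LC{n_j}$, first adding all Alpha-flag triangles and then the remaining triangles one at a time in a carefully chosen order, so that each new triangle $\tauC_j=\{p,q,\yC_j\}$ comes with a tetrahedron $\{p,q,\yC_j,\yC'\}$ already present, hence kills no $H_1$ class. Your reduction via $\cl(\st(e))\setminus\st(e)\cong\Sigma\lk(e)$ is cleaner structurally and isolates exactly what must be checked: surjectivity of $\tilde H_0(\lk_{\KAF{i}}(e_i))\to\tilde H_0(\lk_{\KC{i}}(e_i))$ and the $\tilde H_{-1}$ comparison.

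The genuine gap is exactly where you write ``the plan is to argue''. The claim that every component of $\lk_{\KC{i}}(e_i)$ meets $\lk_{\KAF{i}}(e_i)$ is true, and is essentially what the paper's Subcase~1.2 proves, but it requires real geometric work: one uses that $e_i$ is Delaunay, so $\A^{\bar r}$ is not covered by $\bigcup_y B_{\bar r}(y)$, together with a boundary-of-union argument to show that any batch of non-AF link vertices whose closed balls are disjoint (on $\A^{\bar r}$) from the already-processed ones would yield a witness point on some $\partial\cB{\bar r}{\yC_j}\cap\A^{\bar r}$, forcing that triangle to be Delaunay after all. The inductive hypothesis on $H_k(\KAF{i-1})\cong H_k(\KC{i-1})$ does not help with this step; the link claim is purely local.

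A smaller imprecision: splitting on $\openY=\emptyset$ does not match Lemma~\ref{lemma:proof-main-1-add-one-edge}, whose hypothesis is that $\lk_{\KAF{i}}(e_i)=\emptyset$, not that $\openY=\emptyset$. Indeed $\openY=\emptyset$ does not force both links empty, since vertices at distance exactly $2\bar r$ from $p$ or $q$ may sit in the link via edges added in earlier steps of the edge-by-edge filtration. What you actually need for the $\tilde H_{-1}$ comparison is precisely Lemma~\ref{lemma:proof-main-1-add-one-edge} as stated: $\lk_{\KAF{i}}(e_i)=\emptyset\Rightarrow\lk_{\KC{i}}(e_i)=\emptyset$.
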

\begin{proof}
Given the Alpha flag and \v{C}ech edge-by-edge filtrations of $S$ on $(r, r + \varepsilon)$, we prove that if $H_k(\KAF{i-1}) \rightarrow H_k(\KC{i-1})$ is an isomorphism, then $H_k(\KAF{i}) \rightarrow H_k(\KC{i})$ is also an isomorphism for $k=0,1$ and each $1 \leq i \leq n_i$.
The result follows by chaining these isomorphisms.
We write $e=\{p, q\}$ for the only edge in $\KC{i}$ not in $\KC{i-1}$, and define $\bar{r} = \frac{\din(p,q)}{2}$.
The case in which $e$ is a $\linf$-Delaunay edge and the case in which it is not are treated separately.
The idea is to show that in the former case $H_k(\KAF{i})$ and $H_k(\KC{i})$ change in the same way for $k=0,1$, while in the latter case there are no changes in homological degrees zero and one.

\medskip 

\noindent
\textsc{CASE 1:} $e$ is $\linf$-Delaunay 
\\
We further subdivide this case in two subcases.
\begin{enumerate}
    \item[1.] The edge $e$ is the only simplex added going from $\KAF{i-1}$ to $\KAF{i}$.
    \item[2.] The edge $e$ and other simplices, which are cliques containing $e$, are added going from $\KAF{i-1}$ to $\KAF{i}$.
\end{enumerate}

\noindent
\textsc{Subcase 1.1.} By Lemma \ref{lemma:proof-main-1-add-one-edge} $e$ is the only simplex added going from $\KC{i-1}$ to $\KC{i}$.
Then, $e$ either deletes the same connected component or creates the same degree-one homology class in $\KAF{i}$ and $\KC{i}$, because $\KAF{i-1}$ and $\KC{i-1}$ have the same vertices and $H_k(\KAF{i-1}) \rightarrow H_k(\KC{i-1})$ is an isomorphism for $k=0,1$.
Thus, $H_k(\KAF{i}) \rightarrow H_k(\KC{i})$ induced by the inclusion $\KAF{i} \subseteq \KC{i}$ is also an isomorphism for $k=0,1$.

\smallskip 

\noindent
\textsc{Subcase 1.2.} Apart from $e=\{p, q\}$, the complexes $\KAF{i}$ and $\KC{i}$ contain at least a triangle $\tau = \{p, q, y\}$ not in $\KAF{i-1}$ and $\KC{i-1}$.
Thus, $e$ cannot either delete connected components or create degree-one homology classes in both. 
This follows because $p$ and $q$ are connected in $\KAF{i-1}$ and $\KC{i-1}$ via the edges $\{p, y\}$, $\{q, y\}$, and any new $1$-cycle would need to contain $e$, but would also be homologous to a $1$-cycle containing $\{p, y\}$ and  $\{y, q\}$ in place of $e$.

We show that the same degree-one homology classes are deleted in $\KAF{i}$ and $\KC{i}$.
For this, we further refine the inclusion $\KC{i-1} \subseteq \KC{i}$ into the subfiltration
$$
\KC{i-1} =
\LC{0} \subseteq \LC{1} \subseteq
\ldots \subseteq \LC{n_j} 
= \KC{i},
$$
where $\LC{1}$ is equal to $\KC{i-1}$ union the simplices added going from $\KAF{i-1}$ to $\KAF{i}$, and $\LC{j}$ is equal to $\LC{j-1}$ union a triangle $\tauC_j$ and the higher-dimensional simplices containing this triangle for each $2 \leq j \leq n_j$.
Note that $\tauC_j$ is a non-Delaunay triangle with at least one non-Delaunay edge, i.e. $\tauC \not\in \KAF{i}$.
In particular, $\LC{j}$ is defined by choosing $\tauC_j = \{p, q, \yC_j\}$ among the triangles containing $e$ and such that 
$\cB{\bar{r}}{\yC_j} \cap \A^{\bar{r}}$ 
intersects 
$\bigcup_{\yC \in \YC} B_{\bar{r}}(\yC)$, 
where $\YC = \{ \yC \in S \ \vert \ \{p, q, \yC\} \in \LC{j-1}\}$.
We prove that it is possible to define $\LC{j}$ this way by showing that one such $\tauC_j$ exists for each $2 \leq j \leq n_j$.

Suppose $\cB{\bar{r}}{\yC_j} \cap \A^{\bar{r}}$ 
and
$\bigcup_{\yC \in \YC} B_{\bar{r}}(\yC)$ do not intersect for any of the triangles $\{\tauC_j\}$ still to be added in $\LC{n_j}$.
It follows that also the union of the closed balls centered in the vertices $\yC_j$ of the triangles $\{\tauC_j\}$ does not intersect $\bigcup_{\yC \in \YC} B_{\bar{r}}(\yC)$.
Thus, the boundary of this union of closed balls intersects $\A^{\bar{r}}$ in a point $z$, because the latter is not covered by any union of open balls of radius $\bar{r}$ by Proposition \ref{prop:delaunay-edge}.
But $z$ is a witness of some $\tauC_j$ still to be added, because the boundary of a union of balls is a subset of the union of their boundaries, making $\tauC_j$ into a $\linf$-Delaunay simplex.
This contradicts the fact that all $\linf$-Delaunay triangles are added going from $\LC{0}$ to $\LC{1}$.

To conclude, we have that going from $\KAF{i-1}$ to $\KAF{i}$ and from $\KC{i-1} = \LC{0}$ to $\LC{1}$ the same degree-one homology classes are deleted, because the same set of triangles containing only $\linf$-Delaunay edges is added in $\KAF{i}$ and $\LC{1}$.
Moreover, we can show that no degree-one homology class is deleted at each step $\LC{j-1} \subseteq \LC{j}$ for $2 \leq j \leq n_j$.
By definition of $\LC{j}$, $\cB{\bar{r}}{\yC_j}$ has to intersect at least one $B_{\bar{r}}(\yC')$ such that $\yC' \in \YC = \{ \yC \in S \ \vert \ \{p, q, \yC\} \in \LC{j-1} \}$.
So $\din(\yC_j, \yC') < 2\bar{r}$, because $\cB{\bar{r}}{\yC_j} \cap B_{\bar{r}}(\yC') \neq \emptyset$, and it follows $\{\yC_j, \yC'\} \in \KC{i-1} \subseteq \LC{j-1}$.
Besides, the edges $\{p, \yC'\}$, $\{q, \yC'\}$ are in $\LC{j-1}$, because $\yC' \in \check{\mathcal{Y}}$, and the edges $\{p, \yC_j\}$, $\{q, \yC_j\}$ are in $\LC{j-1}$, otherwise $\tauC$ could not be added in $\LC{j}$.
Thus, we have $\{p, \yC_j, \yC'\}$, $\{q, \yC_j, \yC'\} \in \LC{j-1}$, because all their edges are in $\LC{j-1}$, and $\{p, q, \yC'\} \in \LC{j-1}$ by definition of $\YC$.
So, adding the triangle $\tauC_j$ in $\LC{j}$ also adds the tetrahedron $\{p, q, \yC', \yC_j\}$ in $\LC{j}$.
Thus, any $2$-cycle containing $\tauC_j$ is homologous to one containing the other three faces of $\{p, q, \yC', \yC_j\}$ in its place, and we conclude that $\tauC_j$ cannot delete any degree-one homology class in $\LC{j}$.

\medskip 

\noindent
\textsc{CASE 2:} $e$ is non-Delaunay 
\\
We have $\KAF{i-1} = \KAF{i}$, because $e$ is not added to Alpha flag complexes.
Moreover, $H_k(\KC{i-1}) \rightarrow H_k(\KC{i})$ is an isomorphism for $k=0, 1$, by Lemma \ref{lemma:removing-non-delaunay-edge}. 
So, homology in degrees zero and one remains unchanged at $\KAF{i-1} \subseteq \KAF{i}$ and $\KC{i-1} \subseteq \KC{i}$.
\end{proof}
%
% Corollary: Equivalence in persistence
\begin{corollary}
\label{cor:alpha-flag-and-cech-have-same-persistence}
Let $S$ be a finite set of points in $(\mathbb{R}^d, \din)$.
Given a finite set of monotonically increasing real-values $\mathcal{R} = \{r_i\}_{i=1}^m$, the Alpha flag $\Fflag$ and \v{C}ech filtrations $\Fcech$ of $S$ have the same persistence diagrams in homological degrees zero and one.
\end{corollary}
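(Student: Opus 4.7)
The plan is to upgrade the one-step statement of Theorem~\ref{thm:main} to a pointwise isomorphism at every filtration value, and then invoke the Persistence Equivalence Theorem of \cite[Section~7.2]{edelsbrunner2010computational} to conclude. The substantive content has already been established in Theorem~\ref{thm:main}; what remains is a packaging argument by induction over the critical edge-lengths of $S$.

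First, I would set up a base case and a cover of $\R$ by single edge-length ranges. For any $r$ strictly smaller than $\tfrac{1}{2}\min_{p \neq q \in S}\din(p,q)$, neither $\flagr$ nor $\cechr$ contains any edge, so both complexes coincide with the vertex set $S$ and the inclusion-induced map $H_k(\flagr) \to H_k(\cechr)$ is trivially an isomorphism for every $k$. Since $S$ is finite, the set of critical edge-lengths $T = \{\tfrac{1}{2}\din(p,q) : p \neq q \in S\}$ is finite; list its elements as $\bar{s}_1 < \bar{s}_2 < \ldots < \bar{s}_N$. Around each $\bar{s}_j$ I would choose an open interval $(a_j, b_j)$ with $a_j < \bar{s}_j < b_j$ narrow enough to contain no other element of $T$; by Definition~\ref{def:edge-by-edge-filtration} each such $(a_j, b_j)$ is then a single edge-length range, and on the complementary intervals $[b_j, a_{j+1}]$ the complexes $\flagr$ and $\cechr$ are constant in $r$.

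Next, I would chain isomorphisms. Starting from the base case and applying Theorem~\ref{thm:main} successively across each single edge-length range $(a_j, b_j)$, while using that homology does not change on the constant intervals in between, I obtain that $H_k(\flagr) \to H_k(\cechr)$ is an isomorphism for $k = 0, 1$ at every $r \in \R$; specialising to $r = r_i \in \mathcal{R}$ yields the pointwise statement at every level of the two filtrations.

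Finally, each pointwise isomorphism is induced by the simplicial inclusion $K_{r_i}^{AF} \subseteq K_{r_i}^{\check{C}}$, and these inclusions commute with the filtration inclusions $K_{r_i}^{AF} \subseteq K_{r_{i+1}}^{AF}$ and $K_{r_i}^{\check{C}} \subseteq K_{r_{i+1}}^{\check{C}}$. Applying $H_k$ and using functoriality gives a commuting ladder whose vertical arrows are all isomorphisms, which is precisely the input required by the Persistence Equivalence Theorem; hence $\dgmk(\Fflag) = \dgmk(\Fcech)$ for $k = 0, 1$. No serious obstacle is expected: the only point demanding care is checking that the chosen ranges $(a_j, b_j)$ collectively cover every critical length of $S$ and can be threaded through the given parameters $\mathcal{R}$, which both follow at once from the finiteness of $S$.
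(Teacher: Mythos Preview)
Your proposal is correct and follows essentially the same approach as the paper: chain the one-step isomorphisms of Theorem~\ref{thm:main} across the finitely many single edge-length ranges to obtain pointwise isomorphisms $H_k(\flagri) \to H_k(\cechri)$ for $k=0,1$, then apply the Persistence Equivalence Theorem. Your version is simply more explicit about the base case and the decomposition into ranges, which the paper compresses into a single sentence.
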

\begin{proof}
Let 
$K_{r_1}^{AF} 
\subseteq 
K_{r_2}^{AF} 
\subseteq \ldots \subseteq 
K_{r_m}^{AF}$
and    
$K_{r_1}^{\check{C}} 
\subseteq 
K_{r_2}^{\check{C}} 
\subseteq \ldots \subseteq 
K_{r_m}^{\check{C}}$ 
the Alpha flag and \v{C}ech filtrations on $\mathcal{R}$.
We have that 
$H_k(\flagri) \rightarrow H_k(\cechri)$ 
is an isomorphism for each 
$1 \leq i \leq m$ and $k=0, 1$ by chaining the isomorphisms obtained by applying Theorem \ref{thm:main} to each single edge-length range $(r, r+\varepsilon) \subseteq [0, r_i]$. 
The proof follows by the Persistence Equivalence Theorem of \cite[Section~7.2]{edelsbrunner2010computational}
\end{proof}
\begin{remark}
The above result extends to generic dimension $d$ the equivalence of persistence diagrams in homological degrees zero and one proven in \cite{kerber1} for two and three-dimensional convex objects.
\end{remark}

%=============================
% Minibox complexes
\section{Equivalence of Minibox Complexes}
\label{sec:minibox}
Given the results of the previous section, we  prove that Minibox complexes can also be used to compute \v{C}ech persistence diagrams in homological degrees zero and one.
We show that $\flagr \subseteq \minir$, and use this inclusion to derive isomorphisms $H_k(\flagr) \rightarrow H_k(\minir)$ for $k=0,1$ for any $r \in \R$. 

A disadvantage of Minibox complexes, compared to Alpha flag complexes, is that they contain more edges, and so produce larger filtrations, leading to slower computation of persistence diagrams.
However, we provide efficient algorithms for finding Minibox edges in the next section, which
 are used for the computation of persistence diagrams in Section \ref{sec:experiments}.
On the other hand, efficient algorithms for finding Alpha flag edges, i.e. $\linf$-Delaunay edges, are only available for points in $\R^2$ \cite{shute1991planesweep}.
We also show that the expected number of edges in Minibox complexes is $\boundMiniEdges$, when considering $n$ randomly sampled points $S \subseteq (\R^d, \din)$.
This is substantially smaller than the quadratic number of edges in \v{C}ech complexes. 
By comparison, the number of edges in the Minibox complex is  within a polylogarithmic factor of the number of edges in an Alpha complex (using the $\ell_2$ metric). 
The expected size of the Alpha complex using the $\ell_\infty$ metric is not known, but we conjecture that it is linear in the number of vertices, just as in the standard $\ell_2$ case. 

Recall that $\minipq = \prod_{i=1}^d \big( \min\{p_i, q_i\}, \max\{p_i, q_i\} \big)$, used in the definition of Minibox complexes, is the interior of the minimal bounding box of any $p, q \in S$.
In the following, we refer to $\minipq$ as the minibox of $p$ and $q$, and to any pair $\{p, q\}$ such that $\minipq \cap S = \emptyset$ as a Minibox edge.
%
% Proposition: Minibox property
\begin{proposition}
\label{prop:minibox-property}
Let $S$ be a finite set of points in $(\mathbb{R}^d, \din)$, $e=\{p, q\}$ a pair of points of $S$, and $\minipq$ the minibox of $p$ and $q$.
If there exists $y\in S$ such that $y \in \minipq$, then $e$ is not an edge of the $\linf$-Delaunay complex of $S$.
\end{proposition}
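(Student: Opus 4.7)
The plan is to use the witness-point characterization of Delaunay edges given in Proposition \ref{prop:delaunay-edge}: the edge $e = \{p, q\}$ belongs to the $\linf$-Delaunay complex if and only if the witness set $\Z{e} = \A^{\bar{r}} \setminus \bigcup_{y' \in S \setminus e} B_{\bar{r}}(y')$ is non-empty, where $\bar{r} = \din(p,q)/2$ and $\A^{\bar{r}} = \partial\cB{\bar{r}}{p} \cap \partial\cB{\bar{r}}{q}$. Therefore, to prove that $e$ is not Delaunay, it suffices to show that if some $y \in S \setminus e$ satisfies $y \in \minipq$, then $\A^{\bar{r}} \subseteq B_{\bar{r}}(y)$, which immediately forces $\Z{e} = \emptyset$.

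The key step is a coordinate-by-coordinate comparison. By Proposition \ref{prop:intersection-boxes}(i), every point $z \in \A^{\bar{r}} \subseteq \cB{\bar{r}}{p} \cap \cB{\bar{r}}{q}$ has each coordinate constrained to the interval $[\max\{p_i, q_i\} - \bar{r},\ \min\{p_i, q_i\} + \bar{r}]$. On the other hand, membership of $y$ in $\minipq$ gives the strict inequalities $\min\{p_i, q_i\} < y_i < \max\{p_i, q_i\}$ in every coordinate. Subtracting $y_i$ from the bounds on $z_i$ yields strict inequalities $-\bar{r} < z_i - y_i < \bar{r}$, so $\din(z, y) < \bar{r}$ and hence $z \in B_{\bar{r}}(y)$.

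Applying this to every $z \in \A^{\bar{r}}$ shows $\A^{\bar{r}} \subseteq B_{\bar{r}}(y)$, so $\Z{e}$ is empty and by Proposition \ref{prop:delaunay-edge} the edge $e$ is not in the $\linf$-Delaunay complex. I expect no real obstacle here: the only subtlety is keeping the strict vs.\ non-strict inequalities straight (the minibox is open, the closed balls are closed), which is exactly what makes the containment land inside the open ball $B_{\bar{r}}(y)$ rather than its closure.
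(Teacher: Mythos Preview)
Your proof is correct and essentially identical to the paper's: both compute $\A^{\bar{r}}$ coordinatewise as $\prod_{i=1}^d[\max\{p_i,q_i\}-\bar r,\ \min\{p_i,q_i\}+\bar r]$, use the strict inequalities $\min\{p_i,q_i\}<y_i<\max\{p_i,q_i\}$ from $y\in\minipq$ to conclude $\A^{\bar r}\subset B_{\bar r}(y)$, and then invoke Proposition~\ref{prop:delaunay-edge}.
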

\begin{proof}
Given $\bar{r} = \frac{\din(p,q)}{2}$, we have $\A^{\bar{r}} = \cB{\bar{r}}{p} \cap \overline{B_{\bar{r}}(q)}$ by Proposition \ref{prop:delaunay-edge}. 
Equivalently $\A^{\bar{r}} = \prod_{i=1}^d [b_i - \bar{r}, a_i + \bar{r}]$, where $a_i = \min\{p_i, q_i\}$ and $b_i = \max\{p_i, q_i\}$ for each $1 \leq i \leq d$.
Then, given $y\in \minipq$, it follows that $a_i < y_i < b_i$ for each $1 \leq i \leq d$, implying 
$y_i-\bar{r} < b_i-\bar{r}$
and
$a_i+\bar{r} < y_i+\bar{r}$.
Thus, $[b_i-\bar{r}, a_i+\bar{r}] \subset (y_i-\bar{r}, y_i+\bar{r})$ for each $1 \leq i \leq d$, and $\A^{\bar{r}} \subset B_{\bar{r}}(y)$.
The result follows applying Proposition \ref{prop:delaunay-edge}.
\end{proof}
\begin{remark}
The above result implies that any $\linf$-Delaunay edge is also a Minibox edges, and so $\flagr \subseteq \minir$ for any $r \in \R$.
\end{remark}
%
% Theorem: Alpha flag is equivalent to Minibox
\begin{theorem}
\label{thm:alpha-flag-and-minibox}
Let $S$ be a finite set of points in $(\mathbb{R}^d, \din)$.
Given the Alpha flag $\flagr$ and Minibox $\minir$ complexes with radius $r \in \R$, then $H_k(\flagr)$ and $H_k(\minir)$ are isomorphic in homological degrees zero and one.
\end{theorem}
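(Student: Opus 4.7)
The plan is to leverage the chain of inclusions $\flagr \subseteq \minir \subseteq \cechr$. The first inclusion follows from the remark after Proposition~\ref{prop:minibox-property}, since every $\linf$-Delaunay edge is a Minibox edge and both complexes are flag. The second is immediate from the diameter condition defining $\minir$ together with Corollary~\ref{cor:cech-is-flag}. By the iterated isomorphism supplied by Theorem~\ref{thm:main}, the inclusion-induced map $H_k(\flagr) \to H_k(\cechr)$ is an isomorphism for $k = 0, 1$, so $H_k(\flagr) \to H_k(\minir)$ is automatically injective and what remains is to prove surjectivity.

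For $k = 0$ surjectivity is immediate, since $\flagr$ and $\minir$ share the same vertex set and adding edges can only decrease the number of connected components, forcing $\dim H_0(\minir) \leq \dim H_0(\flagr)$; injectivity then gives equality. For $k = 1$ I would reduce the problem to showing that $H_1(\minir) \to H_1(\cechr)$ is also injective. By the interval decomposition of the three-term persistence module $H_1(\flagr) \to H_1(\minir) \to H_1(\cechr)$, once the composite is an isomorphism the only obstruction to the first arrow being surjective is the existence of classes born in $\minir$ and dying in $\cechr$, which is precisely a nonzero kernel of the second arrow. To rule this out I would filter the inclusion $\minir \hookrightarrow \cechr$ by an edge-by-edge process analogous to Definition~\ref{def:edge-by-edge-filtration}: each edge added in this filtration lies in $\cechr \setminus \minir$, is therefore non-Minibox, and hence non-Delaunay by Proposition~\ref{prop:minibox-property}. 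Invoking an adaptation of Lemma~\ref{lemma:removing-non-delaunay-edge} at each step and chaining the resulting isomorphisms would yield $H_k(\minir) \cong H_k(\cechr)$ for $k = 0, 1$, and combining this with Theorem~\ref{thm:main} via the sandwich produces the claimed isomorphism.

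The main obstacle is that Lemma~\ref{lemma:removing-non-delaunay-edge} is stated for the \v{C}ech edge-by-edge filtration, whose ambient complexes $\KC{i}$ are flag on every pair of points within $\linf$-distance $2r$, whereas the intermediate complexes between $\minir$ and $\cechr$ can be strictly smaller. The essential input of the proof sketch is the vanishing $\tilde{H}_k(\cl(\st(e)) \setminus \st(e)) = 0$ for $k = 0, 1$, obtained from a subcomplex $K_0$ built on $\openY$. I would attempt to re-derive this vanishing with $\cl(\st(e))$ computed inside the smaller ambient complex, exploiting the fact that $e$ remains non-Delaunay to invoke Proposition~\ref{prop:delaunay-edge} and cover the degenerate box $\A^{\bar{r}}$ by open balls around points of $S \setminus e$. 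The delicate point is to verify that restricting attention to Minibox edges (together with the non-Minibox edges already inserted in the current step) among vertices of $\openY$ preserves enough of the covering and clique structure for $K_0$ to still have trivial reduced homology in degrees zero and one.
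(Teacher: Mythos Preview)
Your setup through injectivity and $H_0$-surjectivity matches the paper exactly. The divergence is at $H_1$-surjectivity, and here your proposal has a genuine gap that the paper avoids by a different, more direct argument.

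The paper does not try to adapt Lemma~\ref{lemma:removing-non-delaunay-edge} to the intermediate complexes between $\minir$ and $\cechr$. Instead it rewrites cycles explicitly: given a $1$-cycle $\gamma$ in $\minir$ containing a non-Delaunay edge $e=\{p,q\}$ of maximal length $2\bar r$, it picks the point $\bar y\in\openY$ minimizing $\din(y,\minipq)$ and shows, by a short geometric argument using $\minipq\subseteq \bar r(\A^{\bar r})$ and minimality of $\bar y$, that both $\{p,\bar y\}$ and $\{q,\bar y\}$ are Minibox edges strictly shorter than $2\bar r$. The triangle $\{p,q,\bar y\}$ then lies in $\minir$, so swapping $e$ for $\{p,\bar y\}+\{\bar y,q\}$ gives a homologous cycle. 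Iterating terminates (lengths strictly decrease, finitely many edges) with a cycle of only Delaunay edges, hence one in $\flagr$.

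Your proposed adaptation of Lemma~\ref{lemma:removing-non-delaunay-edge} runs into a concrete obstruction, not merely a delicate verification. In the appendix proof of that lemma, the vanishing of $\tilde H_k(\cl(\st(e))\setminus\st(e))$ is obtained by attaching each vertex $y'\in\linkY$ to a vertex $y''\in\openY$ via an edge $\{y',y''\}$ of length $<2\bar r$; the existence of such $y''$ is deduced from $\cB{\bar r}{y'}\cap\A^{\bar r}\neq\emptyset$, which in turn uses $\din(y',p)\le 2\bar r$ and $\din(y',q)\le 2\bar r$. In the \v Cech edge-by-edge filtration these bounds hold because membership in $\linkY$ forces both edges into $\KC{i-1}$, hence length $\le 2\bar r$. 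In your intermediate complexes, however, $\minir$ already contains all Minibox edges of length up to $2r$, so $\linkY$ can contain vertices $y'$ with $\din(y',p)$ or $\din(y',q)$ strictly greater than $2\bar r$. For such $y'$ the ball $\cB{\bar r}{y'}$ need not meet $\A^{\bar r}$ at all, and the attaching argument collapses. You would need a genuinely new idea to control these extra vertices, whereas the paper's cycle-rewriting bypasses the issue entirely.
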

\begin{proof}
We have 
$\flagr \subseteq \minir \subseteq \cechr$,
and we know that $H_k(\flagr) \rightarrow H_k(\cechr)$ is an isomorphism for $k=0, 1$ from the discussion in the proof of Corollary \ref{cor:alpha-flag-and-cech-have-same-persistence}.
Thus, we have the following commutative diagrams, implying that $H_k(\flagr) \rightarrow H_k(\minir)$ is injective for $k = 0, 1$ and any $r\in \mathbb{R}$.
\begin{equation*}
    \centering
    \begin{tikzcd}
        \flagr \ar[rd, hook] \ar[rr, hook] &   & \cechr\\
            & \minir \ar[ur, hook] &
    \end{tikzcd}
    \quad \Longrightarrow \quad
    \begin{tikzcd}
        H_k(\flagr) \ar[rd, hook] \ar[rr, "\cong"] &   & H_k(\cechr)\\
            & H_k(\minir) \ar[ur, two heads] &
    \end{tikzcd}
\end{equation*}
To conclude our proof, it is sufficient to show the surjectivity of $H_k(\flagr) \rightarrow H_k(\minir)$ for $k=0, 1$.

First, note that $\minir$ contains more edges than $\flagr$, and that they have the same vertices.
Thus, a connected component in $\minir$ corresponds to one or more connected components in $\flagr$, and $H_0(\flagr) \rightarrow H_0(\minir)$ induced by the inclusion has to be surjective.

To prove the surjectivity of $H_1(\flagr) \rightarrow H_1(\minir)$, we show that for any $[\gamma] \in H_1(\minir)$ there exists a $1$-cycle representing $[\gamma]$ containing only $\linf$-Delaunay edges of length less than or equal to $2r$.

Let $\gamma$ be a $1$-cycle in $\minir$ representing $[\gamma] \in H_1(\minir)$, and $e=\{p, q\}$ a non-Delaunay edge in $\gamma$ of maximum length.
We have $\A^{\bar{r}} =  \cB{\bar{r}}{p} \cap \cB{\bar{r}}{q}$, where $\bar{r} = \frac{\din(p, q)}{2}$ by Proposition \ref{prop:delaunay-edge}. 
Given $\openY = \{y \in S \ \vert \ \din(y, p) < 2\bar{r} \textrm{ and } \din(y, q) < 2\bar{r} \}$, we equivalently have $\openY = S \cap B_{2\bar{r}}(p) \cap B_{2\bar{r}}(q)$ and $\openY = S \cap \bar{r}(\A^{\bar{r}})$, because $\varepsilon(\A^{\bar{r}})$ equals $\cB{\bar{r} + \varepsilon}{p} \cap \cB{\bar{r} + \varepsilon}{q}$ by the properties of boxes described in Appendix \ref{sec-app:delaunay-edges}.
For points in $\R^2$, these sets are illustrated in Figure \ref{fig:minibox-proof}, where $\A^{\bar{r}}$ is represented by a thickened vertical line between $p$ and $q$.
Moreover, we have $\minipq \subseteq \bar{r}(c) \subseteq \bar{r}(\A^{\bar{r}})$, where $c = \frac{p+q}{2}$, because taking $\varepsilon$-thickenings preserves inclusions, and $\minipq$ has sizes of length less than or equal to $2\bar{r}$ and center $c$.
Then, because $e$ is a non-Delaunay edge,
$\A^{\bar{r}}$ must be covered by the union of balls centered in the points of $S\setminus \{p, q\}$ by Proposition \ref{prop:delaunay-edge}.
Thus, at least one $y\in S\setminus\{p,q\}$ is such that $B_{\bar{r}}(y)$ intersects $\A^{\bar{r}}$, i.e. $\openY \neq \emptyset$.
Given $\bar{y} \in \openY$ to be a point realizing 
$$\min_{y \in \openY} \din(y, \minipq),$$
we have that $\textrm{Mini}_{p\bar{y}}$ and $\textrm{Mini}_{q\bar{y}}$ do not contain points in $S\setminus \{p,q,\bar{y}\}$, as we can show a contradiction otherwise.

%---
Suppose there exists either $y' \in S \setminus \openY$ or $y'' \in \openY$ belonging to one of these two miniboxes.
Without loss of generality, we assume either $y' \subseteq \textrm{Mini}_{p\bar{y}}$ or $y'' \subseteq \textrm{Mini}_{p\bar{y}}$.
In the former case we have $\textrm{Mini}_{p\bar{y}} \subseteq \bar{r}(\A^{\bar{r}})$, because $p$ is on the boundary of $\bar{r}(\A^{\bar{r}})$ and $\bar{y}$ in its interior. 
So $y' \in \bar{r}(\A^{\bar{r}})$, implying that $y' \in \openY$, which is a contradiction.
In the latter case, it must be that $\din(y'', \minipq) < \din(\bar{y}, \minipq)$ by definition of $\textrm{Mini}_{p\bar{y}}$ and $\din$, which is in contradiction with $\bar{y}$ minimizing the distance to $\minipq$.

So, there exists a vertex $\bar{y} \in \openY$ of the Minibox complex connected to $p$ and $q$ by the edges $\{p, \bar{y}\}$ and $\{q, \bar{y}\}$.
These are shorter than $2\bar{r}$, by definition of $\openY$, so that $\{p, \bar{y}\}, \{q, \bar{y}\} \subseteq \minir$.
Swapping $\{p, \bar{y}\}$ and $\{q, \bar{y}\}$ for $e$ in $\gamma$, we obtain a $1$-cycle  homologous to $\gamma$. 
This procedure can be repeated only a finite number of times, as we have a finite number of non-Delaunay edges, and at each iteration an edge is replaced by edges which are strictly shorter, so that future iterations cannot reintroduce edges which were previously removed.
When the procedure cannot be repeated, we have a $1$-cycle $\gamma'$ in $\minir$ homologous to $\gamma$, containing only $\linf$-Delaunay edges.
Hence, $\gamma'$ represents a degree-one homology class in the Alpha flag complex which is mapped into $[\gamma]$ by $H_1(\flagr) \rightarrow H_1(\minir)$.
\end{proof}
% Corollary: persistence Alpha flag is equal to Minibox
\begin{corollary}
Let $S$ be a finite set of points in $(\mathbb{R}^d, \din)$.
Given a finite set of monotonically increasing real-values $\mathcal{R} = \{r_i\}_{i=1}^m$, the Alpha flag $\Fflag$ and Minibox filtrations $\Fmini$ of $S$ have the same persistence diagrams in homological degrees zero and one.
\end{corollary}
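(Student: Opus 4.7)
The plan is to mirror the argument used for Corollary \ref{cor:alpha-flag-and-cech-have-same-persistence}, but replacing the role of Theorem \ref{thm:main} by Theorem \ref{thm:alpha-flag-and-minibox}. Concretely, I would write out both filtrations
\[
K_{r_1}^{AF} \subseteq K_{r_2}^{AF} \subseteq \ldots \subseteq K_{r_m}^{AF}
\qquad\text{and}\qquad
K_{r_1}^{M} \subseteq K_{r_2}^{M} \subseteq \ldots \subseteq K_{r_m}^{M},
\]
note that $\flagri \subseteq \miniri$ holds for every $i$ by the remark following Proposition \ref{prop:minibox-property}, and observe that these inclusions are compatible with the inclusions within each filtration. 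This produces a ladder diagram of vector spaces once one applies $H_k(-;\mathbb{F})$ for $k=0,1$, whose vertical arrows are exactly the maps studied in Theorem \ref{thm:alpha-flag-and-minibox}.

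Next I would invoke Theorem \ref{thm:alpha-flag-and-minibox} to conclude that every vertical arrow $H_k(\flagri) \rightarrow H_k(\miniri)$ is an isomorphism for $k=0,1$. The key point I want to emphasize here is that the isomorphism asserted by the theorem is precisely the one induced by the inclusion $\flagri \subseteq \miniri$: the surjectivity argument rewrites a $1$-cycle in $\miniri$ as one homologous to a $1$-cycle consisting of $\linf$-Delaunay edges of length at most $2r_i$, while injectivity comes from the commutative triangle $\flagri \hookrightarrow \miniri \hookrightarrow \cechri$. Hence all squares of the ladder commute, since they are just applications of $H_k$ to commuting inclusions of simplicial complexes.

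Finally, I would apply the Persistence Equivalence Theorem of \cite[Section 7.2]{edelsbrunner2010computational} to this ladder, obtaining that $\Fflag$ and $\Fmini$ have identical $k$-th persistence diagrams for $k = 0,1$. I do not expect any real obstacle: the entire content of the corollary sits inside Theorem \ref{thm:alpha-flag-and-minibox}, so the proof is essentially a two-line invocation of that theorem together with the functoriality of homology and the Persistence Equivalence Theorem. The only subtlety worth a sentence in the written proof is explicitly pointing out that the isomorphisms from Theorem \ref{thm:alpha-flag-and-minibox} are induced by the filtration inclusions, so that naturality is automatic and no separate verification is required.
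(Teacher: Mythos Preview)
Your proposal is correct and follows essentially the same approach as the paper: the paper's proof is a one-line reference stating that the result follows from the Persistence Equivalence Theorem as in Corollary~\ref{cor:alpha-flag-and-cech-have-same-persistence}, which is exactly the scheme you spell out. Your observation that Theorem~\ref{thm:alpha-flag-and-minibox} already delivers the levelwise isomorphism directly (so no chaining over single edge-length ranges is needed here, unlike in Corollary~\ref{cor:alpha-flag-and-cech-have-same-persistence}) and that this isomorphism is the one induced by inclusion is correct and makes your write-up slightly more explicit than the paper's.
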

\begin{proof}
Follows from the Persistence Equivalence Theorem of \cite[Section~7.2]{edelsbrunner2010computational} as for Corollary \ref{cor:alpha-flag-and-cech-have-same-persistence}.
\end{proof}
% Figure: Minibox edges
\begin{figure}[!tp]
    \centering
    \includegraphics[width=2.8in]{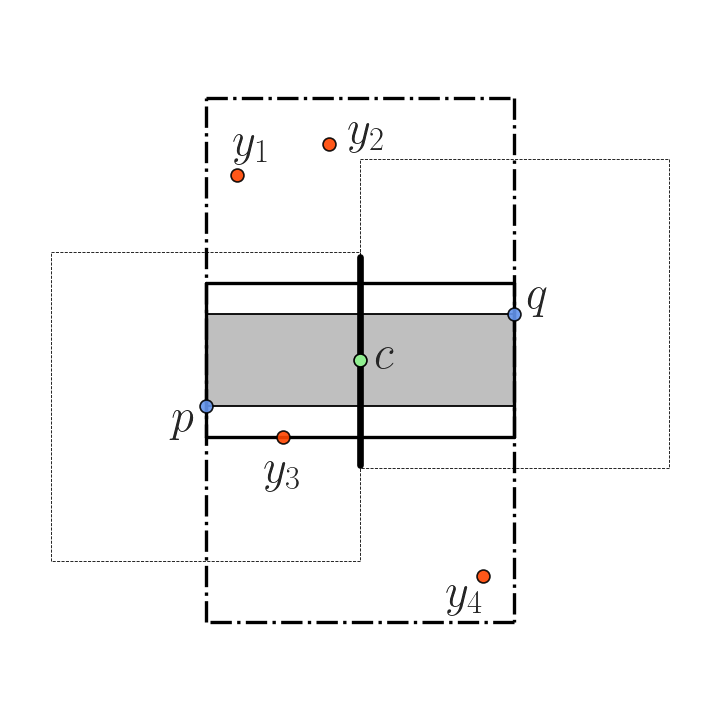}
    \caption{The pair $(p,q)$ is not a $\linf$-Delaunay edge, but it is a Minibox edge. $\minipq$ is the gray region having $p$ and $q$ as two vertices. The set $\openY$ consists of four $y_i$ points contained in the rectangle $\bar{r}(\A^{\bar{r}})$, whose boundary is represented by a dash-dot line.}
    \label{fig:minibox-proof}
\end{figure}

% Paragraph: Number of Minibox edges
\paragraph{Number of Minibox edges}
We conclude this section by studying the number of edges that a Minibox complex $\minir$ can contain.
We start by noting that in the worst case a Minibox complex can contain $O(n^2)$ edges.
For example the union of 
$S_1 = \left\{p_i = \left( 0 + \frac{i}{n}, 1 - \frac{i}{n}\right) \right\}_{i=1}^n$ 
and 
$S_2 = \left\{ q_j = \left( 2 + \frac{j}{n}, 1 - \frac{j}{n}\right) \right\}_{j=1}^n$ 
is a set of $2n$ points in $\R^2$, on parallel line segments, such that the miniboxes $\textrm{Mini}_{p_iq_j}$ for $1 \leq j \leq i \leq n$ do not contain any point in $S_1 \cup S_2$.
Thus, the Minibox complex of $S_1 \cup S_2$ will contain more than $\frac{n(n-1)}{2}$ edges for a large enough radius parameter, see Figure \ref{fig:example-2n-minibox}.
For comparison, Figure \ref{fig:example-2n-delaunay} illustrates the $4n-3$ Alpha flag (i.e. $\linf$-Delaunay) edges on the same set of points.

Then, we study the expected number of Minibox edges on randomly sampled points.
Recall that a point $p$ \emph{dominates} $q$ if each of the coordinates of $p$ is greater than the corresponding coordinate of $q$.
Moreover, $p$ \emph{directly dominates} $q$ if $p$ dominates $q$ and there is no other point $y \in S$ such that $p$ dominates $y$ and $y$ dominates $q$.
It follows that if $p$ directly dominates $q$, then $\{p, q\}$ is a Minibox edge and $p$ and $q$ are not collinear.
On the other hand, if $\{p, q\}$ is a Minibox edge and $p$ and $q$ are not collinear, then $p$ and $q$ might not dominate each other. 
This is the case for $\{p, q_2\}$ and $\{p, q_4\}$ in Figure \ref{fig:domination-pairs}.
%
% Figure: example 2n points, and types of miniboxes
\begin{figure}[!tp]
    \centering
    \begin{subfigure}[b]{2in}
        \centering
        \includegraphics[width=2in]{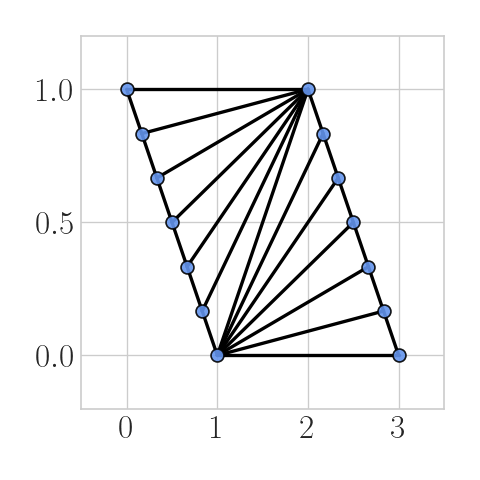}
        \caption{}
        \label{fig:example-2n-delaunay}
    \end{subfigure}
    \quad
    \begin{subfigure}[b]{2in}
        \centering
        \includegraphics[width=2in]{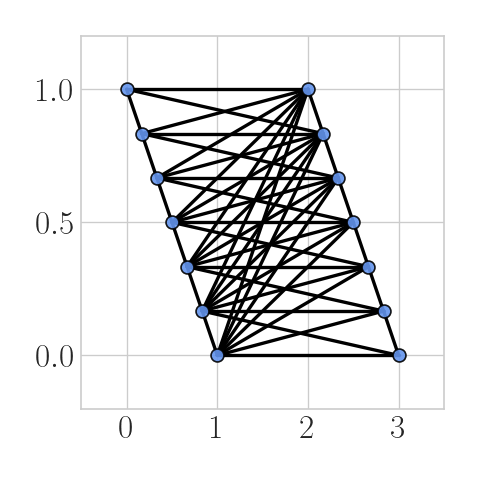}
        \caption{}
        \label{fig:example-2n-minibox}
    \end{subfigure}
    \quad
    \begin{subfigure}[b]{2in}
        \centering
        \includegraphics[width=2in]{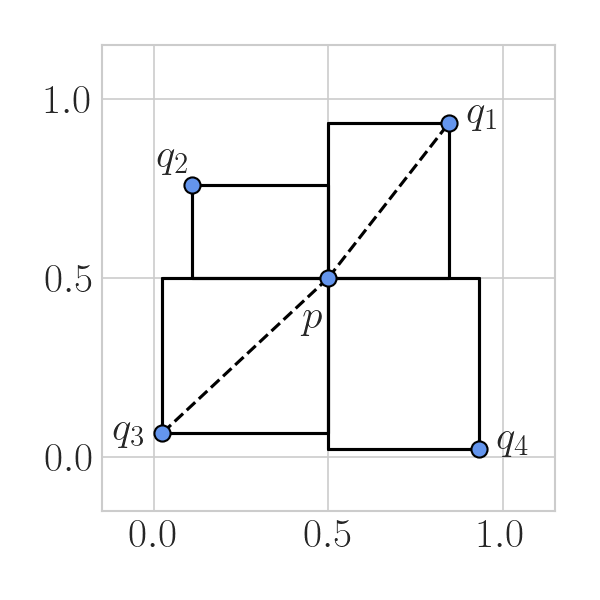}
        \caption{}
        \label{fig:domination-pairs}
    \end{subfigure}
    \caption{\textbf{(a)} $\linf$-Delaunay edges of $S_1 \cup S_2$. \textbf{(b)} Minibox edges of $S_1 \cup S_2$. \textbf{(c)} Only $\{p, q_1\}$ and $\{q_3, p\}$ are direct dominance pairs.}
    \label{fig:examples-S1-S2-types-domination}
\end{figure}
However, as long as $p, q \in \R^d$ are not collinear, there is a sequence of maximum $d-1$ reflections about the coordinate hyperplanes so that either $p$ dominates $q$ or $q$ dominates $p$.
Thus, an empty minibox $\minipq$ corresponds to a direct dominance pair $\{p, q\}$ via one of $2^{d-1}$ possible sequences of reflections.
% Proposition: expected number edges
\begin{proposition}
\label{prop:expected-edges-minibox}
Let $S$ be a finite set of $n$ uniformly distributed random points in $(\R^d, \din)$ such that there does not exist any pair of collinear points in $S$.
The expected number of Minibox edges of $S$ is $\boundMiniEdges$. 
\end{proposition}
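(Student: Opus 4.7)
The plan is to reduce the count of Minibox edges to a classical question about the number of maxima of a uniformly sampled point set.

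The first step is the reduction to direct dominance pairs, which is essentially already observed in the paragraph preceding the proposition. Every Minibox edge $\{p, q\}$ with $p, q$ not collinear is mapped to a direct dominance pair by at least one of the $2^{d-1}$ sequences of reflections about the coordinate hyperplanes. Because coordinate reflections are measure-preserving on axis-parallel boxes, each reflected sample has the same distribution as $S$, so by linearity of expectation the expected number of Minibox edges of $S$ is at most $2^{d-1}$ times the expected number of direct dominance pairs in $n$ uniform random points in $\R^d$. It thus suffices to prove the latter is $O(n\ln^{d-1}(n))$.

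To obtain this bound I would condition on each possible dominator. Fix $p \in S$ and let $D_p = \{ q \in S \setminus \{p\} : q_i \leq p_i \text{ for all } i \}$. By the definition of direct dominance, the points $q$ such that $p$ directly dominates $q$ are precisely the maximal elements of $D_p$ under the coordinate-wise partial order. Conditional on $p$, the points in $D_p$ are i.i.d.\ uniform on the axis-parallel box $\prod_{i=1}^d [0, p_i]$, which is equivalent by rescaling to a uniform sample on $[0,1]^d$. The classical theorem of Bentley, Kung, Schkolnick and Thompson then states that the expected number of maxima of $m$ uniform random points in $[0,1]^d$ is $O(\ln^{d-1}(m))$; since $|D_p| \leq n - 1$, the expected number of direct dominance pairs with $p$ as dominator is $O(\ln^{d-1}(n))$. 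Linearity of expectation over the $n$ choices of $p$ yields $O(n \ln^{d-1}(n))$ direct dominance pairs in expectation, and combining with the first step gives the claimed $\boundMiniEdges$.

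The main subtlety to justify cleanly is the conditioning statement: for a fixed $p$, the points of $D_p$ are, conditionally, i.i.d.\ uniform on $\prod_i[0, p_i]$. This follows from the observation that conditioning a uniform distribution on a sub-box yields the uniform distribution on that sub-box, combined with independence of the original samples. A secondary point is that a single Minibox edge may correspond to direct dominance pairs under several distinct reflections, but this over-counting is already absorbed by the factor $2^{d-1}$ and does not harm the bound.
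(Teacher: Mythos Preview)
Your proposal is correct and follows essentially the same route as the paper: reduce Minibox edges to direct dominance pairs via the $2^{d-1}$ coordinate reflections, then bound the expected number of direct dominance pairs by observing that the points directly dominated by a fixed $p$ are the maxima among the points dominated by $p$ and invoking the Bentley et al.\ $O(\ln^{d-1}(n))$ bound. The paper simply cites \cite{dumitrescu2013maximal-empty-boxes} for the direct-dominance step, whereas you spell out the conditioning/rescaling argument explicitly, but the underlying idea is the same.
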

\begin{proof}
The expected number of maximal points of $S$ (i.e. points not dominated by any other point of $S$) is $O\big(\ln^{d-1}(n)\big)$ \cite{bentley1978average-number-maxima}.
As discussed in \cite[Section 2]{dumitrescu2013maximal-empty-boxes}, it follows that the expected number of direct dominance pairs is $O\big(n\ln^{d-1}(n)\big)$, because the points directly dominated by any $p\in S$ are the maximal points in the subset of points of $S$ dominated by $p$.
Finally, the expected number of empty miniboxes, corresponding to Minibox edges, is $\boundMiniEdges$ because an empty minibox can be mapped to a direct dominance pair by one of $2^{d-1}$ possible sequences of reflections as discussed above.
\end{proof}

%=============================
% Algorithms
\section{Minibox Edge Algorithms}
\label{sec:algorithms}
We present algorithms for finding all pairs of points $\{p, q\}$ of $S \subseteq (\R^d, \din)$ such that $\minipq \cap S$ is empty.
By definition these are all the edges a Minibox complex can contain.
We study the two-dimensional, three-dimensional, and higher-dimensional cases separately. 
In every case, we assume $S$ to contain $n$ points and to be preprocessed so to eliminate collinear points.
For $d=2$, we reference algorithms for rectangular visibility and direct dominance.
For $d=3$, we recall known results on direct dominance and present two new algorithms.
These maintain dynamic tree data structures that can be used to efficiently determine whether $\minipq \cap S$ is empty or not. 
For general dimension $d$, the problem of finding all empty miniboxes  $S$ can be seen as the problem of testing offline orthogonal range emptiness. In describing the algorithms, we provide pointers to the relevant results on range queries.

We provide an implementation of our algorithms in the form of the \href{https://pypi.org/project/persty/}{persty} Python package, the source code of which is available at \href{https://github.com/gbeltramo/persty}{https://github.com/gbeltramo/persty}.

%---------------------------------
% Paragraph: preprocessing
\paragraph{Preprocessing}
The algorithms we are going to present assume that there are no collinear points in $S$, i.e. $p_i \neq q_i$ for each $ 1 \leq i \leq d$ and $p, q \in S$.
This assumption can be verified in time $O(d n \log(n))$ by sorting $S$ along each of the coordinates of its points, and then iterating on each of the $d$ instances of sorted points to see if any two consecutive points share a coordinate.

If the assumption is not met for a coordinate $\hat{i}$,  we sample $n$ real values from a uniform distribution on $(-\varepsilon, \varepsilon)$, and sum these values to the $\hat{i}$-th coordinates of points in $S$.
With probability $1$, this results in the points of $S$ not being collinear in their $\hat{i}$ coordinate.
We may choose $\varepsilon$ to be an arbitrarily small real value for each such $\hat{i}$.
This way the \v{C}ech persistence diagrams of the original $S$ and of the transformed $S$ without collinear points are arbitrarily close by the Stability Theorem.
Hence, it follows that persistence diagrams computed using the Minibox complexes of the original and transformed points are arbitrarily close.

%----------------------------------
% Paragraph 2D
\paragraph{Points in two dimensions}
The definition of \emph{rectangular visibility} for $p$ and $q$ given in   \cite{overmars1988rectangular-visibility} is equivalent to $\minipq \cap S = \emptyset$.
Thus, the algorithm presented in \cite{overmars1988rectangular-visibility} reports the Minibox edges of $S \subseteq (\R^2, \din)$ in $O(n\log(n) +k)$ time and $O(n)$ space, where $k$ is the number of reported edges and $n$ the number of points of $S$.
Furthermore, any algorithm for finding the direct dominance pairs of $S$ can be applied twice (to $S$ and its reflection over the $x$ axis) to find the Minibox edges of $S$.
So, the same time and space complexities are obtained by using the divide-and-conquer algorithm of \cite{guting1989fast-direct-dominance} for direct dominance pairs of points in $\R^2$. 

%----------------------------
% Paragraph: 3D
\paragraph{Known results for points in three dimensions}
In \cite{guting1989fast-direct-dominance} it is given an algorithm for direct dominances of $S \subseteq \R^3$ taking $O\big( (n+k') \log^2(n) \big)$ time and $O\big((n+k')\log(n)\big)$ space, where $k'$ is the number of direct dominance pairs and $n$ the number of points of $S$.
Thus, the Minibox edges of $S$ can be found by applying this algorithm four times, because of the relation between Minibox edges and direct dominance pairs discussed in Section \ref{sec:minibox}.
The resulting algorithm for Minibox edges has time complexity $O(k \log^2(n) )$ and uses $O(k\log(n))$ space, where $k$ is the number of Minibox edges of $S$.

Given $n$ randomly sampled points in $\R^3$, the best, average and worst-case values of $k$ are $O(n)$, $O(n \log^2(n))$, and $O(n^2)$ respectively.
The best case follows because there exists a Minibox edge from each point to its nearest neighbour, and there are $n-1$ Minibox edges on $n$ points on any non axis-parallel line in $\R^3$.
The average case is $O(n\log^2(n))$ by Proposition  \ref{prop:expected-edges-minibox} with fixed dimension $d=3$.
The worst case is discussed with an example in Section \ref{sec:minibox}, and illustrated in Figure \ref{fig:example-2n-minibox}.
%
% Table: complexities 3D known results
\begin{table}[tb]
 \caption{Complexities of Minibox edge algorithms for best, average, and worst-case $k$ of randomly sampled $S \subseteq \R^3$.}
 \label{table:complexities-algorithms-3D}
  \centering
  \begin{tabular}{|l | l | c | c | c | c |}
    \cline{3-6}
        \multicolumn{1}{ l }{} %
        & \multicolumn{1}{ l }{}
        & \multicolumn{1}{|c|}{\textbf{Generic}}  
        & \multicolumn{1}{|c|}{\textbf{Best}}
        & \multicolumn{1}{|c|}{\textbf{Average}}
        & \multicolumn{1}{|c|}{\textbf{Worst}}
    \\
    \hline
    \rule{0pt}{11pt}
        \multirow{2}*{Known} %
        & Time
        & $O(k\log^2(n))$ 
        & $O(n\log^2(n))$ 
        & $O(n\log^4(n))$ 
        & $O(n^2\log^2(n))$ 
    \\
    \cline{2-6}
    \rule{0pt}{11pt}
        & Space
        & $O(k \log(n))$
        & $O(n\log(n))$
        & $O(n\log^3(n))$
        & $O(n^2\log(n))$
    \\
    \hline
    \hline
    \rule{0pt}{11pt}
        \multirow{2}*{Algorithm 1} %
        & Time
        & $O(n^2\log(n))$
        & $O(n^2\log(n))$
        & $O(n^2\log(n))$
        & $O(n^2\log(n))$
    \\
    \cline{2-6}
    \rule{0pt}{11pt}
        & Space 
        & $O(n)$
        & $O(n)$
        & $O(n)$
        & $O(n)$
    \\
    \hline
    \hline
    \rule{0pt}{11pt}
        \multirow{2}*{Algorithm 2} %
        & Time
        & $O(k\log^2(n))$ 
        & $O(n\log^2(n))$ 
        & $O(n\log^4(n))$ 
        & $O(n^2\log^2(n))$ 
    \\
    \cline{2-6}
    \rule{0pt}{11pt}
        & Space 
        & $O(n\log^2(n))$
        & $O(n\log^2(n))$
        & $O(n\log^2(n))$
        & $O(n\log^2(n))$
    \\
    \hline
    \end{tabular}
\end{table}

In the following, we present two novel algorithms that improve the space complexity, for average and worst-case $k$,  of the Minibox edge algorithm derived from known direct dominance algorithms.

%-------------------------
% Paragraph: Algorithm 
\paragraph{Algorithm for three-dimensional points using $O(n)$ space}
%--- intro
Let $S$ be a set of $n$ points in $\R^3$, which does not contain collinear points.
We describe a $O(n^2 \log(n))$ algorithm using $O(n)$ storage for finding the Minibox edges of $S$.
It pseudocode is given in Algorithm \ref{alg:3D-stair}.
This improves both the time and space complexities for worst-case $k$ of the algorithm derived from known direct dominance results, see Table \ref{table:complexities-algorithms-3D}. 

%--- idea
The idea is to sweep a plane in the $z$ direction for each $p \in S$, so to find all Minibox edges $\{p, q\}$.
In particular, we define the starting sweep-plane to be the set of points $\{ v \in \R^3 \ \vert \ v_z = p_z\}$ for each $p = (p_x, p_y, p_z) \in S$.
Moreover, we assume the sweep-plane to be centered in $(p_x, p_y)$, so that its first quadrant consists of the points with $x$ and $y$ coordinates greater than $p_x$ and $p_y$ respectively. 
As points $q \in S$ are encountered moving upward, we check whether $\minipq$ does not contain other points of $S$ using a dynamic red-black tree data structure.
In order to simplify our exposition, we only discuss the case in which the projection of $q$ belongs to the first quadrant of the sweep-plane.
This is sufficient to prove the correctness of Algorithm \ref{alg:3D-stair} by the definition of $p_{xy}$ and $q_{xy}$ on line $8$, and because distinct red-black trees are defined on line $5$ for each of the four quadrants of the sweep-plane.

%--- correctness
We show that for any $p \in S$ Algorithm $\ref{alg:3D-stair}$ correctly identifies the pairs $\{p, q\}$ with $q \in S$ and $p_z < q_z$ such that $\minipq \cap S$ is empty.
The points of $S$ are first sorted on their $z$ coordinate.
This way each iteration of the inner loop on lines $6-22$ checks the points $q \in S$ with $p_z < q_z$ in sorted order.
As mentioned above, we only consider the case in which $(q_x, q_y)$ lies in the first quadrant of the sweep-plane with respect to $(p_x, p_y)$.
We define
$$Q' 
=
\{q' \in S 
\ \vert \  
p_x < q'_x
\textrm{ and }
p_y < q'_y
\textrm{ and }
p_z < q'_z < q_z 
\},$$
and
$$Q'_{xy} 
=
\{
q'_{xy} \in \R^2
\ \vert \ 
q'_{xy} = 
    (|q'_x - p_x|, 
     |q'_y - p_y|)
\textrm{ for each }
q' \in Q'
\},$$
where $p = (p_x, p_y, p_z)$, $q = (q_x, q_y, q_z)$, $q' = (q'_x, q'_y, q'_z) \in \R^3$.
Because the points in $Q'$ are the only points that may be contained in $\minipq$, it follows that $\minipq$ is empty if and only if the two-dimensional minibox $\minitwodim$ does not contain any point of $Q'_{xy}$.
%
%---
% Figure: Minibox edge algorithm 1 in 3D
% Algorithm: Minibox edges 3D
\begin{figure}[!tp]
    \begin{minipage}{\textwidth}
        \begin{algorithm}[H] % need [H] to use minipage
    	    \begin{algorithmic}[1] 
	        \caption{Minibox edges of a finite set of points $S$ in three-dimensions using red-black trees.}
	        \label{alg:3D-stair}
	%=========================
	\STATE \textbf{input:} array $points$, the finite set of points $S$ in $(\mathbb{R}^3, \din)$
	\STATE $edges \gets$ empty list of two-tuples of integers
	\STATE Sort $points$ on their $z$-coordinate
	\FOR {$i=0$ to $|S|-1$}
	    \STATE $T_1, T_2, T_3, T_4 \gets$ empty red-black trees, one per quadrant
	    \FOR {$j=i+1$ to $|S|-1$}
	        \STATE $p, q \gets points[i], points[j]$
    	    \STATE $p_{xy}, q_{xy} \gets (0, 0), (|q_x - p_x|, |q_y - p_y|)$
    	    \STATE $l \gets$ index such that $(q_x, q_y)$ is in the $l$-th quadrant of the sweep-plane centered in $(p_x, p_y)$
    	    \IF {$T_l$ is non-empty}
    	        \STATE $\hat{q}'_{xy} \gets$ first element to the left of $q_{xy}$ in $T_l$ bisecting on $|q_x - p_x|$
    	        {\IF {$q'_{xy}$ does not exist}
    	            \STATE Delete the points in $T_l$ that dominate $q_{xy}$, insert $q_{xy}$ in $T_l$ at $|q_x - p_x|$, and add $(i, j)$ in $edges$
    	        \ELSE
    	            {\IF {$\hat{q}'_{xy}$ dominates $q_{xy}$}
    	                 \STATE Delete the points in $T_l$ that dominate $q_{xy}$, insert $q_{xy}$ in $T_l$ at $|q_x - p_x|$, and add $(i, j)$ in $edges$
    	             \ENDIF}
    	        \ENDIF}
        	\ELSE 
        	    \STATE Insert $q_{xy}$ in $T_l$ at $|q_x - p_x|$, and add $(i, j)$ in $edges$
        	\ENDIF
	    \ENDFOR
	\ENDFOR
	\RETURN $edges$
	%=========================
        	\end{algorithmic}
        \end{algorithm}
    \end{minipage}
    \centering
    \begin{subfigure}[b]{2in}
        \centering
        \addtocounter{subfigure}{-3}
        \includegraphics[width=2in]{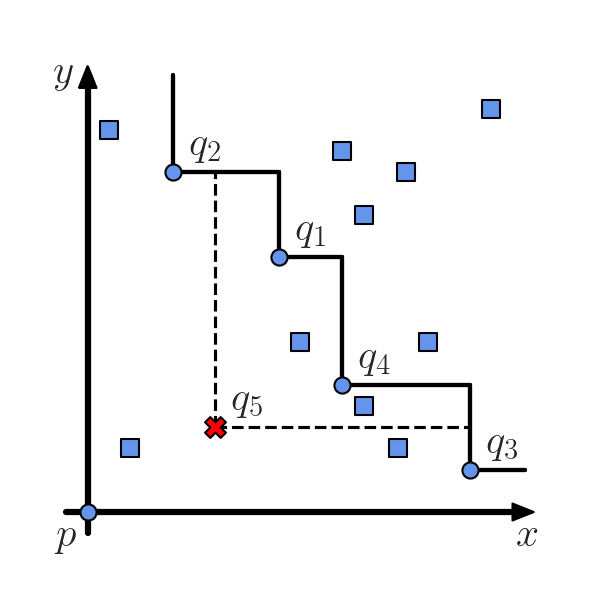}
        \caption{}
        \label{fig:staircase-1}
    \end{subfigure}
    \qquad
    \begin{subfigure}[b]{2in}
        \centering
        \includegraphics[width=2in]{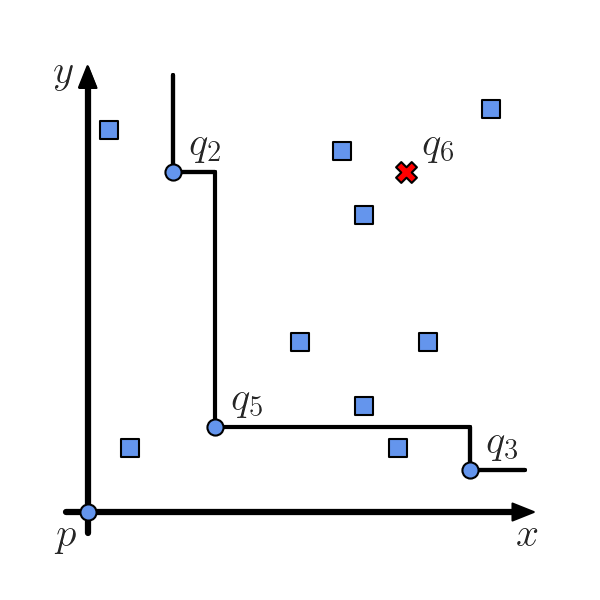}
        \subcaption{}
        \label{fig:staircase-2}
    \end{subfigure}
    \caption{Two iterations of the inner loop of Algorithm \ref{alg:3D-stair}. The points above the sweep-plane are illustrated as squares. \textbf{(a)} The point $q_5$ is reached by the sweep-plane, and $\{p, q_5\}$ is a Minibox edge. So $q_5$ is inserted in $T_1$ (initially containing $q_2$, $q_1$, $q_4$, and $q_3$) after deleting $q_1$ and $q_4$.
    \textbf{(b)} The point $q_6$ is reached by the sweep-plane, but $\{p, q_6\}$ is not a Minibox edges, because $q_6$ dominates $q_5$. So $T_1$ is not updated.}
    \label{fig:algorithm-minibox-3D-1} 
\end{figure}
Thus, a possible strategy could be to store the points of $Q'_{xy}$ in the nodes of $T_1$, and then search on these to find whether there exists $q'_{xy} \in \minitwodim$.
Note that any $q''_{xy} \in Q'_{xy}$ which dominates another $q'_{xy} \in Q'_{xy}$ is such that if $q''_{xy} \in \minitwodim$, then $q'_{xy} \in \minitwodim$ by definition of minibox and $Q'_{xy}$.
Hence, it is sufficient to store a subset of points of $Q'_{xy}$ in $T_1$, and search on those to find a point contained in $\minitwodim$.
In particular, we can restrict to $Q'_1 \subseteq Q'_{xy}$ which we define as the largest subset of $Q'_{xy}$ such that no point in $Q'_1$ dominates a point in $Q'_{xy}$.
By definition, we have that the points in $Q'_1$ correspond to a two-dimensional staircase, i.e. if the points of $Q'_1$ are sorted on their first coordinate, then their second coordinates are monotonically decreasing.
Algorithm \ref{alg:3D-stair} stores the points of $Q'_1$ in the nodes of $T_1$, sorted on their first coordinate.
This is done with the updates on lines $13$, $16$, and $20$.
Moreover, the fact that the points of $Q'_1$ form a staircase implies that $\minitwodim \cap Q'_{xy}$ is empty if and only if $q_{xy}$ does not dominate the point $\hat{q}'_{xy} \in Q'_1$ directly to its left.
Thus, we can search $T_1$ for this $\hat{q}'_{xy}$ with $O(\log(n))$ operations, which can then be used to decide whether to add $\{p, q\}$ to the list of Minibox edges or not.
Figure \ref{fig:algorithm-minibox-3D-1} illustrates two consecutive iterations of the inner loop on lines $6-22$ of Algorithm \ref{alg:3D-stair}, where on the left $Q'_1 = \{q_1, q_2, q_3, q_4\}$ and on the right $Q'_1 = \{q_2, q_3, q_5\}$.

%---- complexity
The only data structures maintained by Algorithm \ref{alg:3D-stair} are the red-black trees $T_1$, $T_2$, $T_3$, and $T_4$, each containing at most $n-1$ points.
So, the space complexity of Algorithm \ref{alg:3D-stair} is $O(n)$.
Finally, the inner loop may require to delete and add $O(n)$ points into red-black trees, and search on these same trees $O(n)$ times.
Since either deleting, adding, or searching on a red-black tree requires $O(\log(n))$ operations, we conclude that the inner loop takes a total of $O(n \log(n))$ operations.
Hence, Algorithm \ref{alg:3D-stair} has $O(n^2 \log(n))$ time complexity.

%-------------------------------
% Paragraph: Algorithm priority queue
\paragraph{Algorithm for three-dimensional points using $O(n\log^2(n))$ space}
Given $S \subseteq \R^3$ as above, we present Algorithm \ref{alg:3D-priority-range} for finding the direct dominance pairs of $S$. 
This has $O((n+k') \log^2(n))$ time and $O(n\log^2(n))$ space complexities, where $k'$ is the number of direct dominance pairs and $n$ the number of points of $S$.
The above discussion for the algorithm using known direct dominance results applies here as well. 
So, Algorithm \ref{alg:3D-priority-range} can be used to obtain the Minibox edges of $S$ in $O(k\log^2(n))$ time and $O(n\log^2(n))$ space, where $k$ is the number of Minibox edges of $S$.
This improves the space complexity for average-case $k$, see Table \ref{table:complexities-algorithms-3D}.

%--- idea
The idea is to use range queries taking $O(\log^2(n))$ time to find points such that $\{p, q\}$ is a direct dominance pair.
This is achieved by querying a range tree with fractional cascading \cite[Section 5.6]{de2010computational}, and updating a dynamic priority search tree with the results of these queries \cite{mccreight1985priority-search-tree}. 

%--- definition R
We prove that for any $p \in S$, Algorithm \ref{alg:3D-priority-range} correctly identifies each direct dominance pair $\{p, q\}$ with $q \in S$.
To begin with, we build a range tree $R$ with fractional cascading on the points of $S$, which is a three-level data structure.
The first level is a binary search tree sorted on $x$ coordinates, the second level contains binary search trees sorted on $y$ coordinates, and the third level arrays of points sorted on $z$ coordinates.
This uses $O(n\log^2(n))$ space, and reports $s$ points in a three-dimensional orthogonal range $[x_1, x_2] \times [y_1, y_2] \times [z_1, z_2]$ in $O(\log^2(n) + s)$ time.
We assume $R$ to be built as in \cite[Section 5.6]{de2010computational}, so that a three-dimensional range query returns $O(\log^2(n))$ third level arrays sorted on $z$ coordinates with pointers to their first elements with $z$ coordinate greater than $z_1$.
By reporting only the $O(\log^2(n))$ points corresponding to the pointed to elements of these sorted arrays, and finding the point with minimum $z$ coordinate among these with $O(\log^2(n))$ operations, we obtain the point $q$ with minimum $z$ coordinate among those in the original three-dimensional range.
We say that $q$ is the output of a $\min$-$z$ three-dimensional query on $R$.
In particular, we define a $\min$-$z$ query on $R$ to return either the point with minimum $z$ coordinate in a three-dimensional orthogonal range or $(+\infty, +\infty, +\infty)$ if the orthogonal range is empty. 
Note that a $\min$-$z$ query on $R$ takes $O(\log^2(n))$ time.

%--- definition P
On line $7$ of Algorithm \ref{alg:3D-priority-range} the point $q$ directly above $p$ in the $z$ direction is found, and then used to initialize a priority search tree $P$.
In particular, $P$ is defined as a dynamic red-black tree sorted on $x$ coordinates, with the property of being also a $\min$-heap on $z$ coordinates.
Each node of $P$ contains the coordinates of a point $q$ and the $x$ and $y$ ranges of the three-dimensional orthogonal range containing $q$.
The Cartesian product of these ranges is a ``vertical" rectangle in the sweep-plane used in Algorithm \ref{alg:3D-stair}.
Hence, we can think of $P$ as a data structure keeping track of these ``vertical" rectangles.

%---
Importantly, $P$ has the following property: the union of the ``vertical'' rectangles stored at the nodes of $P$ equals to the area under the staircase described by the points of $Q'_1$ used in Algorithm \ref{alg:3D-stair}, see Figure \ref{fig:algorithm-minibox-3D-2}.
This property holds true when $P$ is initialized, and it is preserved when updating $P$ on lines $12-16$ of Algorithm \ref{alg:3D-priority-range}, because at each iteration of the inner loop on lines $9-17$ the $z$ coordinate of $q$ is monotonically increasing.
%--- property P
The update searches $P$ for all rectangles to the right of $(q_x, q_y)$ such that $q_y$ is within their $y$ range, and deletes them from $P$.
This step corresponds to the deletion of nodes from $T_1$ in Algorithm \ref{alg:3D-stair}.
Then, two new nodes are added for the rectangles to the left and right of $(q_x, q_y)$, which corresponds to the insertion of $q_{xy}$ in $T_1$.
In particular, if $[x_1^q, x_2^q] \times [y_1^q, y_2^q]$ is the rectangle containing $(q_x, q_y)$, and $[\hat{x}_1, \hat{x}_2] \times [\hat{y}_1, \hat{y}_2]$ is the first rectangle to the right of $(q_x, q_y)$ which is not deleted in the previous step of the algorithm, then $[x_1^q, q_x] \times [y_1^q, y_2^q]$ and $[q_x, \hat{x}_1] \times [\hat{y}_1, q_y]$ respectively are these new left and right rectangles.
Moreover, the points $q_L$ and $q_R$,  obtained with $\min$-$z$ queries on $R$, are stored in the new nodes of $P$.
Note that if a three-dimensional orthogonal range is empty, then a $\min$-$z$ query on $R$ returns $(+\infty, +\infty, +\infty)$, and we say that the node in which this point is inserted is marked.

% Algorithm: dominance pairs 3D
\begin{figure}[!tp]
    \begin{minipage}{\textwidth}
        \begin{algorithm}[H] % need [H] to use minipage
    	    \begin{algorithmic}[1] 
	        \caption{Direct dominance pairs of $S$ in three-dimensions using a priority search tree and range tree.}
	        \label{alg:3D-priority-range}
	%=========================
	\STATE \textbf{input:} array $points$, the finite set of points $S$ in $(\mathbb{R}^3, \din)$
	\STATE $pairs \gets$ empty list of two-tuples of integers
	\STATE Sort $points$ on their $z$-coordinate
	\STATE $R \gets$ range tree with fractional cascading of $points$
	\FOR {$i=0$ to $|S| - 2$}
	    \STATE $p \gets points[i]$ 
	    \STATE $q \gets$ query $R$ for the point with minimum $z$ coordinate in range $[p_x, +\infty) \times [p_y, +\infty) \times [p_z, +\infty)$
	    \STATE $P \gets$ root of $\min$-$z$ priority search tree sorted on $x$ containing $(q_x, q_y, q_z, p_x, +\infty, p_y, +\infty)$  
	    \WHILE {root of $P$ not marked}
	        \STATE $q \gets (q_x, q_y, q_z)$, i.e. first three coordinates of point in root of $P$
	        \STATE Add $(i, j)$ in pairs, where $j$ is index of $q$ in $points$
	        \STATE Delete root from $P$ 
	        \STATE Delete nodes representing ``vertical'' rectangles $[x_1, x_2] \times [y_1, y_2]$ with $q_x < x_2$ and $q_y < y_2$
	        \STATE $q_L \gets$ query $R$ for point with minimum $z$ in ``vertical'' rectangle to the left of $(q_x, q_y)$
	        \STATE $q_R \gets$ query $R$ for point with minimum $z$ in ``vertical'' rectangle to the right of $(q_x, q_y)$
	        \STATE Insert nodes in $P$ corresponding to $q_L$, $q_R$. If $q_L$ or $q_R$ is $(+\infty, +\infty, +\infty)$, then node is marked
	    \ENDWHILE
	\ENDFOR
	\RETURN $edges$
	%=========================
        	\end{algorithmic}
        \end{algorithm}
    \end{minipage}
     \centering
    \begin{subfigure}[b]{2in}
        \centering
        %\addtocounter{subfigure}{-3}
        \includegraphics[width=2in]{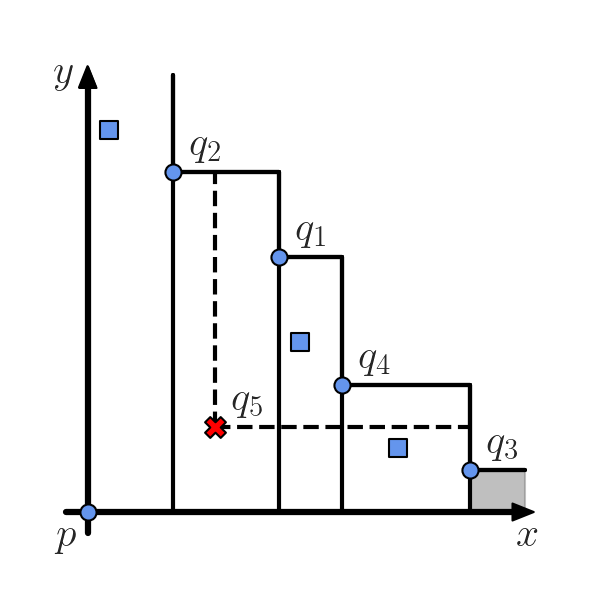}
        \caption{}
        \label{fig:staircase-3}
    \end{subfigure}
    \qquad
    \begin{subfigure}[b]{2in}
        \centering
        \includegraphics[width=2in]{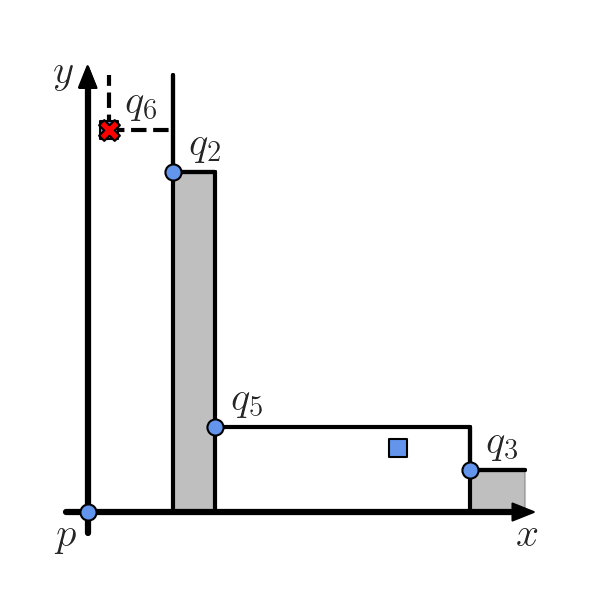}
        \caption{}
        \label{fig:staircase-4}
    \end{subfigure}
    \caption{Two iterations of the inner loop of Algorithm \ref{alg:3D-priority-range}. 
    The points stored in the first two coordinates of the nodes of the priority search tree $P$ are illustrated as squares.
    \textbf{(a)} The rectangular region containing $q_5$ corresponds to the root of $P$, so $\{p, q_5\}$ is a direct dominance pair. The nodes corresponding to the ``vertical" rectangular regions between $q_2$ and $q_1$, $q_1$ and $q_4$, $q_4$ and $q_3$ are deleted from $P$. Two new nodes are inserted in $P$, corresponding to the regions between $q_2$ and $q_5$, and $q_5$ and $q_3$. The first of these two nodes is marked, because the ``vertical" rectangle between $q_2$ and $q_5$ is empty.
    \textbf{(b)} The new root of $P$ contains $q_6$, so $\{p, q_6\}$ is a direct dominance pair.
    Two new nodes are inserted in $P$, and both are marked.
    }
    \label{fig:algorithm-minibox-3D-2}
\end{figure}
%\afterpage{\clearpage}
%---
It follows that at each iteration the root of $P$ contains the point $q$ with minimum $z$ coordinate among the points whose projection lies in one of these ``vertical'' rectangles, by the $\min$-heap property of $P$.
Thus, $\{p, q\}$ is a direct dominance pair and can be reported.

%---
To conclude, the inner loop on lines $9-17$ of Algorithm \ref{alg:3D-priority-range} finds the same direct dominance pairs of the inner loop of Algorithm \ref{alg:3D-stair}, because of the correspondence between ``vertical'' rectangles in $P$ and the area under the staircase on $Q'_1$.
Moreover, it correctly stops when all nodes are marked, and so all ``vertical" rectangles are empty, because if the root of $P$ has $q_z=+\infty$, then $q_z=+\infty$ for all the nodes in $P$ by its $\min$-heap property.

% Figure: Minibox edge algorithm 2 in 3D

%--- complexity
Note that for any $p \in S$ at most $2k''- 1$ nodes can be inserted in $P$ (this worst case is realised only if $Q'_{xy} = Q'_1$), where $k''$ is the number of direct dominance pairs of $S$ containing $p$.
Similarly, the number of delete operations on $P$ is $O(k'')$.
Besides, the number of $\min$-$z$ queries on $R$, of which there are two for each iteration of the inner loop, is also $O(k'')$.
Thus, Algorithm \ref{alg:3D-priority-range} has $O((n+k')\log^2(n))$ time complexity, where $k'$ is the number of direct dominance pairs of $S$, because the operations on $P$ take $O(\log(n))$ time and the $\min$-$z$ queries on $R$ take $O(\log^2(n))$ time.
Finally, the space complexity of Algorithm \ref{alg:3D-priority-range} is $O(n\log^2(n))$, because $P$ takes at most $O(n)$ space, and $R$ takes $O(n\log^2(n))$ space.

% Paragraph: higher dimensions
\paragraph{Points in higher dimensions}
For points in general dimension $d \geq 4$, we propose different strategies, using a decreasing amount of additional storage, to test whether $\minipq \cap S$ is empty for each pair of points in $S$.

For instance, high-dimensional range trees with fractional cascading \cite[Section 5.6]{de2010computational} can be used to answer orthogonal range emptiness queries in $O(\log^{d-1}(n))$ time, at the additional cost of $O(n \log^{d-1}(n))$ storage. 
By testing all $\frac{n(n-1)}{2}$ pairs of points in $S$, we have a $O(n^2 \log^{d-1}(n))$ algorithm.
Similarly, $kd$-trees \cite[Section 5.2]{de2010computational} can be used to answer the same query in $O(n^{1-\frac{1}{d}})$ time, only taking $O(n)$ additional storage, resulting in a $O(n^{3-\frac{1}{d}})$ algorithm for finding all the edges contained in any Minibox complex. 
Furthermore, we note that by the curse of dimensionality, if $d$ becomes too big it might be faster to test each of the $\frac{n(n-1)}{2}$ pairs of points in $S$ via a brute force strategy, searching all points in $S$ sequentially, which takes $O(n)$ time. This results in a $O(d n^3)$ total time algorithm, but does not require storing any additional data structure.
The choice among these options depends on the amount of memory that can be spared for storing additional data structures, as well as the dimension $d$.
Moreover, we note that each of the above strategies could take advantage of parallel implementations using the independence of tests on each pair of points in $S$.

Finally, we also mention that in the Word RAM model of computation the offline orthogonal range counting algorithm of \cite{chan2010counting} can be used to find all empty miniboxes on $S$ in constant dimension $d \geq 3$ in $O(n^2 \log^{d-2+\frac{1}{d}}(n))$. 
However, as remarked in \cite{chan2010counting}, for this algorithm to be applicable to floating-point numbers one needs to assume that the word size is at least as large as both $\log(n)$ and the maximum size of an input number.

%=============================
% Comutational Experiments
\section{Computational Experiments}
\label{sec:experiments}
 Here we present various computational experiments involving $\linf$-Delaunay and Minibox edges, and the derived complexes.
First, we study the expected number of $\linf$-Delaunay and Minibox edges on randomly sampled points, as well as the size of Minibox filtrations.
We then investigate the speed up obtained by using Minibox filtrations in the calculation of \v{C}ech persistence diagrams in homological degrees zero and one.
Finally, we give examples illustrating the similarities and dissimilarities in homological degree two of persistence diagrams of Alpha flag, Minibox, and \v{C}ech filtrations.

%---
In order to apply the algorithms presented in the previous section to a point set $S \subseteq (\R^d, \din)$, we have to assume that $S$ does not contain collinear points.
This can be enforced by applying a small perturbation to the coordinates of points of $S$, as described at the beginning of Section \ref{sec:algorithms}.
The persistence diagrams obtained from the infinitesimally perturbed points are also only infinitesimally perturbed. 
In practice, this simple strategy may not be sufficient if the number of points $n$ becomes too large, as the precision of floating-point numbers is limited. In general, the algorithm only requires that each coordinate induces a total ordering on the points, which can be done by arbitrarily breaking ties (e.g. by some  lexicographical ordering of the points). 
%
%In such cases, a more general approach like Simulation of Simplicity \cite{edelsbrunner1990simulation-of-simplicity} needs to be employed.
%
It should be noted that, the random perturbation method was sufficient to avoid collinear points for the experiments presented in this section, and was thus preferred to more complex approaches.
Besides, no other general position assumption needs to be imposed on the points of $S$. 
This is a consequence of the fact that the Nerve Theorem, Theorem \ref{thm:main}, and Theorem \ref{thm:alpha-flag-and-minibox} do not require $S$ to be in general position.

%---
All computations were run on a laptop with Intel Core i7-9750H CPU with six physical cores clocked at 2.60GHz with 16GB of RAM.

% Paragraph: size minibox filtrations
\paragraph{Expected number of edges and size of filtrations}
First, we give plots of the expected numbers of $\linf$-Delaunay and Minibox edges.
Secondly, we study the expected size of Minibox filtrations versus the size of \v{C}ech filtrations.

%---
To begin with, Figure \ref{fig:compare-number-minibox-delaunay-edges} illustrates the difference in expected numbers of $\linf$-Delaunay and Minibox edges for points in $\R^2$.
These are empirical estimates, which were obtained by averaging over the numbers of edges found on five different randomly sampled point sets.
Moreover, the function $f(n) = 2 n \ln(n)$ is plotted along with these estimates, which shows how the bound of Proposition \ref{prop:expected-edges-minibox} compares to the numbers of expected Minibox edges in practice. %
Then, Figure \ref{fig:number-minibox-edges} shows estimates of expected numbers of Minibox edges in different dimensions.

%---
These were obtained with the same method as before.
It should be noted that, computations for $\linf$-Delaunay edges are limited to two-dimensional points, because this is the only setting where efficient algorithms are available for their computation \cite{shute1991planesweep}.

% Figure: plots expected number of edges
\begin{figure}[!tp]
    \centering
        \begin{subfigure}[b]{2.5in}
        \centering
        \includegraphics[width=2.5in]{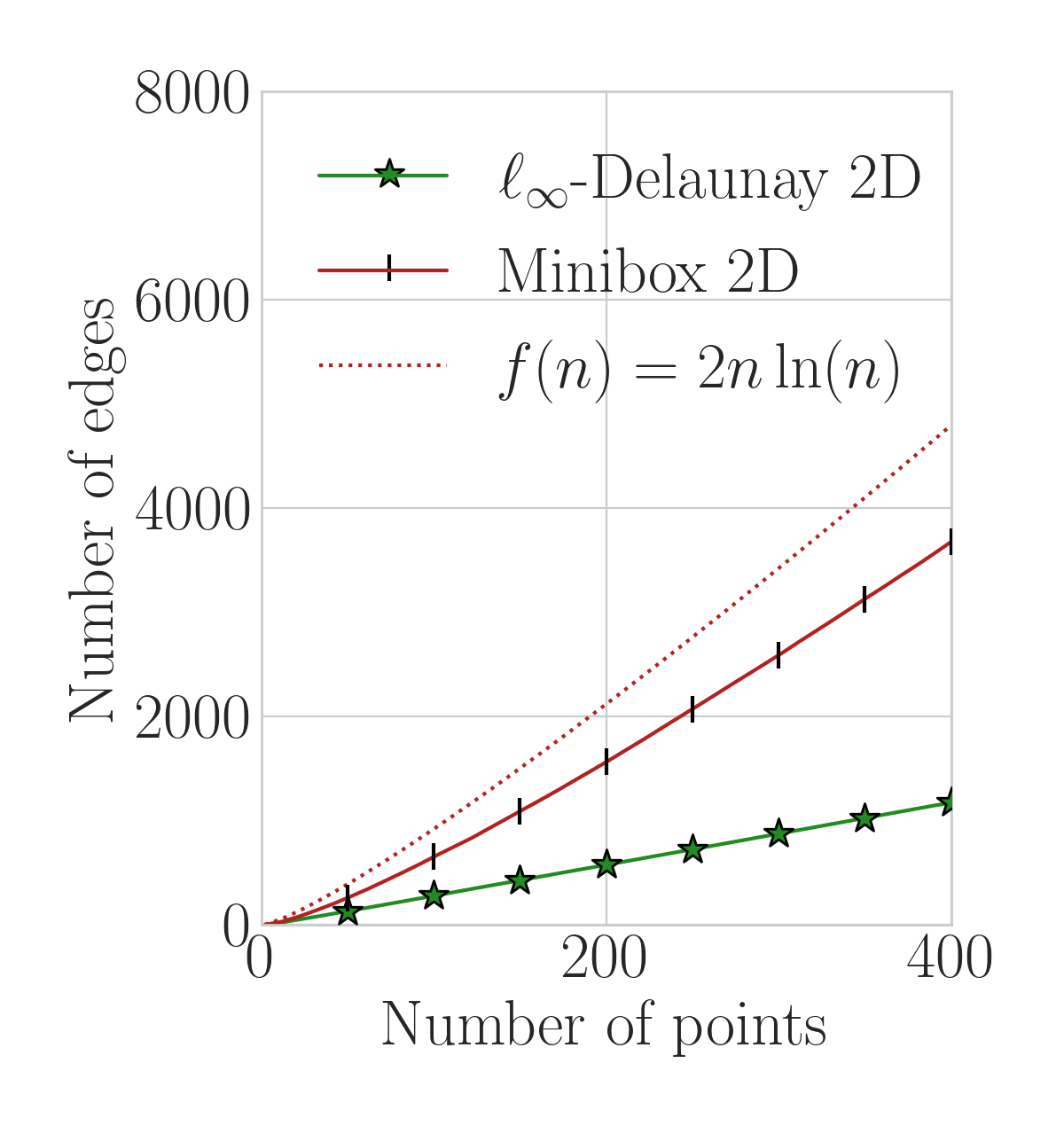}
        \caption{}
        \label{fig:compare-number-minibox-delaunay-edges}
    \end{subfigure}
    \qquad
    \begin{subfigure}[b]{2.5in}
        \centering
        \includegraphics[width=2.5in]{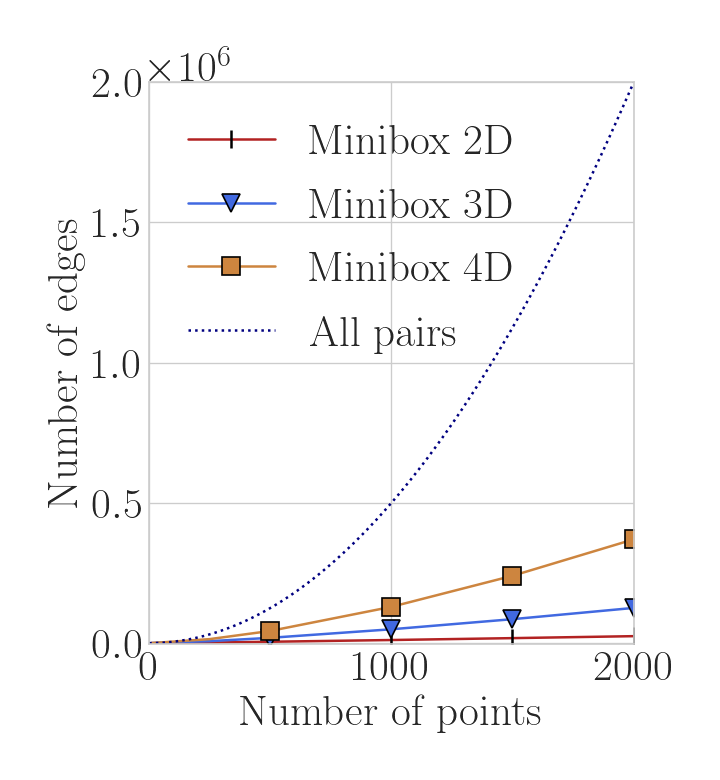}
        \caption{}
        \label{fig:number-minibox-edges}
    \end{subfigure}
    \caption{\textbf{(a)} Empirical estimates of expected numbers of $\linf$-Delaunay and Minibox edges for points in $\R^2$. These are compared to the upper bound of the expected number of Minibox edges. 
    \textbf{(b)} Empirical estimates of the expected numbers of Minibox edges for dimension up to $d=4$. These are compared to the number of \v{C}ech edges, i.e. all pairs of points.}
    \label{fig:expected-number-edges}
\end{figure}
%
% Table: size Minibox filtration
\begin{table}[!tp]
 \caption{Average number of simplices contained in the Minibox and \v{C}ech filtrations for different input sizes.}
 \label{table:size-minibox-filtration}
  \centering
  \vspace{0.3cm}
  \begin{tabular}{| l | c | c | c | c |}
  \cline{2-5}
      \multicolumn{1}{ l| }{}
      & n = 500 & n = 1000 & n = 1500 & n = 2000  \\
  \hline
  \rule{0pt}{11pt}
    Minibox 2D & 
      $0.01\times 10^6$ & $0.03\times 10^6$  & $0.05\times 10^6$  & $0.07\times 10^6$   \\
  \hline
  \rule{0pt}{11pt}
    Minibox 3D  & 
      $0.17\times 10^6$ & $0.50\times 10^6$  & $0.91\times 10^6$  & $1.38\times 10^6$   \\
  \hline
  \rule{0pt}{11pt}
    Minibox 4D  & 
      $1.19\times 10^6$ & $4.50\times 10^6$  & $9.41\times 10^6$  & $15.65\times 10^6$   \\
  \hline
  \rule{0pt}{11pt}
    \v{C}ech  & 
      $20.83\times 10^6$ & $166.67\times 10^6$  & $562.50\times 10^6$  & $1333.34\times 10^6$   \\
  \hline
  \end{tabular}
\end{table}
%---
Next, we investigate the expected number of simplices contained in Minibox filtrations.
Note that our filtrations only contain vertices, edges, and triangles, because we compute persistence diagrams in homological degrees zero and one.
Thus, \v{C}ech filtrations contain $\Theta(n^3)$ simplices.
In comparison, given the edges in the maximal Minibox complex of $S$, the clique triangles on these can be found in $O(n k^2)$ time, where $k$ is the maximum degree of any point in $S$, i.e. the maximum number of Minibox edges a point is contained in.
Moreover, $O(n k^2)$ is also an upper bound on the number of possible Minibox triangles, and by Proposition \ref{prop:expected-edges-minibox} it follows that the expected value of $k$ for a uniformly distributed finite set of random points is $O\big(2^{d-1} \ln^{d-1}(n) \big)$.
Hence, we expect the Minibox filtration of $S$ to contain less simplices compared to the \v{C}ech filtration.
We give empirical evidence of this by calculating the expected number of Minibox simplices for $500$, $1000$, $1500$, and $2000$ uniformly distributed random points, averaging over five runs.
Table \ref{table:size-minibox-filtration} presents our results for Minibox filtrations in two, three and four dimensions. The number of simplices contained in the corresponding \v{C}ech filtrations are listed for comparison.

% Table: Minibox timings
\begin{table}[!tp]
 \caption{Timing (seconds) and memory usage (MB) with Minibox filtrations of points in $\R^2$.}
 \label{table:mini-2d-timing}
  \centering
  \begin{tabular}{| l | c | c | c | c | c | c| c |}
    \cline{2-8}
        \multicolumn{1}{ l| }{}
         & n = 500 & n = 1000 & n = 2000
        & n = 4000 & n = 8000 & n = 16000 & n = 32000
    \\
    \hline
    Edges time
        & 0.008 & 0.016 & 0.047
        & 0.117 & 0.289 & 0.891 & 2.852 \\
    \hline
    Sparse matrix time
        & 0.023 & 0.070 & 0.141
        & 0.312 & 0.734 & 1.562 & 3.406 \\
    \hline
    Dgm$_{0,1}$ time
        & 0.008 & 0.016 & 0.031
        & 0.078 & 0.172 & 0.477 & 1.148 \\
    \hline
    \hline
    Total time 
        & 0.039 & 0.102 & 0.219
        &  0.507 & 1.195 & 2.929 & 7.406 \\
    \hline
    \hline
    Peak memory usage 
        & 2.92 & 5.52 & 11.51
        & 25.15 & 53.50 & 112.07 &  246.28 \\
    \hline
    \end{tabular}
    \bigskip
    % Table: Minibox 3D timing
    \caption{Timing (seconds) and memory usage (MB) with Minibox filtrations of points in $\R^3$.}
 \label{table:mini-3d-timing}
  \centering
  \begin{tabular}{|l | c | c | c | c | c | c| c |}
    \cline{2-8}
        \multicolumn{1}{ l| }{}
         & n = 500 & n = 1000 & n = 2000
        & n = 4000 & n = 8000 & n = 16000 & n = 32000
    \\
    \hline
    Edges time
        & 0.062 & 0.188 & 0.586
        & 2.047 & 7.500 & 27.898 & 110.641 \\
    \hline
    Sparse matrix time
        & 0.117 & 0.281 & 0.742
        & 1.836 & 4.609 & 11.289 & 26.555 \\
    \hline
    Dgm$_{0,1}$ time
        & 0.016 & 0.055 & 0.211
        & 0.547 & 1.664 & 4.516 & 12.336 \\
    \hline
    \hline
    Total time 
        & 0.195 & 0.523 & 1.539
        & 4.429 & 13.773 & 43.703 & 149.531 \\
    \hline
    \hline
    Peak memory usage 
        & 9.22 & 21.87 & 54.91
        & 137.25 & 329.25 & 770.80 & 1848.01 \\
    \hline
    \end{tabular}
    \bigskip
    % Table: Minibox 4D timing
    \caption{Timing (seconds) and memory usage (MB) with Minibox filtrations of points in $\R^4$.}
 \label{table:mini-4d-timing}
  \centering
  \begin{tabular}{|l | c | c | c | c | c | c| c |}
    \cline{2-8}
        \multicolumn{1}{ l| }{}
         & n = 500 & n = 1000 & n = 2000
        & n = 4000 & n = 8000 & n = 16000 & n = 32000
    \\
    \hline
    Edges time
        & 0.273 & 1.648 & 9.430
        & 54.164 & 307.078 & 1657.852 & 8866.555 \\
    \hline
    Sparse matrix time
        & 0.258 & 0.727 & 2.055
        & 6.250 & 15.680 & 43.516 & 107.773 \\
    \hline
    Dgm$_{0,1}$ time
        & 0.070 & 0.227 & 0.797
        & 2.539 & 9.320 & 27.016 & 107.273 \\
    \hline
    \hline
    Total time 
        & 0.601 & 2.601 & 12.281
        & 62.953 & 332.078 & 1728.383 & 9081.601 \\
    \hline
    \hline
    Peak memory usage 
        & 19.194 & 51.18 & 155.44
        & 410.41 & 1122.84 & 2841.05 & 7960.18 \\
    \hline
    \end{tabular}
    \bigskip
     \caption{Timing (seconds) and memory usage (MB) with \v{C}ech filtrations of points in $\R^2$.}
    \label{table:cech-2d-timing}
  \centering
  \begin{tabular}{|l | c | c | c | c | c |}
    \cline{2-6}
        \multicolumn{1}{ l| }{}
         & n = 500 & n = 1000 & n = 2000
        & n = 4000 & n = 8000
    \\
    \hline
    Sparse matrix time
        & 0.656 & 2.758 & 11.047
        & 44.789 & 178.727  \\
    \hline
    Dgm$_{0,1}$ time
        & 0.133 & 0.602 & 2.958
        & 13.312 & 66.219  \\
    \hline
    \hline
    Total time 
        & 0.789 & 3.359 & 14.005
        & 58.101 & 244.945  \\
    \hline
    \hline
    Peak memory usage 
        & 42.05 & 151.14 & 614.13
        & 2532.38 & 10340.73  \\
    \hline
    \end{tabular}
\end{table}
% Paragraph: Running time
\paragraph{Running Time and Memory Usage}
We explore the use of Minibox filtrations for the computation of \v{C}ech persistence diagrams of $S \subseteq (\R^d, \din)$ in homological degrees zero and one.
We list our results in Tables \ref{table:mini-2d-timing}, \ref{table:mini-3d-timing}, 
\ref{table:mini-4d-timing}, and \ref{table:cech-2d-timing}, where columns correspond to different sizes of the input points set $S$, and times are given in seconds.
We also report the average total peak memory use in megabytes.\footnote{In Windows this was measured using the Win32 function \texttt{GetProcessMemoryInfo()} to obtain the \texttt{PeakWorkingSetSize} memory attribute of the Python process building sparse matrices and computing persistence diagrams.}
It should be noted that these results were obtained with the implementations of the Minibox edge algorithms provided by the \href{https://pypi.org/project/persty/}{\texttt{persty}} Python packages.
For the computation of persistence diagrams we use of the \href{https://ripser.scikit-tda.org/}{Ripser.py} package, which provides a Python interface to \href{https://github.com/ripser/ripser}{Ripser} \cite{bauer2019ripser} C++ code. 
% Figure: diagrams of different filtrations
\begin{figure}[tb]
    \centering
        \begin{subfigure}[b]{2in}
            \centering
            \includegraphics[width=2in]{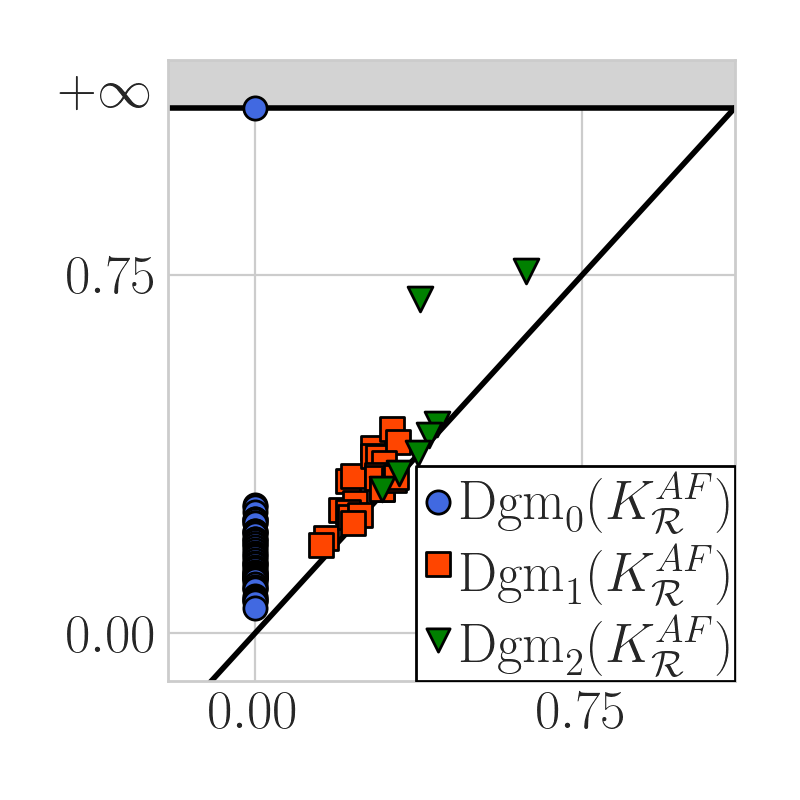}
            \caption{Alpha flag diagrams of $S_1$.}
            \label{fig:dgms-AC1}
        \end{subfigure}
        \begin{subfigure}[b]{2in}
            \centering
            \includegraphics[width=2in]{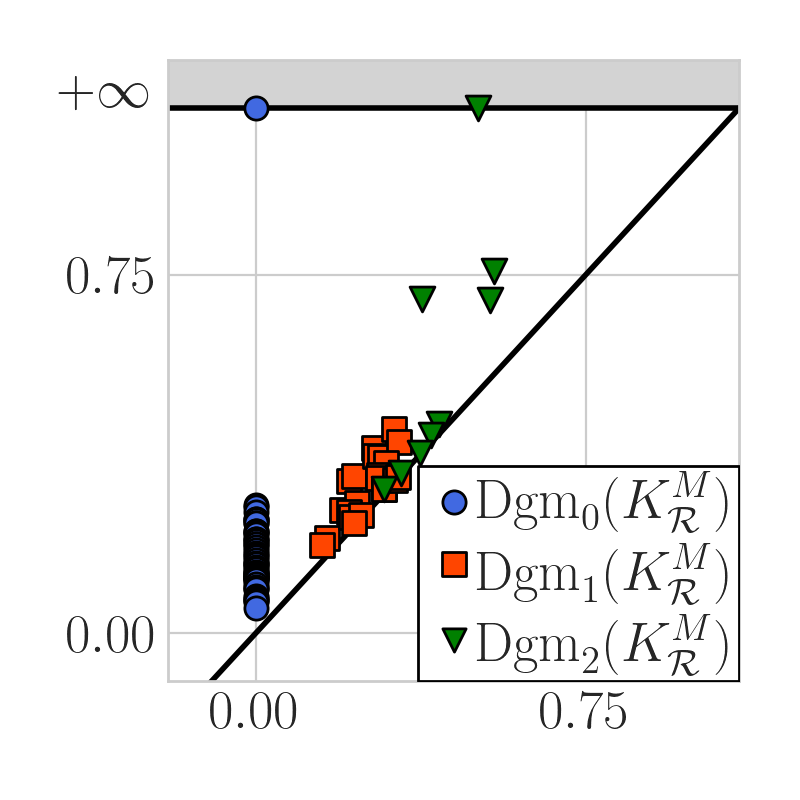}
            \caption{Minibox diagrams of $S_1$.}
            \label{fig:dgms-M1}
        \end{subfigure}
        \begin{subfigure}[b]{2in}
            \centering
            \includegraphics[width=2in]{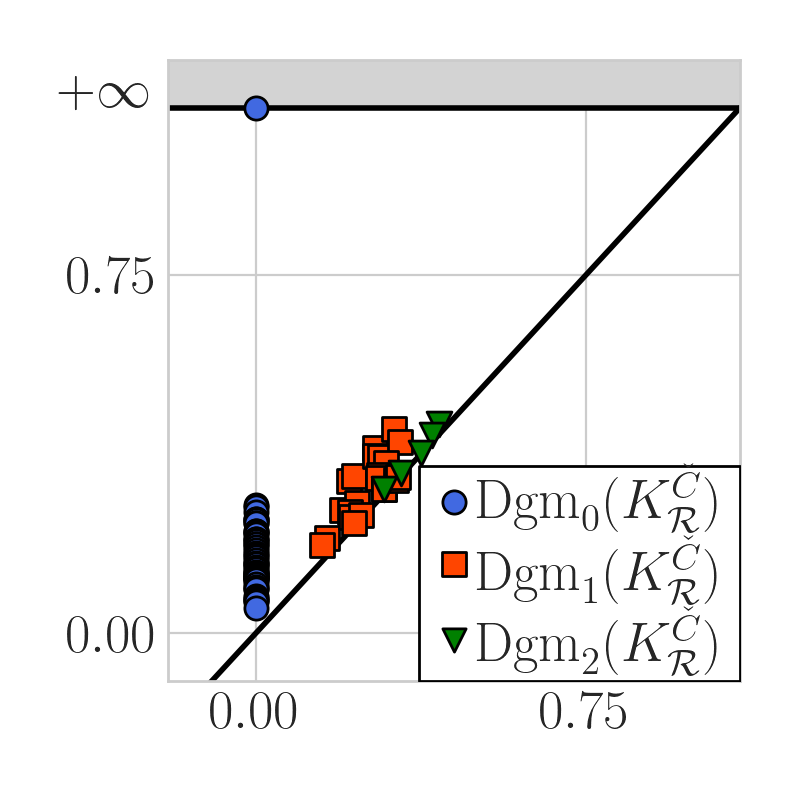}
            \caption{\v{C}ech diagrams of $S_1$.}
            \label{fig:dgms-C1}
        \end{subfigure}
        \\
        \begin{subfigure}[b]{2in}
            \centering
            \includegraphics[width=2in]{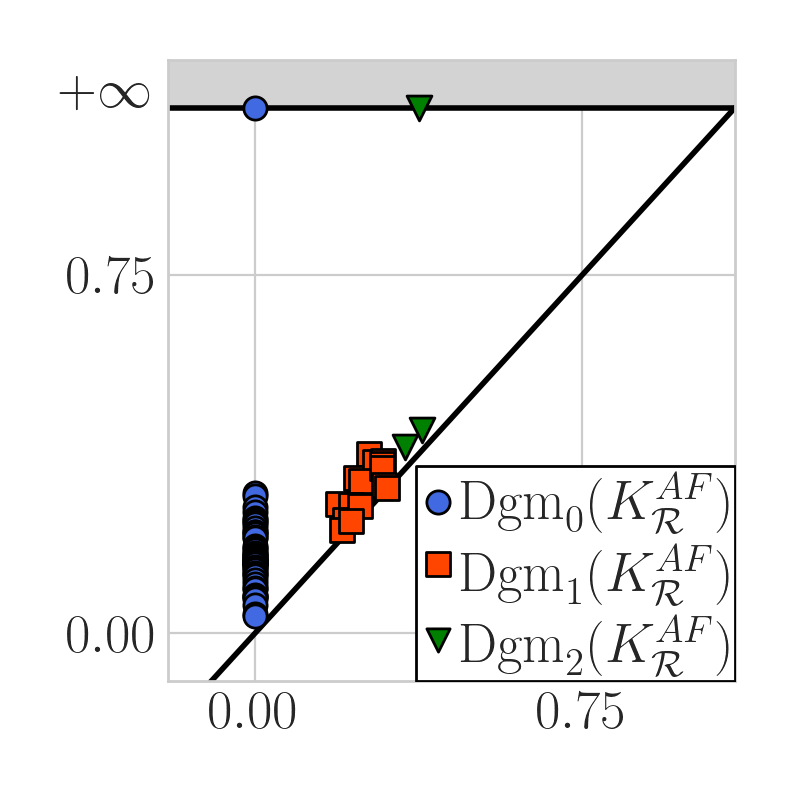}
            \caption{Alpha flag diagrams of $S_2$.}
            \label{fig:dgms-AC2}
        \end{subfigure}
        \begin{subfigure}[b]{2in}
            \centering
            \includegraphics[width=2in]{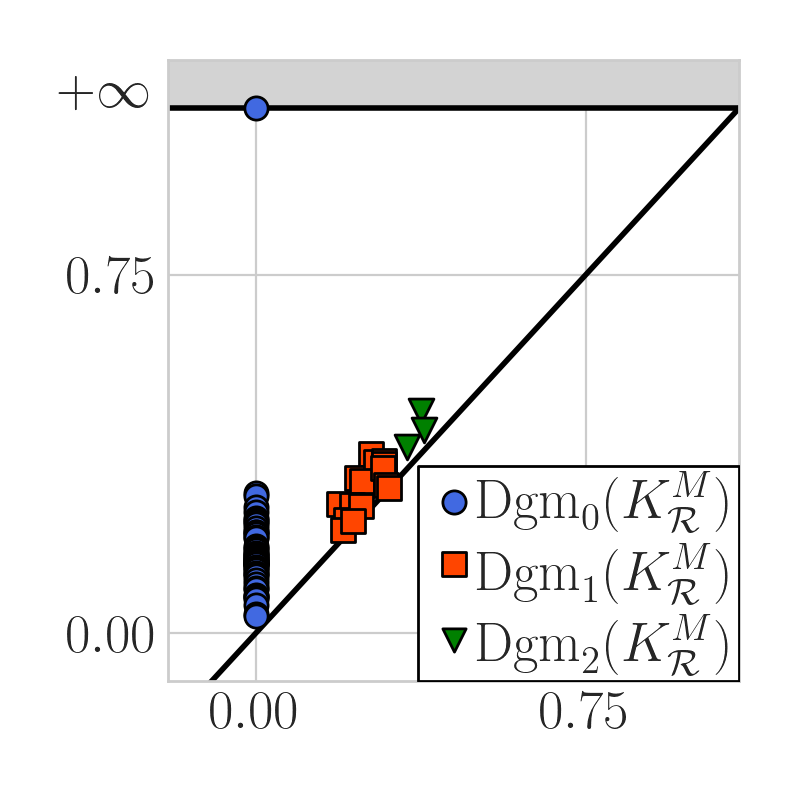}
            \caption{Minibox diagrams of $S_2$.}
            \label{fig:dgms-M2}
        \end{subfigure}
        \begin{subfigure}[b]{2in}
            \centering
            \includegraphics[width=2in]{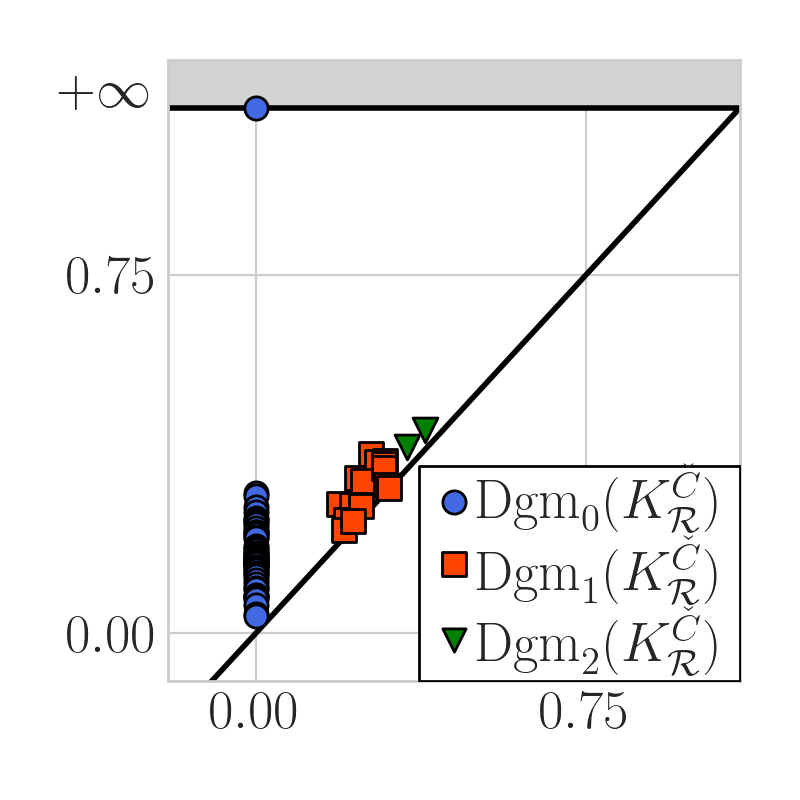}
            \caption{\v{C}ech diagrams of $S_2$.}
            \label{fig:dgms-C2}
        \end{subfigure}
    \caption{Persistence diagrams of finite sets of three-dimensional points in $\ell_{\infty}$ metric space. Each row contains the diagrams of a different finite point set. These empirically show the equality of diagrams in degrees zero and one, and illustrate the possible differences between diagrams of Alpha flag, Minibox, and \v{C}ech filtrations in homological degree two.}
    \label{fig:diagrams}
\end{figure}
%---
In particular, we think of Minibox filtrations as of sparse filtrations, and feed into the persistent homology algorithm a precomputed sparse matrix in coordinate format.
Moreover, the same approach is used to compute \v{C}ech persistence diagrams, because \v{C}ech and Vietoris-Rips filtrations coincide by Proposition \ref{prop:equality-cech-and-rips}.
We give timing and memory usage results for points in the range $[500, 32000]$ for Minibox filtrations, averaging over five runs.
In the case of \v{C}ech filtrations, we limit our experiments to a maximum of $8000$ points, because of memory constraints.
Moreover, we consider only points in $\R^2$, as results are similar in higher dimensions.

%---
In all the experiments, the reduced number of simplices of Minibox filtrations results in a substantial improvement in memory usage over \v{C}ech filtrations, and in a speed up in the computation of Dgm$_{0}$ and Dgm$_1$.
This allows to increase the maximum size of inputs of the persistence algorithm, given a fixed amount of available memory. The price is having to precompute Minibox edges. 
We note that this computation could also take advantage of implementations parallelizing the inner loops of the algorithms of Section \ref{sec:algorithms}.

%---
Finally, it is worth mentioning that we would expect an even greater improvement in run time and memory usage if Alpha flag filtrations were used in place of Minibox filtrations.
We do not present such experiments here, because we only described efficient algorithms for the computation of Minibox edges in Section \ref{sec:algorithms}.

% Paragraph: Differences in higher-degree diagrams
\paragraph{Differences of persistence diagrams in homological degree two}
We present two examples of Alpha flag, Minibox, and \v{C}ech persistence diagrams, obtained from distinct $S_1, S_2 \subseteq (\mathbb{R}^d, \din)$.
These finite point sets were obtained by randomly sampling fifty points in $[0, 1]^3 \subseteq \R^3$.
The persistence diagrams were calculated with \href{https://ripser.scikit-tda.org/}{Ripser.py} passing in the appropriate space matrix.
For the Alpha flag case the edges belonging to the $\linf$-Delaunay complex of $S_1$ and $S_2$ were computed with a brute force strategy using the result of Proposition \ref{prop:delaunay-edge}, i.e. checking if $\A^{\bar{r}}$ is covered by $\bigcup_{y\in S \setminus e} B_{\bar{r}}(y)$ for each pair $p, q \in S$.

%---
The first row in Figure \ref{fig:diagrams} contains the diagrams of $S_1$. In this case $\textrm{Dgm}_2(\Fmini)$ contains a point at infinity, while $\textrm{Dgm}_2(\Fflag)$ does not. Furthermore, both contain additional off-diagonal points, which do not coincide.
In the second row of Figure \ref{fig:diagrams}, we have the diagrams of $S_2$.
In this case it is $\textrm{Dgm}_2(\Fflag)$ that contains a point at infinity, while $\textrm{Dgm}_2(\Fmini)$ only has an additional off-diagonal point.
This shows that it is possible to obtain Alpha flag and Minibox diagrams with off-diagonal points not contained in the corresponding \v{C}ech diagrams in homological degrees higher than one. 
Furthermore, $\textrm{Dgm}_2(\Fflag)$ and $\textrm{Dgm}_2(\Fmini)$ are generally different, and are not one a subset of the other.

%=============================
% Conclusion
\section{Discussion}
\label{sec:discussion}
In this paper we prove that Alpha and \v{C}ech filtrations are equivalent for point sets in $(\R^2, \din)$, and show a counterexample to this equivalence for three-dimensional point sets.
We also introduce two new types of proximity filtrations: the Alpha flag and Minibox filtrations.
We are able to prove that both of these produce the same persistence diagrams of \v{C}ech filtrations in homological degrees zero and one.
%---
Furthermore, we describe algorithms for finding Minibox edges.
In particular, we give two new algorithms for the three-dimensional case.
These improve over known results for finding direct dominance pairs.
We present a $O(n\log^2(n))$ time and $O(n)$ space algorithm, and a $O(k\log^2(n))$ time and $O(n\log^2(n))$ space algorithm, where $k$ is the number of Minibox edges of $S$.
In two dimensions, this reduces to rectangular visibility in the plane, whereas higher dimensions require different  structures for range queries.
%---
Then, we prove that for randomly sampled points the expected number of Minibox edges is proportional to $n \cdot \textrm{polylog}(n)$. For the Euclidean metric, the expected size of a  Delaunay triangulation of random points is known to contain a linear number of simplices (in the number of vertices) \cite{dwyer1991higher}. We believe this should also be the case for $(\R^d, \din)$, but we plan to address this in future work. 
Therefore, in many cases Minibox filtrations can be seen as a tool to drastically reduce the number of simplices to be considered in order to compute \v{C}ech persistence diagrams in homological degrees zero and one.
We also provide a number of computational experiments involving Minibox and \v{C}ech filtrations of randomly sampled points in two, three, and four-dimensional space.
These show that the reduced number of simplices contained in Minibox filtrations results in a speed up of persistent homology computations, as well as in less memory being used for the same number of points.

% Figure: diagrams of different filtrations
\begin{figure}[!tp]
    \centering
        \begin{subfigure}[b]{2in}
            \centering
            \includegraphics[width=2in]{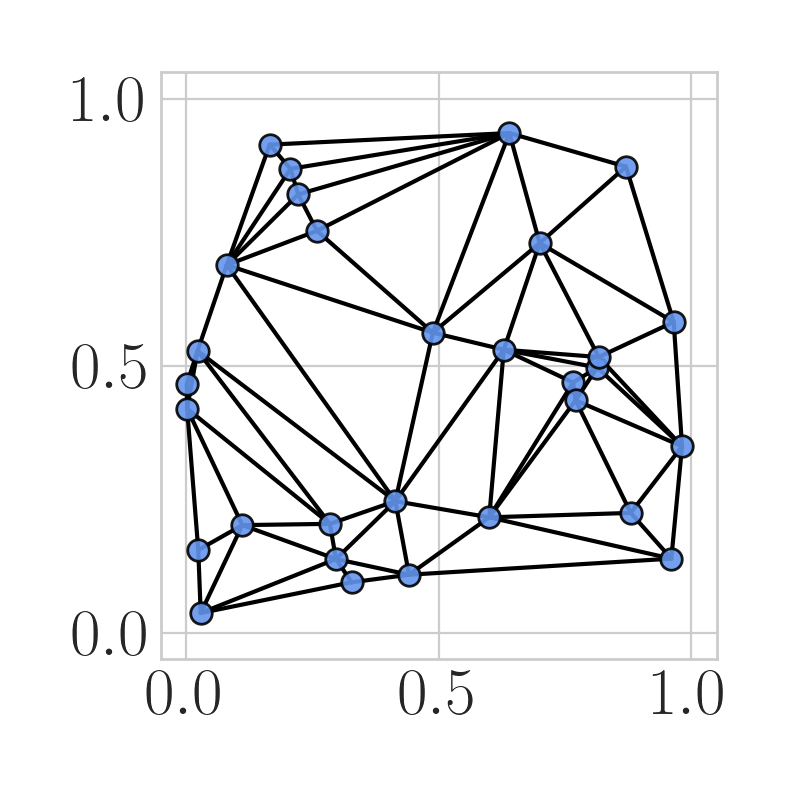}
            \caption{Alpha flag edges.}
            \label{fig:basic-delaunay-edges}
        \end{subfigure}
        \begin{subfigure}[b]{2in}
            \centering
            \includegraphics[width=2in]{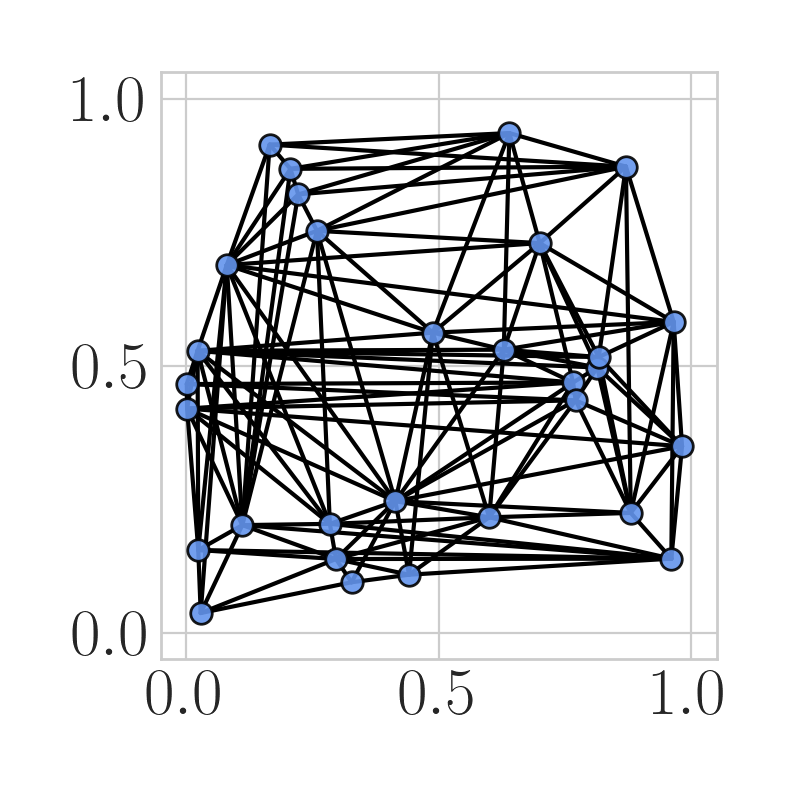}
            \caption{Minibox edges.}
            \label{fig:basic-minibox-edges}
        \end{subfigure}
        \begin{subfigure}[b]{2in}
            \centering
            \includegraphics[width=2in]{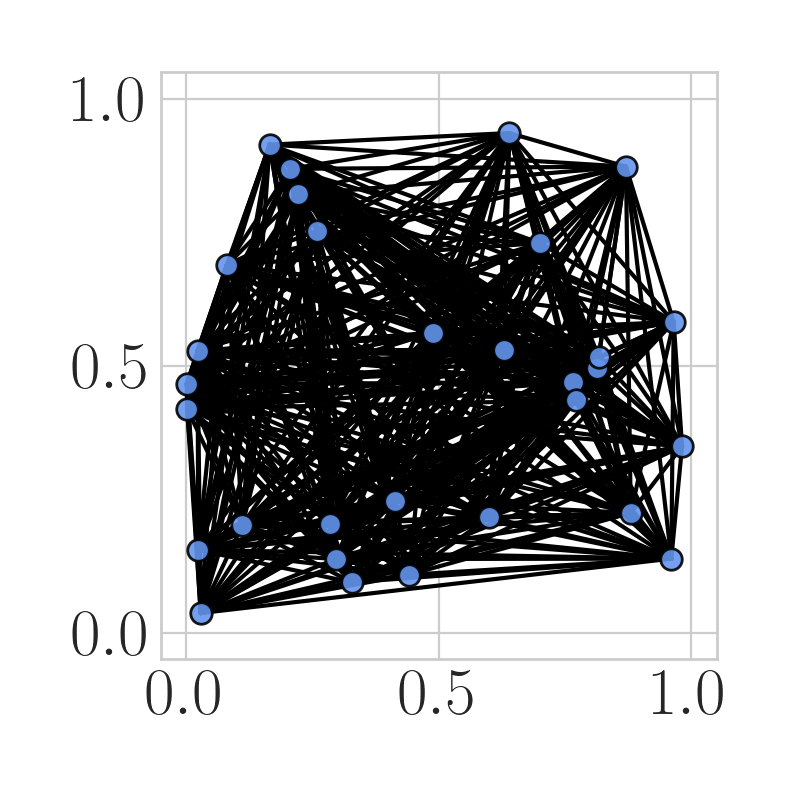}
            \caption{\v{C}ech edges. }
            \label{fig:basic-cech-edges}
        \end{subfigure}
    \caption{Comparison of Alpha flag (i.e. $\linf$-Delaunay), Minibox, and \v{C}ech edges of random points in $\R^2$.}
    \label{fig:trade-off-edges}
\end{figure}
%---
We observe that the trade-off between Alpha flag, Minibox, and \v{C}ech complexes, for the computation of persistence diagrams in homological degrees zero and one, depends on the time complexity of algorithms for finding their edges, as well as the expected number of edges and triangles these complexes contain.
It should be noted that \v{C}ech complexes only require to list their $\binom{n}{2}$ edges and $\binom{n}{3}$ triangles, which takes $\Theta(n^3)$.
In comparison, we have that for random points the expected number of Minibox edges is $\boundMiniEdges$ for any dimension $d$ by Proposition \ref{prop:expected-edges-minibox}.
So in many settings, we expect the Minibox filtration to contain less simplices than  \v{C}ech filtrations.
Moreover,    we present Minibox edges algorithms taking $O(n\log(n) + k)$ time and $O(k\log^2(n))$ time for point sets in $\R^2$ and $\R^3$ respectively, where $k$ is the number of edges reported.
In general dimension $d$, a brute force algorithm for finding Minibox edges  has $O(d n^3)$ time complexity, but could take advantage of parallel implementations.
Additionally, Alpha flag complexes are subcomplexes on Minibox complexes by Proposition \ref{prop:minibox-property}, see Figure \ref{fig:trade-off-edges}.
It follows that Alpha flag filtrations further reduce the number of simplices to be considered for the computation of \v{C}ech persistence diagrams.
Unfortunately, an efficient algorithm for $\linf$-Delaunay edges is known only in $\R^2$ \cite{shute1991planesweep}.
For higher-dimension $d$, there exists an algorithm for constructing the $\linf$-Voronoi diagram, taking $O\big(n^{\lceil \frac{d}{2} \rceil} \log^{d-1}(n)\big)$ randomized expected time \cite{boissonnat1998voronoi}.
Hence, a direction of future work could be the study of efficient $\linf$-Delaunay edges algorithms above dimension two.

%---
Finally, one issue with both Minibox and Alpha flag complexes is that they can only be used to compute persistence diagrams in homological degrees zero and one.
In particular, examples can be found where some of the points in $\textrm{Dgm}_2(\Fflag)$ and $\textrm{Dgm}_2(\Fmini)$ do not correspond to points in $\textrm{Dgm}_2(\Fcech)$. However, more persistent features seem to be captured by the complex, i.e. the errors consist of spurious rather than missing homological features.
Future research could focus on characterizing these and potentially introducing filtering steps, as well as investigating whether there exist alternative families of simplicial complexes for the computation of \v{C}ech persistence diagrams in homological degree two or higher.

%% New version of the num-names style
\bibliographystyle{elsarticle-num-names}
\bibliography{references.bib}

%===========================================

\clearpage

% Appendix
\appendix
\setcounter{section}{0}
\renewcommand*{\thesection}{\Alph{section}}
   
\section{\texorpdfstring{$\linf$-Delaunay Edges}{l∞-Delaunay Edges}}
\label{sec-app:delaunay-edges}
In this section, we provide a characterization of $\linf$-Delaunay edges of a finite set of points $S \subseteq (\R^d, \din)$.

%---
In Section \ref{sec:pre}, a box is defined as an axis-parallel hyperrectangle, i.e. the Cartesian product of $d$ intervals in $\R^d$.
Moreover, $\linf$-balls are boxes with sizes of length $2r$, and a finite set of $\linf$-balls has a non-empty intersection if and only if all pairwise intersections of $\linf$-balls are non-empty by Proposition \ref{prop:intersection-boxes} \emph{(ii)}.

%---
We start by recalling properties of $\varepsilon$-thickenings.
%
% Proposition: Properties epsilon-thickening
\begin{proposition}
\label{app-prop:properties-epsilon-thickening}
\begin{itemize}
    \item[\emph{(i)}] Let $B_1, B_2 \subseteq \mathbb{R}$ be two non-empty boxes. If $B_1\cap B_2 \neq \emptyset$, then $\varepsilon(B_1 \cap B_2) = \varepsilon(B_1) \cap \varepsilon(B_2)$.
    \item[\emph{(ii)}] Taking $\varepsilon$-thickenings preserves inclusions.
    \item[\emph{(iii)}] Let $\mathcal{A} = \{ A \}_{i\in I}$ be a finite collection of sets.
    The $\varepsilon$-thickening of the union of sets in $\mathcal{A}$ is equal to the union of the $\varepsilon$-thickenings of sets in $\mathcal{A}$.
\end{itemize}
\end{proposition}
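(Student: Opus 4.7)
The plan is to handle the three parts in the order (ii), (iii), (i), since (ii) is immediate from the definition and (i) is the technical core, where the non-empty intersection hypothesis matters.

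For (ii), I would start directly from the definition $\varepsilon(A) = \{x \in \R^d \mid \din(x,A) \leq \varepsilon\}$ and note that if $A \subseteq B$ then $\din(x,B) = \inf_{b \in B}\din(x,b) \leq \inf_{a \in A}\din(x,a) = \din(x,A)$ for every $x$, so $\din(x,A)\leq \varepsilon$ implies $\din(x,B)\leq \varepsilon$. This gives $\varepsilon(A)\subseteq \varepsilon(B)$.

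For (iii), the inclusion $\bigcup_{i\in I}\varepsilon(A_i) \subseteq \varepsilon\bigl(\bigcup_{i\in I} A_i\bigr)$ is immediate from (ii) applied to each $A_j\subseteq \bigcup_i A_i$. For the reverse, I would use the identity $\din\bigl(x, \bigcup_i A_i\bigr) = \inf_i \din(x, A_i)$, which holds since $\inf$ over a union equals the $\inf$ of $\inf$'s. With $I$ finite, if this common value is $\leq \varepsilon$ then some $\din(x, A_j) \leq \varepsilon$, so $x \in \varepsilon(A_j)$. (Finiteness is used to turn the $\inf$ over $I$ into a minimum; otherwise one would need compactness or closedness to avoid issues with strict inequality.)

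For (i) — the main obstacle — I would reduce to a one-dimensional statement using Proposition \ref{prop:intersection-boxes}(i). Writing $B_j = \prod_{i=1}^d I_i^j$, the $\varepsilon$-thickening in $\din$ is again a box obtained by thickening each interval by $\varepsilon$ on both sides, i.e.\ $\varepsilon(B_j) = \prod_{i=1}^d \varepsilon(I_i^j)$. Since Cartesian products commute with intersections,
\[
\varepsilon(B_1)\cap\varepsilon(B_2) = \prod_{i=1}^d \bigl(\varepsilon(I_i^1)\cap \varepsilon(I_i^2)\bigr),
\quad
\varepsilon(B_1\cap B_2) = \prod_{i=1}^d \varepsilon(I_i^1\cap I_i^2),
\]
so it suffices to verify $\varepsilon(I_1\cap I_2) = \varepsilon(I_1)\cap \varepsilon(I_2)$ for intervals $I_1,I_2\subseteq\R$ with $I_1\cap I_2\neq\emptyset$. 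Writing $I_j=[a_j,b_j]$ (endpoints possibly infinite), the hypothesis gives $\max(a_1,a_2)\leq \min(b_1,b_2)$, and a direct computation shows both sides equal $[\max(a_1,a_2)-\varepsilon,\,\min(b_1,b_2)+\varepsilon]$.

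The main subtlety is precisely the non-emptiness hypothesis in (i): without it one could have, say, $I_1=[0,1]$ and $I_2=[1+\delta,2]$ with $\delta<2\varepsilon$, in which case $\varepsilon(I_1\cap I_2)=\emptyset$ while $\varepsilon(I_1)\cap \varepsilon(I_2)\neq\emptyset$. Making this reduction clean requires only that \emph{some} pair $I_i^1\cap I_i^2$ be non-empty per coordinate, which is exactly what Proposition \ref{prop:intersection-boxes}(i) delivers from $B_1\cap B_2\neq\emptyset$. Once this is noted, the rest of (i) is bookkeeping, and no further geometric input is needed.
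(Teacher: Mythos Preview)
Your argument is correct in all three parts. The reduction in (i) to coordinate intervals via the identity $\varepsilon\bigl(\prod_i I_i\bigr)=\prod_i \varepsilon(I_i)$ in the $\linf$ metric is the right move, and your one-dimensional computation together with the counterexample showing the necessity of $B_1\cap B_2\neq\emptyset$ is clean. Parts (ii) and (iii) are straightforward and handled correctly; your remark that finiteness of $I$ is what turns the infimum over $i$ into a minimum is exactly the point.

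As for comparison with the paper: the paper does not actually prove this proposition. It is introduced with ``We start by recalling properties of $\varepsilon$-thickenings'' and stated without proof, treated as a standard fact. So your proposal is not merely consistent with the paper's approach---it supplies an argument the paper omits. One cosmetic note: the statement as printed has $B_1,B_2\subseteq\mathbb{R}$, which is evidently a typo for $\mathbb{R}^d$; you have (correctly) read and proved it in $\mathbb{R}^d$.
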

Next, we recall the definition of witness point given in Section \ref{sec:alpha}.
The idea is to restrict the bisector of a $\linf$-Delaunay simplex $\sigma$ to the points at minimal distance from the vertices of $\sigma$.
We use the properties of $\varepsilon$-thickenings and boxes to show that the set of witness points of a pair $\{p, q\} \subseteq S$ can be used to determine whether or not the pair is a $\linf$-Delaunay edge.
% Definition: witness points
\begin{definition}
\label{app-def:witness-points}
Let $S$ be a finite set of points in $(\mathbb{R}^d, \din)$.
A \emph{witness} point of $\sigma \subseteq S$ is a point $z$ such that $z \in \bisector{\sigma} = \bigcap_{p \in \sigma} V_p$ and $\din(z,p) = \frac{\diamInfty{\sigma}}{2}$ for each $p \in \sigma$.
We write $\Z{\sigma}$ for the \emph{set of witness points} of $\sigma$.
\end{definition}
The statement of the following result is given as Proposition \ref{prop:delaunay-edge} in the main paper.
%
% Proposition: characterization of edges
\begin{proposition}
\label{app-prop:delaunay-edge}
Let $S$ be a finite set of points in $(\R^d, \din)$.
Given a subset $e = \{p, q\} \subseteq S$, we define $\A^{\bar{r}} 
= 
\partial \cB{\bar{r}}{p} \cap \partial \cB{\bar{r}}{q}$, where $\bar{r} = \frac{\din(p,q)}{2}$.
We have that $\A^{\bar{r}} = \cB{\bar{r}}{p} \cap \cB{\bar{r}}{q}$ is a non-empty degenerate closed box.
Moreover, the set of witness points of $e$ is
$\Z{e} = \A^{\bar{r}} \setminus \big(\bigcup_{y\in S\setminus e} B_{\bar{r}}(y)\big)$, and $e$ belongs to the $\linf$-Delaunay complex of $S$ if and only if $\Z{e}$ is non-empty.
\end{proposition}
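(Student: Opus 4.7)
The plan is to address the three assertions in turn: first describe $\A^{\bar{r}}$ as a degenerate box, then translate the witness condition into the stated set-theoretic formula, and finally deduce the Delaunay characterization by using the $\linf$-nearest-point projection onto $\A^{\bar{r}}$.

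For the first assertion, since $\bar{r} = \din(p,q)/2$, there is a coordinate $i^\ast$ with $|p_{i^\ast} - q_{i^\ast}| = 2\bar{r}$, while $|p_i - q_i| \leq 2\bar{r}$ for all remaining $i$. Applying Proposition \ref{prop:intersection-boxes}(i), the intersection $\cB{\bar{r}}{p} \cap \cB{\bar{r}}{q}$ is the Cartesian product of the coordinate-wise interval intersections: a single point $\{(p_{i^\ast}+q_{i^\ast})/2\}$ in coordinate $i^\ast$, and a non-empty closed interval in every other coordinate. Hence the intersection is a non-empty degenerate closed box. Any point $z$ of this box satisfies $|z_{i^\ast} - p_{i^\ast}| = \bar{r}$ while $|z_i - p_i| \leq \bar{r}$ elsewhere, so $\din(z,p) = \bar{r}$, and the symmetric statement for $q$ shows $z \in \partial \cB{\bar{r}}{p} \cap \partial \cB{\bar{r}}{q} = \A^{\bar{r}}$. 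The reverse inclusion is immediate, so $\A^{\bar{r}} = \cB{\bar{r}}{p} \cap \cB{\bar{r}}{q}$.

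For the witness characterization, I will unpack the definition: since $\diamInfty{e} = 2\bar{r}$, the conditions $z \in V_p \cap V_q$ and $\din(z,p) = \din(z,q) = \bar{r}$ are equivalent to $z \in \A^{\bar{r}}$ together with $\din(z,y) \geq \bar{r}$ for every $y \in S \setminus e$, i.e. $z \notin B_{\bar{r}}(y)$. This yields the formula for $\Z{e}$. The backward direction of the Delaunay characterization is then immediate, since $\Z{e} \subseteq V_p \cap V_q$.

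The forward direction is the main obstacle. Assume $e \in K^D$ and pick $z^\ast \in V_p \cap V_q$; set $s = \din(z^\ast, p) = \din(z^\ast, q)$, where the bound $s \geq \bar{r}$ follows by applying the ordinary triangle inequality in coordinate $i^\ast$: $2\bar{r} = |p_{i^\ast} - q_{i^\ast}| \leq |z^\ast_{i^\ast} - p_{i^\ast}| + |z^\ast_{i^\ast} - q_{i^\ast}| \leq 2s$. If $s = \bar{r}$ we are done, so assume $s > \bar{r}$. The idea is to replace $z^\ast$ by its coordinate-wise $\linf$-projection $z'$ onto the box $\A^{\bar{r}} = \cB{\bar{r}}{p} \cap \cB{\bar{r}}{q}$. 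Since $z^\ast$ lies in the larger concentric box $\cB{s}{p} \cap \cB{s}{q}$, whose coordinate-wise half-widths exceed those of $\A^{\bar{r}}$ by exactly $s - \bar{r}$, an easy coordinate-by-coordinate calculation gives $\din(z^\ast, z') \leq s - \bar{r}$. Then, for every $y \in S \setminus e$, using $z^\ast \in V_p$ (so $\din(z^\ast, y) \geq s$), the triangle inequality yields $\din(z', y) \geq \din(z^\ast, y) - \din(z^\ast, z') \geq s - (s - \bar{r}) = \bar{r}$. Hence $z' \in \A^{\bar{r}} \setminus \bigcup_{y \in S \setminus e} B_{\bar{r}}(y) = \Z{e}$, completing the proof. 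The only delicate ingredient is the projection-distance bound, and it is precisely the concentricity of the two boxes $\A^{\bar{r}}$ and $\cB{s}{p} \cap \cB{s}{q}$ that makes the slack $s - \bar{r}$ exactly match what is consumed by the triangle inequality.
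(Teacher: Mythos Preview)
Your proof is correct, and the forward direction of the Delaunay characterization is handled by a genuinely different argument from the paper's. The paper proceeds by contradiction via $\varepsilon$-thickenings: assuming $\Z{e}=\emptyset$, it shows $\A^{\bar r}\subseteq \bigcup_{y}B_{\bar r}(y)$ and then uses the box identities $\varepsilon(\A^{\bar r})=\cB{\bar r+\varepsilon}{p}\cap\cB{\bar r+\varepsilon}{q}$ and $\varepsilon\big(\bigcup_y B_{\bar r}(y)\big)=\bigcup_y B_{\bar r+\varepsilon}(y)$ to conclude that $\A^{\bar r+\varepsilon}$ is covered for every $\varepsilon\ge 0$, so no bisector point can exist. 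Your argument is instead direct and constructive: starting from any $z^\ast\in V_p\cap V_q$ at distance $s\ge\bar r$, you take the coordinate-wise clamp $z'$ onto $\A^{\bar r}$, observe that the containing box $\cB{s}{p}\cap\cB{s}{q}$ has each interval enlarged by exactly $s-\bar r$ on either side, whence $\din(z^\ast,z')\le s-\bar r$, and finish with the triangle inequality. This avoids the $\varepsilon$-thickening machinery entirely and yields an explicit witness; the paper's route, on the other hand, reuses a lemma (Proposition~\ref{app-prop:properties-epsilon-thickening}) that is needed elsewhere, so its detour is amortized across the paper. Your treatment of the first assertion is also a bit more informative than the paper's bare triangle-inequality argument, since it exhibits the box structure of $\A^{\bar r}$ coordinate by coordinate.
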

%
%--- PROOF
\begin{proof}
% A_e^{\bar{r}} with triangular inequality
The intersection of boundaries $\A^{\bar{r}} = \partial \cB{\bar{r}}{p}  \cap \partial \cB{\bar{r}}{q}$ is included in $\cB{\bar{r}}{p} \cap \cB{\bar{r}}{q}$.
Moreover, any point $y \in \cB{\bar{r}}{p} \cap \cB{\bar{r}}{q}$ has to be at distance $\bar{r}$ from both $p$ and $q$.
Otherwise, if $\din(y, p) < \bar{r}$ or $\din(y, q) < \bar{r}$, then we obtain a contradiction with the triangular inequality, i.e. $\bar{r} + \bar{r} > \din(y, p) + \din(y, q) \geq \din(p, q) = 2 \bar{r}$.
Thus $\A^{\bar{r}} = \cB{\bar{r}}{p} \cap \cB{\bar{r}}{q}$, which is a non-empty degenerate box by definition of $\bar{r}$ and Proposition \ref{prop:intersection-boxes} \emph{(i)}.

%--- proof second statement
Next, we show that $e$ is a $\linf$-Delaunay edge if and only if $\Z{e}$ is non-empty.
First, note that $e = \{p, q\}$ is a $\linf$-Delaunay edge if and only if there exists $z \in V_p \cap V_q$.
This $z$ does not have to be a witness point. 
In particular, $z$ has to be at the same distance $\bar{r} + \varepsilon$ from $p$ and $q$ for some real value $\varepsilon \geq 0$.
Equivalently, given $\A^{\bar{r}+\varepsilon} = \partial \cB{\bar{r} + \varepsilon}{p} \cap \partial \cB{\bar{r} + \varepsilon}{q}$, we have $z\in \A^{\bar{r}+\varepsilon} \setminus \big( \bigcup_{y\in S\setminus e} B_{\bar{r}+\varepsilon}(y) \big)$,
because by definition of $\linf$-Voronoi region $z \in V_p \cap V_q$ cannot be at distance strictly less than $\bar{r}$ from points in $S \setminus e$.
Furthermore, $z \in \A^{\bar{r}+\varepsilon} \setminus \big( \bigcup_{y\in S\setminus e} B_{\bar{r}+\varepsilon}(y) \big)$ is a witness point if and only if $\varepsilon = 0$, and $\Z{e} = \A^{\bar{r}} \setminus \big( \bigcup_{y\in S\setminus e} B_{\bar{r}}(y) \big)$.

% Figure: property epsilon thickening
\begin{figure}[!tp]
    \centering
    \includegraphics[width=2.5in]{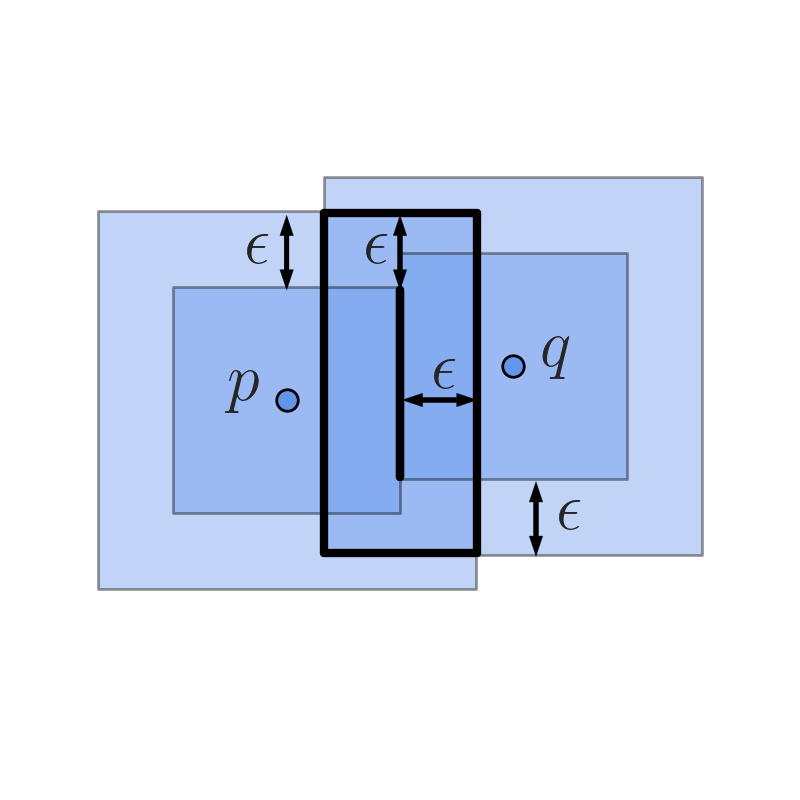}
    \caption{The $\varepsilon$-thickening of the non-empty intersection of two squares equals the intersection of the $\varepsilon$-thickenings of the squares.}
    \label{app-fig:epsilon-thickening}
\end{figure}
% Figure: counterexample epsilon thickening Euclidean
\begin{figure}[!tp]
    \centering
        \begin{subfigure}[b]{2in}
            \centering
            \includegraphics[width=2in]{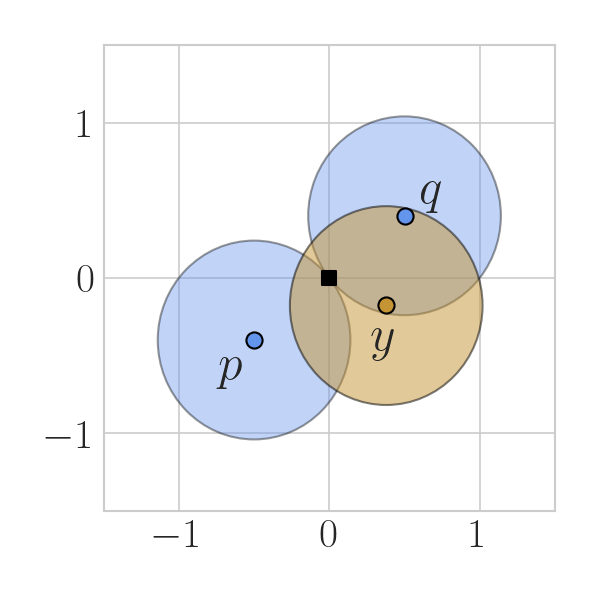}
            \caption{}
            \label{fig:witness1}
        \end{subfigure}
        \qquad
        \begin{subfigure}[b]{2in}
            \centering
            \includegraphics[width=2in]{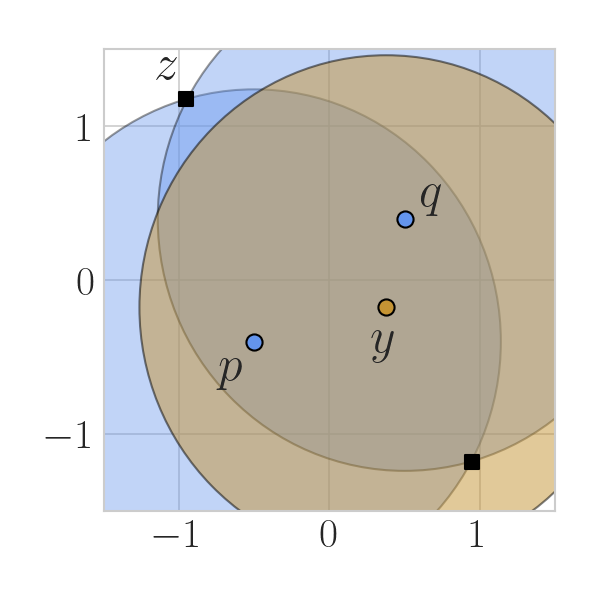}
            \caption{}
            \label{fig:witness2}
        \end{subfigure}
        \\
        \begin{subfigure}[b]{2in}
            \centering
            \includegraphics[width=2in]{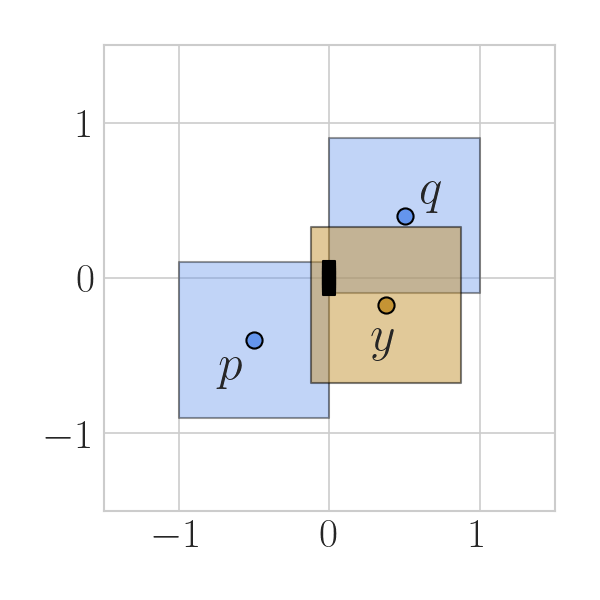}
            \caption{}
            \label{fig:witness3}
        \end{subfigure}
        \qquad
        \begin{subfigure}[b]{2in}
            \centering \includegraphics[width=2in]{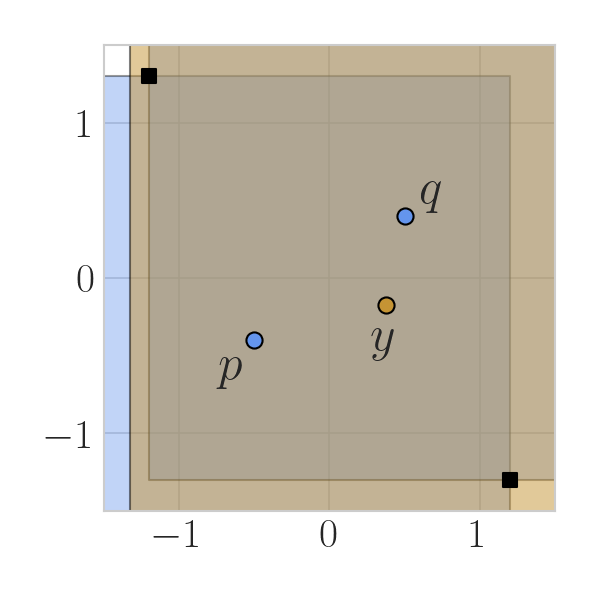}
            \caption{}
            \label{fig:witness4}
        \end{subfigure}
    \caption{In \textbf{(a)} Euclidean balls centered in $p$, $q$ intersect in a point which is covered by the ball centered in $y$.  As the radius grows in \textbf{(b)} this intersection is not covered by the ball centered in $y$, so that $z$ is a witness point of $e=\{p,q\}$.
    In \textbf{(c)} $\ell_{\infty}$-balls centered in $p$, $q$ intersect in $\A^{\bar{r}}$ which is covered by the $\linf$-ball centered in $y$. Again the radius grows in \textbf{(d)}, but in this case the $\linf$-ball centered in $y$ covers $\A^{\bar{r}+\varepsilon}$.}
    \label{fig:counterexample-epsilon-thickening}
\end{figure}

%---
($\Rightarrow$) We prove this direction of the result by contradiction.
Let us suppose $\A^{\bar{r}}$ is covered by $\bigcup_{y\in S\setminus e} B_{\bar{r}}(y)$, i.e. $\Z{e}$ is empty.
We know that
\begin{align*}
\A^{\bar{r}+\varepsilon} 
& = 
\partial \overline{B_{\bar{r}+\varepsilon}(p)} \cap \partial \overline{B_{\bar{r}+\varepsilon}(q)} \\
& \subseteq 
\overline{B_{\bar{r}+\varepsilon}(p)} \cap \overline{B_{\bar{r}+\varepsilon}(q)} \\
& =
\varepsilon(\overline{B_{\bar{r}}(p)}) \cap \varepsilon(\overline{B_{\bar{r}}(q)})
= \varepsilon(\A^{\bar{r}}),
\end{align*}
because we can apply Proposition \ref{app-prop:properties-epsilon-thickening} \emph{(i)} to obtain $\varepsilon(\A^{\bar{r}}) = \varepsilon(\cB{\bar{r}}{p} \cap \cB{\bar{r}}{q}) = \varepsilon(\cB{\bar{r}}{p}) \cap \varepsilon(\cB{\bar{r}}{q})$.
This property of boxes is illustrated by Figure \ref{app-fig:epsilon-thickening}.
It follows that, $\A^{\bar{r}+\varepsilon} \subseteq \varepsilon(\A^{\bar{r}})$ for any $\varepsilon \geq 0$, that together with Proposition \ref{app-prop:properties-epsilon-thickening} \emph{(ii)} and \emph{(iii)} gives
\begin{equation*}
    \label{eq:epsilon-inclusion}
    \A^{\bar{r}+\varepsilon} 
    \subseteq 
    \varepsilon \bigg( \A^{\bar{r}} \bigg) 
    \subseteq 
    \varepsilon \bigg( \bigcup_{y\in S\setminus e} B_{\bar{r}}(y) \bigg)
    =
    \bigcup_{y\in S\setminus e} B_{\bar{r}+\varepsilon}(y),
\end{equation*}
for any $\varepsilon \geq 0$.
Thus,
$\A^{\bar{r}+\varepsilon} \subseteq \bigcup_{y\in S\setminus e} B_{\bar{r}+\varepsilon}(y)$, which contradicts the existence of $z \in \A^{\bar{r}+\varepsilon} \setminus \big( \bigcup_{y\in S\setminus e} B_{\bar{r}+\varepsilon}(y) \big)$,  for any $\varepsilon \geq 0$.

($\Leftarrow$) Any point in $\Z{e} \neq \emptyset$ belongs to $V_p \cap V_q$, so that $e \in K^D$.
\end{proof}
The above result is illustrated in Figure \ref{fig:counterexample-epsilon-thickening}, which shows how this characterization of $\linf$-Delaunay edges does not hold in Euclidean metric.
\section{Alpha and \v{C}ech Complexes in \texorpdfstring{$\R^2$}{R2}}
\label{sec-app:alpha-cech-R2}
We prove the equivalence of Alpha and \v{C}ech complexes for points in two-dimensions.
In this setting, $\linf$-Voronoi regions may have degenerate bisectors (containing a two-dimensional subset of $\R^2$), as explained in Section \ref{sec:pre} and illustrated in Figure \ref{fig:voronoi-delaunay-and-degenerate-case}.
To avoid such cases, we assume $S$ to be in general position. 
We recall the definition given in Section \ref{sec:alpha}.
%
% Definition: general position
\begin{definition}
\label{app-def:general-position}
Let $S$ be a finite set of points in $(\R^2, \din)$. 
We say that $S$ is in \emph{general position} if no four points lie on the boundary of a square, and no two points share a coordinate.
\end{definition}
After stating the Nerve Theorem, we prove the result given as Theorem \ref{thm:equiv-2D} in Section \ref{sec:alpha}.
% Theorem: Nerve theorem
\begin{theorem}[Theorem 10.7 \cite{handbook-comb}]
\label{app-thm:nerve}
Let $X$ be a triangulable space and $\{A_i\}_{i \in I}$ a locally finite family of open subsets (or a finite family of closed subsets) such that $X = \bigcup_{i \in I} A_i$.
If every non-empty intersection $A_{i_1} \cap A_{i_2} \cap \ldots \cap A_{i_t}$ is contractible, then $X$ and the nerve $\nrv(\{A_i\}_{i\in I})$ are homotopy equivalent.
\end{theorem}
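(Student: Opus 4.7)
The plan is to prove this via the Mayer--Vietoris blow-up (or homotopy colimit) construction, which produces an intermediate space $\tilde X$ mapping naturally to both $X$ and $|\nrv(\{A_i\})|$. Concretely, I would set
\[
\tilde X = \bigsqcup_{\sigma \in \nrv(\{A_i\})} \Delta^{\sigma} \times \Big(\bigcap_{i \in \sigma} A_i\Big) \Big/ \sim,
\]
where $\Delta^{\sigma}$ is the geometric simplex spanned by $\sigma$ and $\sim$ glues face inclusions in the obvious way. There are two canonical projections, $\pi_X : \tilde X \to X$ (forgetting the simplex coordinate) and $\pi_N : \tilde X \to |\nrv(\{A_i\})|$ (forgetting the $X$-coordinate). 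The theorem then reduces to proving that both projections are homotopy equivalences, after which composing a homotopy inverse of $\pi_X$ with $\pi_N$ produces the required equivalence $X \simeq |\nrv(\{A_i\})|$.

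For $\pi_N$, the fiber over an interior point of a simplex $\sigma$ is $\bigcap_{i \in \sigma} A_i$, which is contractible by hypothesis; a skeletal induction over $|\nrv(\{A_i\})|$, extending deformation retractions across higher-dimensional strata via the homotopy extension property, exhibits $\pi_N$ as a homotopy equivalence. For $\pi_X$, the fiber over $x \in X$ is the realisation of the nerve of the subfamily $\{A_i : x \in A_i\}$, which is a full simplex (hence contractible) since any such finite subfamily has $x$ as a common point and therefore spans a simplex of the nerve. Using a partition of unity $\{\rho_i\}$ subordinate to $\{A_i\}$ --- which exists because triangulability forces paracompactness and $\{A_i\}$ is locally finite --- I obtain an explicit section $s(x) = \sum_i \rho_i(x)(e_i, x)$ of $\pi_X$, and I would deform $\tilde X$ onto its image fiberwise.

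The main obstacle is the technical verification that these fiberwise contractions assemble continuously over $X$ and over $|\nrv(\{A_i\})|$ respectively. The cleanest route is Dold's theorem on numerable coverings: a map with contractible fibers over a paracompact base that is locally homotopy trivial is a homotopy equivalence. The partition of unity supplies the numerable structure, while local finiteness and triangulability ensure local triviality at each point. An alternative route avoids Dold by means of an acyclic-carrier argument or a Segal-style spectral sequence for homotopy colimits, each of which reduces the claim to the contractibility of the fibers. The closed-set case is handled identically, forming the blow-up over closed intersections and invoking the shrinking lemma to produce a subordinate partition of unity.
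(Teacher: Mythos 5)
This statement is not proven in the paper at all: it is imported verbatim as Theorem 10.7 of Bj\"orner's Handbook of Combinatorics chapter and used as a black box, so there is no internal proof to compare against. Judged on its own terms, your sketch follows the standard modern proof via the Mayer--Vietoris blow-up (homotopy colimit), and for a locally finite \emph{open} cover of a paracompact space it is essentially correct: the fibre of $\pi_N$ over an interior point of $\sigma$ is $\bigcap_{i\in\sigma}A_i$, the fibre of $\pi_X$ over $x$ is the full (finite, by local finiteness) simplex on $\{i : x\in A_i\}$, and the partition of unity gives the numerable structure needed to promote fibrewise contractibility to a homotopy equivalence via Dold's criterion. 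This is the argument of Hatcher, Section 4.G.

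The genuine gap is your last sentence. The closed case is \emph{not} handled identically, and the mechanism you name does not work: a partition of unity subordinate to a closed cover (in the sense $\mathrm{supp}(\rho_i)\subseteq A_i$) need not exist, and the shrinking lemma goes in the wrong direction --- it shrinks open covers to closed ones, it does not thicken closed sets to open ones while preserving the intersection pattern and the contractibility of all intersections. For finite closed covers the theorem can genuinely fail without extra hypotheses; the role of triangulability in the cited statement is precisely to rescue this case, e.g.\ by arranging the $A_{i}$ and all their intersections to be subcomplexes (or at least NDR pairs) so that the gluing lemma applies to the projection $\pi_X$, or by running an acyclic-carrier argument followed by Whitehead's theorem. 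Since the paper's actual application (Theorem~\ref{app-thm:equiv-2D}, to the finite family of closed sets $\overline{B_r(p)}\cap V_p$) invokes exactly the closed-cover case, this is the half of the theorem your proof would need to make rigorous, and at present it is asserted rather than proven.
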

% Theorem: equivalence in \R^2
\begin{theorem}
\label{app-thm:equiv-2D}
Let $S$ be a finite set of points in $(\R^2, \din)$ in general position.
The Alpha and \v{C}ech filtrations of $S$ are equivalent, i.e. produce the same persistence diagrams.
\end{theorem}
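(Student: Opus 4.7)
The plan is to apply the Nerve Theorem (Theorem \ref{app-thm:nerve}) twice in parallel at each fixed radius $r$, and then promote the pointwise homotopy equivalences to an equivalence of filtrations via the Persistence Equivalence Theorem. Concretely, I would show that the Alpha complex $\alphar$ is the nerve of the closed cover $\{\cB{r}{p}\cap V_p\}_{p\in S}$ of the set $U_r=\bigcup_{p\in S}\cB{r}{p}$, while the \v{C}ech complex $\cechr$ is the nerve of $\{\cB{r}{p}\}_{p\in S}$ of the same set $U_r$ (using that $\bigcup_{p\in S}(\cB{r}{p}\cap V_p)=\bigcup_{p\in S}\cB{r}{p}$ since Voronoi regions partition $\R^2$). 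If both covers satisfy the hypotheses of the Nerve Theorem, then $\alphar\simeq U_r\simeq\cechr$.

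The geometric legwork breaks into three verifications, done at an arbitrary fixed $r$. First, each $\cB{r}{p}$ is convex (a closed box), so the $\cechr$ side is immediate. For the $\alphar$ side I would first show $\cB{r}{p}\cap V_p$ is contractible by arguing it is star-like with respect to $p$: any $\linf$-geodesic from $p$ to a point $x\in \cB{r}{p}\cap V_p$ stays in both sets, which is a direct consequence of the axis-aligned structure of $\linf$-balls and of the fact that the monotone segment from $p$ toward $x$ can only move into $V_p$ (any competing $q\in S$ that became closer along the way would already have been closer at $x$). Second, for a pair $e=\{p,q\}$ with $\bar r=\din(p,q)/2$, Proposition \ref{app-prop:delaunay-edge} identifies $\A^{\bar r}=\cB{\bar r}{p}\cap \cB{\bar r}{q}$ as a non-empty degenerate box, and for $r\geq\bar r$ I would show the pairwise intersection $(\cB{r}{p}\cap V_p)\cap(\cB{r}{q}\cap V_q)$ is either empty or deformation retracts onto a sub-segment of $\A^{\bar r}$, using the general position hypothesis to rule out the two-dimensional degenerate bisectors illustrated in Figure \ref{fig:deg2}. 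Third, for three or more indices, Corollary 3.18 of \cite{criado2019tropical} (invoked in the sketch) guarantees that any triple intersection of Voronoi regions is empty or a single point, so triple intersections of cover elements are either empty or a point, and by Proposition \ref{prop:intersection-boxes}(ii) higher intersections reduce to the pairwise case.

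With contractibility of all non-empty finite intersections established, the Nerve Theorem yields $\alphar\simeq U_r\simeq\cechr$ at each $r$. To upgrade this to an equivalence of persistence diagrams via the Persistence Equivalence Theorem of \cite[Section 7.2]{edelsbrunner2010computational}, I need the commuting ladder
\begin{equation*}
\begin{tikzcd}
\alphar \arrow[r,hook]\arrow[d,"\simeq"] & \alphare \arrow[d,"\simeq"] \\
\cechr \arrow[r,hook] & \cechre
\end{tikzcd}
\end{equation*}
to commute up to homotopy at the level of homology. This follows from a functorial version of the Nerve Theorem: both nerve equivalences are realized through the canonical map into the homotopy colimit of the cover, and enlarging $r$ enlarges each cover element while preserving all non-empty intersections (with the same contractibility properties), so the induced maps on nerves and on unions commute with the respective homotopy equivalences.

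The step I expect to be the main obstacle is the second geometric verification: showing that pairwise intersections $(\cB{r}{p}\cap V_p)\cap(\cB{r}{q}\cap V_q)$ are contractible despite the non-convexity of $\linf$-Voronoi regions illustrated in Figure \ref{fig:non-conv}. The general position assumption eliminates the worst degeneracies, but one still has to carefully describe the intersection as a deformation retract onto a line segment inside $\A^{\bar r}$, splitting into cases according to whether $r\geq\bar r$ or $r<\bar r$ (in the latter case the intersection is empty) and checking that the portion of $V_p\cap V_q$ near $\A^{\bar r}$ has the right geometry. Once this is in hand, the remaining ingredients (convexity of boxes, star-likeness, and the triple-intersection fact from \cite{criado2019tropical}) are easy, and the persistence equivalence follows formally.
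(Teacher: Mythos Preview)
Your proposal is correct and follows essentially the same route as the paper: apply the Nerve Theorem to the cover $\{\cB{r}{p}\cap V_p\}_{p\in S}$ by checking star-likeness for single sets, a retraction onto a segment of $\A^{\bar r}$ for pairs, and the point-or-empty fact from \cite{criado2019tropical} for triples, then compare with the nerve of $\{\cB{r}{p}\}_{p\in S}$ over the common union and finish with the Persistence Equivalence Theorem. One small slip: your appeal to Proposition~\ref{prop:intersection-boxes}(ii) for intersections of four or more cover elements is misplaced, since that proposition concerns boxes while the sets $\cB{r}{p}\cap V_p$ are not boxes; the correct (and easier) argument is simply that any $k$-fold intersection with $k\geq 4$ is contained in a triple intersection and hence is already empty or a single point.
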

\begin{proof}
Alpha complexes $\alphar$ are nerves of collections of closed sets $\{ \cB{r}{p} \cap V_p \}_{p\in S}$ for $r \in \R$.
We show that any intersection of $k$ elements in any such collection is either empty or contractible.
\begin{itemize}
    \item $k=1$. The sets $\cB{r}{p} \cap V_p$ are star-like for any $r > 0$, because $\cB{r}{p}$ and $V_p$ are both star-like with respect to $p$. Thus, $\cB{r}{p} \cap V_p$ is contractible for any $p \in S$.
    \item $k = 2$.
    Let $p, q$ be two points of $S$, and $\bar{r} = \frac{\din(p,q)}{2}$. We show that $L = \cB{r}{p} \cap V_p \cap \cB{r}{q} \cap V_q$ is either empty or contractible.
    In $\R^2$ we have that $\A^{\bar{r}} = \cB{\bar{r}}{p} \cap \cB{\bar{r}}{q}$ is a line segment of length strictly less than $2\bar{r}$, by our general position assumption.
    If this line segment is covered by $\bigcup_{y\in S\setminus \{p, q\}} B_{\bar{r}}(y)$, then by Proposition \ref{prop:delaunay-edge} we have that the bisector $V_p \cap V_q$ is empty, so that $L$ is empty.
    Moreover, $L$ is empty if $r < \bar{r}$, because $\cB{r}{p} \cap \cB{r}{q}$ is empty.
    On the other hand, if $r \geq \bar{r}$ and $A' = \A^{\bar{r}} \setminus \bigcup_{y\in S\setminus \{p, q\}} B_{\bar{r}}(y)$ is a non-empty line segment, and we can show that $L$ is contractible.
    First, we define a deformation retraction $\phi$ of $V_p \cap V_q$ onto $A'$.
    This is obtained by taking the Euclidean projection of the points on the bisector and not in $A'$, i.e. $\left(V_p \cap V_q\right) \setminus A'$, onto $A'$.
    This can be done because $\left(V_p \cap V_q\right) \setminus A'$ contains at most two line segments, defined by the union of points in $\partial \cB{\bar{r}+\varepsilon}{p} \cap \partial \cB{\bar{r}+\varepsilon}{q}$ not contained in $\bigcap_{y\in S\setminus \{p,q\}} B_{\bar{r}+\varepsilon}(y)$ for any $\varepsilon > 0$.
    For instance, the bisector $V_p \cap V_q$ in Figure \ref{fig:non-conv} of the main paper retracts via $\phi$ to the two line segments oriented at a forty-five degree angle onto the horizontal line segment.
    Moreover, $\phi$ restricts to $L$, by the convexity of $\cB{r}{p} \cap \cB{r}{q}$, and the fact that this contains $A'$ for $r \geq \bar{r}$. 
    Hence, $L$ has the same homotopy type of $A'$, which is a line segment and so is contractible.
    \item $k = 3$. These intersections can either be empty or contain a single point, by the general position of $S$ and Corollary 3.18 of \cite{criado2019tropical}.
    \item $k > 3$. Any such intersection is empty, again by the general position of $S$.
\end{itemize}
Thus, we can apply the Nerve Theorem \ref{app-thm:nerve} obtaining that $X = \bigcup_{p\in S} \big(\cB{r}{p} \cap V_p\big)$ and $\alphar$ are homotopy equivalent for any $r \in \R$.
Besides $X = \bigcup_{p \in S} \cB{r}{p}$, and by applying the Nerve Theorem to the collection $\{\cB{r}{p} \}_{p \in S}$, we have that $X$ is homotopy equivalent to $\cechr$ as well.
So $\alphar \simeq \cechr$ for any $r \in \R$,  and the desired equivalence of Alpha and \v{C}ech filtrations follows by applying the Persistence Equivalence Theorem of \cite[Section 7.2]{edelsbrunner2010computational}.
\end{proof}
To conclude this section, we show that $\linf$-Delaunay and Alpha complexes of points in $(\R^2, \din)$ are flag complexes. This second result is stated as Proposition \ref{prop:delaunay-alpha-flag-in-2D} in the main paper.
% Proposition: Delaunay complex is flag
\begin{proposition}
\label{app-prop:delaunay-alpha-flag-in-2D}
Let $S$ be a finite set of points in general position in $(\mathbb{R}^2, \din)$ and $r \geq 0$.
Both the $\linf$-Delaunay complex $K^D$ and the Alpha complex $\alphar$ of $S$ are flag complexes.
Moreover, $e = \{p, q\} \in K^D$ belongs to $\flagr$ if and only if $\frac{\din(p, q)}{2} \leq r$.
\end{proposition}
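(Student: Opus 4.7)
I would address the three claims in the stated order. The third, concerning which Delaunay edges belong to $\flagr$, is an immediate consequence of the definition: for $e = \{p,q\} \in K^D$, $e \in \flagr$ iff $\diamInfty{e} = \din(p,q) \leq 2r$. Next I would cap the dimensions of both $K^D$ and $\alphar$ at two. Under the general position assumption, four distinct points of $S$ sharing a common bisector point $v$ would all lie on $\partial \cB{\din(v, p_1)}{v}$, a common axis-parallel square, contradicting the no-four-on-a-square hypothesis; combined with Corollary $3.18$ of \cite{criado2019tropical}, this forces $\bisector{\sigma} = \emptyset$ whenever $|\sigma| \geq 4$. The flag property then reduces in each case to showing that three pairwise-connected vertices span a $2$-simplex.

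For $K^D$, assume $\{p_i, p_j\} \in K^D$ for all pairs from $\{p_1, p_2, p_3\}$, set $\bar{r}_{ij} = \din(p_i, p_j)/2$, and relabel so that $\bar{r}_{12} = \max_{i,j}\bar{r}_{ij}$. The key auxiliary object is $C = \bigcap_{i=1}^3 \cB{\bar{r}_{12}}{p_i}$, which by Proposition \ref{prop:intersection-boxes}(i) is a (possibly degenerate) box, non-empty because $\bar{r}_{12}$ equals the smallest-enclosing-$\linf$-ball radius of $\{p_1, p_2, p_3\}$. Every $c \in C$ automatically lies in $\A^{\bar{r}_{12}}$ with $\din(c, p_i) \leq \bar{r}_{12}$ for all $i$. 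My plan is to show $C$ meets the witness set $\Z{\{p_1, p_2\}} = \A^{\bar{r}_{12}} \setminus \bigcup_{y \in S \setminus \{p_1, p_2\}} B_{\bar{r}_{12}}(y)$, which is non-empty by Proposition \ref{prop:delaunay-edge}. The other two pair-hypotheses are then used to force any portion of $\A^{\bar{r}_{12}}$ lying outside $C$ to be covered by open balls $B_{\bar{r}_{12}}(y)$ for some $y \in S \setminus \{p_1, p_2, p_3\}$, so any uncovered witness of $\{p_1, p_2\}$ must lie in $C$. Such a point $c$ then satisfies $\din(c, p_i) \leq \bar{r}_{12}$ for all $i$ and $\din(c, y) \geq \bar{r}_{12}$ for all other $y \in S$, hence $c \in V_{p_1} \cap V_{p_2} \cap V_{p_3}$, giving $\{p_1, p_2, p_3\} \in K^D$.

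The flag property of $\alphar$ then follows easily. If all three edges of $\{p_1, p_2, p_3\}$ are in $\alphar$, each pair is Delaunay with $\bar{r}_{ij} \leq r$, so by the previous step $\{p_1, p_2, p_3\} \in K^D$; by general position the triple Voronoi intersection is a single point $c$, which by the construction above lies in $C$ and thus satisfies $\din(c, p_i) \leq \bar{r}_{12} \leq r$. Hence $c \in \bigcap_i \bigl(\cB{r}{p_i} \cap V_{p_i}\bigr)$ and $\{p_1, p_2, p_3\} \in \alphar$.

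The main obstacle is the geometric containment claim at the heart of the $K^D$ argument: that portions of $\A^{\bar{r}_{12}}$ outside $C$ are automatically covered by third-party balls. This is precisely where the proof must diverge from the Euclidean analogue, which fails (cf.\ Figure \ref{fig:example-voronoi}(a)), and exploit $\linf$-specific structure: the axis-aligned shape of $\linf$-balls, the Cartesian-product decomposition of Proposition \ref{prop:intersection-boxes}, and the general-position condition that no two points of $S$ share a coordinate. A case analysis on which axis realizes the $\din$-distance for each of the pairs $\{p_1, p_3\}, \{p_2, p_3\}$, combined with the pairwise-to-full-intersection equivalence of Proposition \ref{prop:intersection-boxes}(ii) applied to projections of the relevant balls, is likely needed to close this step.
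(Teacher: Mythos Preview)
Your overall architecture matches the paper's: cap the dimension at two via general position, reduce the flag property to showing that three pairwise-Delaunay vertices span a triangle, take the longest edge $\{p_1,p_2\}$, and work with $C = \A^{\bar r_{12}} \cap \cB{\bar r_{12}}{p_3}$. Your target, $C \cap \Z{\{p_1,p_2\}} \neq \emptyset$, is also the right one, and your derivation of the $\alphar$ flag property from it is the same as the paper's.

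The gap is in the mechanism you propose for reaching that target. The claim that the portion of $\A^{\bar r_{12}}$ lying outside $C$ must be covered by open balls $B_{\bar r_{12}}(y)$ with $y \in S \setminus \{p_1,p_2,p_3\}$ is false. Take $S = \{p_1,p_2,p_3\}$ with $p_1=(0,0)$, $p_2=(2,0)$, $p_3=(1.5,0.5)$: all three pairs are Delaunay, yet $\A^{\bar r_{12}} \setminus C = \{1\}\times[-1,-0.5)$ is a non-empty half-open segment and there are no third-party points at all. So no case analysis can close this step as you have formulated it; the Delaunay hypotheses on $\{p_1,p_3\}$ and $\{p_2,p_3\}$ simply do not force $\A^{\bar r_{12}} \setminus C$ to be covered.

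The paper's fix is to abandon the covering of the complement and instead exhibit an explicit point of $C \cap \Z{\{p_1,p_2\}}$. After ruling out $\cB{\bar r_{12}}{p_3} \supseteq \A^{\bar r_{12}}$ (which would make $\{p_1,p_2\}$ non-Delaunay by general position), $C$ is a subsegment sharing one endpoint with $\A^{\bar r_{12}}$, and the paper takes $w$ to be the \emph{other} endpoint of $C$, the unique point of $\partial\cB{\bar r_{12}}{p_3} \cap \A^{\bar r_{12}}$. The case analysis is then aimed at $w$: if some $B_{\bar r_{12}}(y)$ covered $w$, either it lies on the same side of $w$ as $\A^{\bar r_{12}}\setminus C$, in which case $B_{\bar r_{12}}(y)\cup B_{\bar r_{12}}(p_3)$ covers $\A^{\bar r_{12}}$ and kills the edge $\{p_1,p_2\}$; or it lies on the $C$-side, in which case $y$ is forced into $\textrm{Mini}_{p_1 p_3}$ or $\textrm{Mini}_{p_2 p_3}$, and Proposition~\ref{prop:minibox-property} kills one of the other two edges. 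Redirecting your case analysis from ``cover the complement of $C$'' to ``$w$ itself is uncovered'' is exactly the correction you need.
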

%
% Figure: illustration of last step proof above
\begin{figure}[tb]
    \centering
        \begin{subfigure}[b]{2in}
            \centering
            \includegraphics[width=2in]{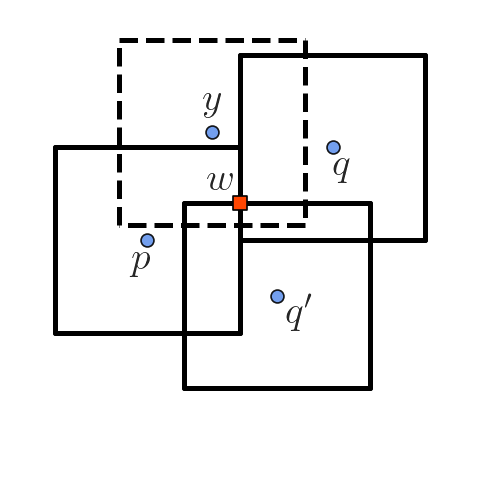}
            \caption{}
            \label{app-fig:line-above}
        \end{subfigure}
        \qquad
        \begin{subfigure}[b]{2in}
            \centering
            \includegraphics[width=2in]{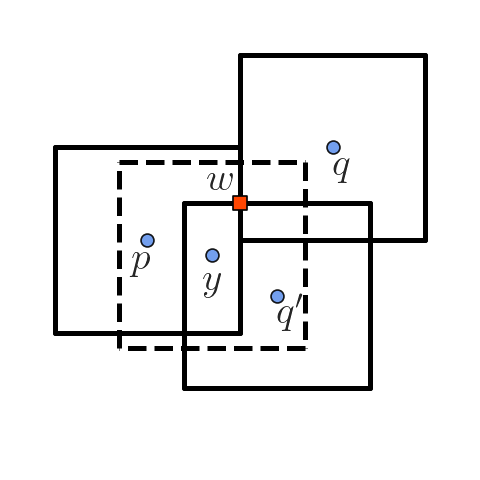}
            \caption{}
            \label{app-fig:line-below}
        \end{subfigure}
    \caption{Illustration of the proof of Proposition \ref{app-prop:delaunay-alpha-flag-in-2D}. In \textbf{(a)} the red square marker represents the point $w = (p_x + \bar{r}, q'+\bar{r})$ on $\A^{\bar{r}}$, which is covered by $B_{\bar{r}}(y)$ from above. 
    In \textbf{(b)} the same point is covered by $B_{\bar{r}}(y)$ from below. In both \textbf{(a)} and \textbf{(b)} the boundary of $B_{\bar{r}}(y)$ is drawn as a dashed line.}
    \label{app-fig:illustration-proof}
\end{figure}
%--- PROOF
\begin{proof}
We prove that all cliques on three edges belong to the $\linf$-Delaunay complex $K^D$ of $S$.
Consider three points $p, q, q' \subseteq S$, such that $\{p,  q\}$, $\{p, q'\}$ and $\{q, q'\}$ are $\linf$-Delaunay edges.
Without loss of generality, we assume $\{p, q\}$ to be the longest edge. 
We have $\A^{\bar{r}}  = \partial \cB{\bar{r}}{p} \cap \partial \cB{\bar{r}}{q} = \cB{\bar{r}}{p} \cap \cB{\bar{r}}{q}$ by Proposition \ref{app-prop:delaunay-edge}, where $\bar{r} = \frac{\din(p, q)}{2}$.
By the general position of $S$, it follows that $\A^{\bar{r}}$ is a non-empty axis-parallel line segment of length less than $2\bar{r}$.
Moreover, $\cB{\bar{r}}{p} \cap \cB{\bar{r}}{q} \cap \cB{\bar{r}}{q'}$ is non-empty by Proposition \ref{prop:intersection-boxes} \emph{(ii)} and the definition of $\bar{r}$.
So, the closed square $\cB{\bar{r}}{q'}$ intersects $\A^{\bar{r}}$, i.e. $\A^{\bar{r}} \cap \cB{\bar{r}}{q'} \neq \emptyset$. 
Then, either $\cB{\bar{r}}{q'}$ covers $\A^{\bar{r}}$ or intersects a subsegment of $\A^{\bar{r}}$ containing one of the endpoints of $\A^{\bar{r}}$.

%--- two first cases
In the former case, the interior of the square $B_{\bar{r}}(q')$ also covers $\A^{\bar{r}}$, by general position.
So, Proposition \ref{app-prop:delaunay-edge} implies that $\{p, q\}$ is not a $\linf$-Delaunay edge, which contradicts our hypothesis.
In the latter case, if we assume without loss of generality that $\A^{\bar{r}}$ is a vertical line segment and that $\cB{\bar{r}}{q'}$ intersects it from below, then the point $w = (p_x +\bar{r}, q'_x + \bar{r}) \in \A^{\bar{r}} \cap \partial \cB{\bar{r}}{q'}$, where $p = (p_x, p_y)$, $q'=(q'_x, q'_y) \in \R^2$, is the only possible witness point of the triangle $\{p, q, q'\}$ by our general position assumption. See Figure \ref{app-fig:illustration-proof}.

%---
We suppose by contradiction that $w$ is contained in an open square $B_{\bar{r}}(y)$ with sides of length $2\bar{r}$, so that $\{p, q, q'\} \not\in K^D$, and show that in every possible case one between $\{p, q\}$, $\{p, q'\}$, and $\{q, q'\}$ cannot be a $\linf$-Delaunay edge.
Note that $B_{\bar{r}}(y)$ can cover $w$ from either above or below, see Figures \ref{app-fig:line-above} and \ref{app-fig:line-below} respectively.

%--- subcase 1
In the first case, $\cB{\bar{r}}{q'} \cup \cB{\bar{r}}{y}$ covers $\A^{\bar{r}}$, so $\{p, q\}$ cannot be a $\linf$-Delaunay edge by Proposition \ref{app-prop:delaunay-edge}, which is a contradiction.

%--- subcase 2
In the second case, one can check that $y$ has to belong to either $\textrm{Mini}_{pq'}$ or $\textrm{Mini}_{qq'}$.
Thus, either $\{p, q'\}$ or $\{q, q'\}$ cannot be a $\linf$-Delaunay edge by Proposition \ref{prop:minibox-property}, which is again a contradiction. 

We conclude by showing that $\alphar$ is also a flag complex.
By Proposition \ref{app-prop:delaunay-edge} any edge $e = \{p, q\}$ is added into $\alphar$ at $r = \frac{\din(p,q)}{2}$.
Moreover, when the longest edge of any $\linf$-Delaunay triangle $\tau$ is added at radius $\bar{r}$, also $\tau$ is added in $K_{\bar{r}}^A$, because from the discussion above there exist $w$ at distance $\bar{r}$ from the vertices of $\tau$, which is a witness of this triangle.
\end{proof}
\section{Supporting Lemmas for Proving Alpha Flag and \v{C}ech Equivalence}
\label{sec-app:proof-main}
In this section, we present various results used in the proof of Theorem \ref{thm:main}.
It should be noted that we do not make use of any general position assumption.
We start by recalling the definitions of single edge-length range, and edge-by-edge filtration, given in Section \ref{sec:alpha-flag} of the paper.
% Definitions: Single edge-length range and edge-by-edge filtration
\begin{definition}
\label{app-def:edge-by-edge-filtration}
Let $S$ be a finite set of points in $(\R^d, \din)$.
A \emph{single edge-length range} of \v{C}ech complexes of $S$ is an open interval $(r, r + \varepsilon) \subseteq \R$ such that all the edges not in $\cechr$ and contained in $\cechre$ have the same length $2\bar{r}$.
Given a single edge-length range $(r, r+\varepsilon)$, the  \emph{\v{C}ech edge-by-edge filtration} of $S$ on this range is 
$$\cechr 
=
\KC{0}
\subseteq 
\KC{1}
\subseteq 
\ldots 
\subseteq
\KC{n_i}
=
\cechre,$$
where $\KC{i}$ contains exactly one edge not in $\KC{i-1}$, together with the cliques containing this edge, for each $1 \leq i \leq n_i$.
The corresponding \emph{Alpha flag edge-by-edge filtration} of $S$ on the same range is
$$\flagr
=
\KAF{0}
\subseteq 
\KAF{1}
\subseteq 
\ldots 
\subseteq
\KAF{n_i}
=
\flagre,$$
where $\KAF{i} = \KC{i} \cap \flagre$ for each $1 \leq i \leq n_i$.
\end{definition}
The following result corresponds to Lemma \ref{lemma:proof-main-1-add-one-edge} in the paper.
% Lemma: if e only added in KAF_i, then e only added in KC_i
\begin{lemma}
\label{app-lemma:proof-main-1-add-one-edge}
Let $(r, r+\varepsilon)$ be a single edge-length range of \v{C}ech complexes of $S \subseteq (\R^d, \din)$, and $\{\KAF{i}\}_{i=0}^{n_i}$, $\{\KC{i}\}_{i=0}^{n_i}$ the Alpha flag and \v{C}ech edge-by-edge filtrations on this range. 
If going from $\KAF{i-1}$ to $\KAF{i}$ a $\linf$-Delaunay edge is the only simplex added in $\KAF{i}$, then this is also the only simplex added going from $\KC{i-1}$ to $\KC{i}$.
\end{lemma}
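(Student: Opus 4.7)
I would prove the contrapositive: assume some simplex other than $e = \{p, q\}$ is added going from $\KC{i-1}$ to $\KC{i}$. Since $\KC{i}$ is a flag complex by Corollary~\ref{cor:cech-is-flag} and $e$ is the only new edge, this forces a new triangle $\tau = \{p, q, y\}$ with $y \in S$, $\{p, y\}, \{q, y\} \in \KC{i-1}$, and $\din(p, y), \din(q, y) \leq 2\bar{r}$, where $\bar{r} = \din(p, q)/2$. Write $\openY = \{u \in S : \din(u, p) < 2\bar{r} \text{ and } \din(u, q) < 2\bar{r}\}$. My plan is to derive a contradiction by producing a $\linf$-Delaunay triangle containing $e$ that is new in $\KAF{i}$, which proceeds in two steps.

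\smallskip
\noindent\textsc{Step 1: }$\openY = \emptyset$. Suppose otherwise and fix $u \in \openY$. A coordinate-wise comparison against the description of $\A^{\bar{r}} = \cB{\bar{r}}{p} \cap \cB{\bar{r}}{q}$ as a product of intervals $[\max(p_i, q_i) - \bar{r}, \min(p_i, q_i) + \bar{r}]$ shows that $\din(u, \A^{\bar{r}}) < \bar{r}$, using only $|u_i - p_i|, |u_i - q_i| < 2\bar{r}$. Hence some point of $\A^{\bar{r}}$ lies in $B_{\bar{r}}(u)$, so the witness set $\Z{e}$ is a proper, non-empty (as $e$ is Delaunay by hypothesis), closed subset of the connected box $\A^{\bar{r}}$. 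Its relative boundary is therefore non-empty; choose $z^* \in \partial \Z{e}$. Arbitrarily close to $z^*$ in $\A^{\bar{r}}$ are points covered by open balls $B_{\bar{r}}(v)$, so by the finiteness of $S$ some fixed $y^* \in S \setminus \{p, q\}$ admits $z^* \in \partial B_{\bar{r}}(y^*)$, i.e.\ $\din(z^*, y^*) = \bar{r}$. Since $z^* \in \Z{e}$ gives $\din(z^*, v) \geq \bar{r}$ for every $v \in S \setminus \{p, q\}$, and the triangle inequality yields $\din(p, y^*), \din(q, y^*) \leq 2\bar{r}$, the point $z^*$ witnesses $\{p, q, y^*\}$ as a $\linf$-Delaunay triangle of diameter $2\bar{r}$. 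Closure under faces makes all three edges Delaunay, so $\{p, q, y^*\} \in \flagre$ and is new in $\KAF{i}$ (it contains the newly added $e$), contradicting the hypothesis.

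\smallskip
\noindent\textsc{Step 2: }From $\openY = \emptyset$ to the final contradiction. Since $y \notin \openY$, at least one of $\din(p, y) = 2\bar{r}$ or $\din(q, y) = 2\bar{r}$ holds; without loss of generality, the former. By Proposition~\ref{prop:intersection-boxes}(ii), $\cB{\bar{r}}{p} \cap \cB{\bar{r}}{q} \cap \cB{\bar{r}}{y}$ is non-empty, and the two equalities $\din(p, q) = \din(p, y) = 2\bar{r}$ force any point $c$ in it to satisfy $\din(c, p) = \din(c, q) = \din(c, y) = \bar{r}$, by the triangle inequality combined with the upper bound $\din(c, \cdot) \leq \bar{r}$. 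Any $u \in S$ covering $c$ would then satisfy $\din(u, p), \din(u, q) < 2\bar{r}$, placing it in $\openY = \emptyset$, which is impossible. Hence $c$ witnesses $\{p, q, y\}$ as $\linf$-Delaunay; as in Step~1, this triangle lies in $\flagre$ and is new in $\KAF{i}$, yielding the final contradiction.

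\smallskip
The main obstacle is Step~1: translating $\openY \neq \emptyset$ into a Delaunay triangle on $e$ via the relative boundary of $\Z{e}$. The coordinate-wise inequality $\din(u, \A^{\bar{r}}) < \bar{r}$ is the technical core, as it is what guarantees that $\Z{e}$ is a proper subset of $\A^{\bar{r}}$ and hence has a boundary point from which a common witness of a larger Delaunay simplex can be read off. Step~2 is then a direct triangle-inequality argument using Proposition~\ref{prop:intersection-boxes}(ii).
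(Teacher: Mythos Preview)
Your proof is correct and follows the same two–step architecture as the paper: first establish $\openY=\emptyset$ by locating a witness of a Delaunay triangle on $e$ at the relative boundary of $\Z{e}$ in $\A^{\bar r}$ (the paper phrases this dually, taking a point on the boundary of $\bigcup_{y\in\openY}B_{\bar r}(y)$ inside $\A^{\bar r}$), and then handle any surviving $\tau=\{p,q,y\}$ with $y\notin\openY$ by checking that the triple intersection of closed $\bar r$–balls produces a witness of $\tau$.

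One small imprecision in Step~1 is worth tightening. You only record $\din(p,y^*),\din(q,y^*)\le 2\bar r$ via the triangle inequality at $z^*$, but to place the triangle $\{p,q,y^*\}$ in $\KAF{i}=\KC{i}\cap\flagre$ you need $\{p,y^*\},\{q,y^*\}\in\KC{i-1}$; with only $\le 2\bar r$ these edges could appear later in the edge-by-edge ordering. The fix is already implicit in your construction: since the \emph{open} ball $B_{\bar r}(y^*)$ meets $\A^{\bar r}$ (it contains the $z_n$'s), any such intersection point gives $\din(p,y^*),\din(q,y^*)<2\bar r$, i.e.\ $y^*\in\openY$, so both edges lie in $\cechr=\KC{0}\subseteq\KC{i-1}$. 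The paper makes exactly this observation explicit.
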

\begin{proof}
Let $e=\{p, q\}$ be the $\linf$-Delaunay edge added in $\KAF{i}$. 
We define $\bar{r} = \frac{\din(p, q)}{2}$, so that $r < \bar{r} < r + \varepsilon$, and $\openY = \{ y \in S \ \vert \  \din(y, p) < 2 \bar{r} \textrm{ and  } \din(y, q) < 2 \bar{r} \}$.

%--- \openY is empty
We start by proving that $\openY$ is empty.
The idea is to assume $\openY \neq \emptyset$, and show that $\KAF{i}$ has to contain a triangle of which $e$ is an edge, which is a contradiction.
To begin with, $e$ is $\linf$-Delaunay, so $\A^{\bar{r}}$ is not covered by $\bigcup_{y \in \openY} B_{\bar{r}}(y)$ by Proposition \ref{app-prop:delaunay-edge}.
Moreover, the closure $\overline{\bigcup_{y \in \openY} B_{\bar{r}}(y)}$ intersects $\A^{\bar{r}}$, because each element in $\{ \cB{\bar{r}}{y} \}_{\openY}$ intersects $\A^{\bar{r}}$ by definition of $\openY$ and Proposition \ref{prop:intersection-boxes} \emph{(ii)}.
Thus, there exists $z \in \partial \overline{\bigcup_{y \in \openY} B_{\bar{r}}(y)}$ such that $z \in \A^{\bar{r}}$, because $\A^{\bar{r}}$ is convex and closed.
Since $\partial \overline{\bigcup_{y \in \openY} B_{\bar{r}}(y)} \subseteq \bigcup_{y \in \openY} \partial \cB{\bar{r}}{y}$, it follows that $z$ is a point on a boundary $\partial \cB{\bar{r}}{\hat{y}}$ for some $\hat{y} \in \openY$.
We conclude that $z$ is a witness point of $\hat{\tau}  = \{p, q, \hat{y}\} \in K^D$, because $z \in \A^{\bar{r}} \cap \partial \cB{\bar{r}}{\hat{y}}$ and $z$ is not contained in the interior of any $\cB{\bar{r}}{y}$ for $y \in S \setminus \{p, q\}$, otherwise $z$ would not be on the boundary $\partial \overline{\bigcup_{y \in \openY} B_{\bar{r}}(y)}$.
By definition of $\openY$, both $\{p, \hat{y}\}$ and $\{q, \hat{y}\}$ are strictly shorter than $2 \bar{r}$, so they both belong to $\KAF{i-1}$.
Finally, from the above discussion we have $\{p, \hat{y}\}$, $\{q, \hat{y}\} \in \KAF{i-1}$ and $\hat{\tau} \in K^D$, so adding $e$ in the flag complex $\KAF{i}$ also adds $\hat{\tau}$ in $\KAF{i}$, which is the desired contradiction.

%--- final two cases
We can now prove that $e$ is also the only simplex added in $\KC{i}$.
Suppose there exists $\tau' = \{p, q, y'\} \in \KC{i}$.
It follows that there exists $y' \in S \setminus \{p, q\}$ such that $\A^{\bar{r}} \cap \cB{\bar{r}}{y'} = \cB{\bar{r}}{p} \cap \cB{\bar{r}}{q} \cap \cB{\bar{r}}{y'} \neq \emptyset$.
Then, $\{p, y'\}$, $\{q, y'\} \in \KC{i-1}$ with $\din(p, y') \leq 2 \bar{r}$ and $\din(q, y') \leq 2 \bar{r}$.
It cannot be that both $\din(p, y') < 2 \bar{r}$ and $\din(q, y') < 2 \bar{r}$, otherwise $\openY$ would not be empty.
So $\din(p, y') = 2 \bar{r}$ or $\din(q, y') = 2 \bar{r}$, and $\A^{\bar{r}} \cap \cB{\bar{r}}{y'} = \A^{\bar{r}} \cap \partial \cB{\bar{r}}{y'}$.
Moreover, any point $z \in \A^{\bar{r}} \cap \partial \cB{\bar{r}}{y'}$ is a witness point of $\tau' = \{p, q, y'\}$, because there does not exist any open ball $B_{\bar{r}}(y)$ containing $z$ and intersecting $\A^{\bar{r}}$ centered in $y \in S \setminus \{p, q\}$, otherwise $\openY$ would not be empty.

%---
In conclusion, we have $\{p, q, y' \} \in K^D$, which implies $\{p, q\}$, $\{p, y'\}$, $\{q, y'\} \in K^D$, and $\{p, y'\}$, $\{q, y'\} \in \KC{i-1}$ by our hypothesis on $\tau'$.
So, $\{p, y'\}$, $\{q, y'\} \in \KAF{i-1}$ by definition of $\KAF{i-1}$, and adding $e$ in the flag complex $\KAF{i}$ also adds $\tau'$ in $\KAF{i}$, which is a contradiction.
Thus, there does not exist a triangle $\tau'$ containing $e$ in $\KC{i}$, which implies that $e$ is the only simplex added going from $\KC{i-1}$ to $\KC{i}$.
\end{proof}

%
% Lemma: Projection hyperrect
\begin{lemma}
\label{app-lemma:projection-hyperrect}
Let $B_1$ and $B_2$ be two closed boxes in $\mathbb{R}^d$.
If $B_1 \cap B_2$ is non-empty, then the Euclidean projection 
$\uppi_{B_1}: B_1 \rightarrow B_2$,
defined by mapping each $x \in B_1$ to its closest point in Euclidean distance on $B_2$, is such that $\uppi_{B_1}(B_1) \subseteq B_1 \cap B_2$.
\end{lemma}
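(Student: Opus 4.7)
The plan is to exploit the Cartesian product structure of boxes to reduce the Euclidean projection onto $B_2$ to $d$ independent one-dimensional projections onto intervals. Write $B_1 = \prod_{i=1}^d [a_i, b_i]$ and $B_2 = \prod_{i=1}^d [c_i, d_i]$. For any $x \in \R^d$, minimizing $\sum_i (x_i - y_i)^2$ subject to $y \in B_2$ decouples across coordinates, so $\uppi_{B_1}(x)_i$ equals the one-dimensional projection of $x_i$ onto $[c_i, d_i]$, i.e.\ the clamp $\max\{c_i, \min\{x_i, d_i\}\}$. It then suffices to show, for each coordinate $i$, that this clamp lies in $[a_i, b_i] \cap [c_i, d_i]$ whenever $x_i \in [a_i, b_i]$.

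The hypothesis $B_1 \cap B_2 \neq \emptyset$ translates, via Proposition \ref{prop:intersection-boxes}\emph{(i)}, into the statement that $[a_i, b_i] \cap [c_i, d_i] = [\max\{a_i, c_i\}, \min\{b_i, d_i\}]$ is non-empty for every $i$, i.e.\ $\max\{a_i, c_i\} \leq \min\{b_i, d_i\}$. Given $x_i \in [a_i, b_i]$, I would split into three cases: (i) $x_i \in [c_i, d_i]$, in which case the projection is $x_i$ itself and lies trivially in both intervals; (ii) $x_i < c_i$, in which case the projection is $c_i$, and the inequality $c_i \leq \min\{b_i, d_i\} \leq b_i$ places $c_i$ in $[a_i, b_i]$ (with $c_i \geq a_i$ forced by $x_i < c_i$ and $x_i \geq a_i$); (iii) $x_i > d_i$, symmetrically giving the projection $d_i \in [a_i, b_i] \cap [c_i, d_i]$.

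Combining the three cases coordinate by coordinate shows $\uppi_{B_1}(x)_i \in [a_i, b_i] \cap [c_i, d_i]$ for every $i$, hence $\uppi_{B_1}(x) \in B_1 \cap B_2$ by another application of Proposition \ref{prop:intersection-boxes}\emph{(i)}. The main (and only) conceptual step is justifying the decoupling of the Euclidean projection into interval projections; everything else is a short case analysis on endpoints. There is no real obstacle here—the argument is essentially forced once the Cartesian-product observation is made.
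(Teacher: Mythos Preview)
Your proof is correct and rests on the same coordinate-wise idea as the paper's, but the presentations differ. The paper argues by contradiction: assuming $y=\uppi_{B_1}(x)\notin B_1\cap B_2$, it locates a bad coordinate $\hat{i}$, runs through the four possible overlap patterns of $[a_{\hat{i}}^{B_1},b_{\hat{i}}^{B_1}]$ and $[a_{\hat{i}}^{B_2},b_{\hat{i}}^{B_2}]$, and in each case exhibits a point $y'$ (or $y''$) in $B_2$ strictly closer to $x$ than $y$. You instead make the separability of the Euclidean projection onto a product set explicit---identifying $\uppi_{B_1}(x)_i$ as the clamp of $x_i$ to $[c_i,d_i]$---and then verify directly that each clamped value lands in the intersection interval. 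Your route is shorter and arguably cleaner, since once the decoupling is stated the three-case check is immediate; the paper's contradiction argument is effectively rediscovering the clamp one coordinate at a time. Both approaches are elementary and neither requires any additional machinery.
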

\begin{proof}
Let 
$B_1 = \prod_{i=1}^d [a_i^{B_1}, b_i^{B_1}]$ 
and
$B_2 = \prod_{i=1}^d [a_i^{B_2}, b_i^{B_2}]$ such that $B_1 \cap B_2 \neq \emptyset$.
Because Cartesian products and intersections of intervals commute, we have that
$[a_i^{B_1}, b_i^{B_1}] \cap [a_i^{B_2}, b_i^{B_2}]
=
[\bar{a}_i, \bar{b}_i]
\neq 
\emptyset$
for each $1 \leq i \leq d$, 
and $B_1 \cap B_2 = \prod_{i=1}^d [\bar{a}_i, \bar{b}_i]$.

Given $x \in B_1$, we suppose by contradiction that 
$y = \uppi_{B_1}(x) \in B_2$ is such that $y \not\in B_1 \cap B_2$.
Thus, $y \not \in  \prod_{i=1}^d [\bar{a}_i, \bar{b}_i]$, and there exists $1 \leq \hat{i} \leq d$ such that $y_{\hat{i}} \not\in [\bar{a}_{\hat{i}}, \bar{b}_{\hat{i}}]$.
The intervals $[a_{\hat{i}}^{B_1}, b_{\hat{i}}^{B_1}]$ and $[a_{\hat{i}}^{B_2}, b_{\hat{i}}^{B_2}]$ can intersect in four possible ways:
\begin{itemize}
    \item[\emph{(i)}] $[a_{\hat{i}}^{B_1}, b_{\hat{i}}^{B_1}]$ intersects $[a_{\hat{i}}^{B_2}, b_{\hat{i}}^{B_2}]$ on the left, i.e. 
    $a_{\hat{i}}^{B_1} 
    \leq 
    a_{\hat{i}}^{B_2} 
    \leq 
    b_{\hat{i}}^{B_1} 
    \leq 
    b_{\hat{i}}^{B_2}$. 
    Thus, $a_{\hat{i}}^{B_1} \leq x_{\hat{i}} \leq b_{\hat{i}}^{B_1} < y_{\hat{i}}$, and we define $y' = [y_1, \ldots, b_{\hat{i}}^{B_1}, \ldots, y_d]$;
    \item[\emph{(ii)}] $[a_{\hat{i}}^{B_1}, b_{\hat{i}}^{B_1}]$ intersects $[a_{\hat{i}}^{B_2}, b_{\hat{i}}^{B_2}]$ on the right, i.e. 
    $a_{\hat{i}}^{B_2} 
    \leq 
    a_{\hat{i}}^{B_1} 
    \leq 
    b_{\hat{i}}^{B_2} 
    \leq 
    b_{\hat{i}}^{B_1}$. 
    Thus, $y_{\hat{i}} < a_{\hat{i}}^{B_1} \leq x_{\hat{i}} \leq b_{\hat{i}}^{B_1}$, and we define $y'' = [y_1, \ldots, a_{\hat{i}}^{B_1}, \ldots, y_d]$;
    \item[\emph{(iii)}] $[a_{\hat{i}}^{B_1}, b_{\hat{i}}^{B_1}]$ is contained in $[a_{\hat{i}}^{B_2}, b_{\hat{i}}^{B_2}]$, i.e. 
    $a_{\hat{i}}^{B_2} 
    \leq 
    a_{\hat{i}}^{B_1} 
    \leq 
    b_{\hat{i}}^{B_1} 
    \leq 
    b_{\hat{i}}^{B_2}$. 
    Thus, $a_{\hat{i}}^{B_1} \leq x_{\hat{i}} \leq b_{\hat{i}}^{B_1} < y_{\hat{i}}$ or $y_{\hat{i}} < a_{\hat{i}}^{B_1} \leq x_{\hat{i}} \leq b_{\hat{i}}^{B_1}$, and in the first case we define $y' = [y_1, \ldots, b_{\hat{i}}^{B_1}, \ldots, y_d]$ and in the second $y'' = [y_1, \ldots, a_{\hat{i}}^{B_1}, \ldots, y_d]$;
    \item[\emph{(iv)}] $[a_{\hat{i}}^{B_1}, b_{\hat{i}}^{B_1}]$ contains $[a_{\hat{i}}^{B_2}, b_{\hat{i}}^{B_2}]$, i.e. 
    $a_{\hat{i}}^{B_1} 
    \leq 
    a_{\hat{i}}^{B_2} 
    \leq 
    b_{\hat{i}}^{B_2} 
    \leq 
    b_{\hat{i}}^{B_1}$.
\end{itemize}
In case \emph{(iv)} we have a contradiction as $$y_{\hat{i}} \in [a_{\hat{i}}^{B_2}, b_{\hat{i}}^{B_2}] = [\bar{a}_{\hat{i}}, \bar{b}_{\hat{i}}] \not \ni y_{\hat{i}}.$$
In the other three cases, taken either $y'$ or $y''$ we have
\begin{align}
    \label{eq:diseg1}
    d_2(x, y') 
    = 
    \sqrt{(x_{\hat{i}} - b_{\hat{i}}^{B_1})^2 + \sum_{i=1, i\neq \hat{i}}^d (x_i - y_i)^2}
    <
    \sqrt{\sum_{i=1}^d (x_i - y_i)^2}
    = 
    d_2(x, y),\\
    \label{eq:diseg2}
    d_2(x, y'') 
    = 
    \sqrt{(x_{\hat{i}} - a_{\hat{i}}^{B_1})^2 + \sum_{i=1, i\neq \hat{i}}^d (x_i - y_i)^2}
    <
    \sqrt{\sum_{i=1}^d (x_i - y_i)^2}
    = 
    d_2(x, y).
\end{align}
because $(x_{\hat{i}} - b_{\hat{i}}^{B_1})^2 < (x_{\hat{i}} - y_{\hat{i}})^2$ in Equation \eqref{eq:diseg1}, and $(x_{\hat{i}} - a_{\hat{i}}^{B_1})^2 < (x_{\hat{i}} - y_{\hat{i}})^2$ in Equation \eqref{eq:diseg2}.
The proof follows because this contradicts $y$ being the closest point in Euclidean distance to $x$ in $B_2$.
\end{proof}
%
%
%--------------------------------
%
%
% Proposition: union balls on \openY retracts onto \A^{\bar{r}}
\begin{lemma}
\label{app-lemma:retraction}
Let $p$, $q \in (\R^d, \din)$ be such that $\din(p, q) = 2\bar{r}$, and $\A^{\bar{r}} = \cB{\bar{r}}{p} \cap \cB{\bar{r}}{q}$.
Given a finite set of points $\mathcal{Y} \subseteq (\R^d, \din)$ such that $\A^{\bar{r}}$ is covered by $\bigcup_{y \in \mathcal{Y}} B_{\bar{r}}(y)$, then $\nrv\left( \{\cB{\bar{r}}{y}\}_{y \in \mathcal{Y}} \right)$ has the homotopy type of $\A^{\bar{r}}$.
\end{lemma}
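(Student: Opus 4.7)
The plan is to apply the Nerve Theorem (Theorem \ref{app-thm:nerve}) to the restricted family $\mathcal{F} = \{\cB{\bar{r}}{y} \cap \A^{\bar{r}}\}_{y \in \mathcal{Y}}$ of closed subsets of $\A^{\bar{r}}$, and then to verify that its nerve is combinatorially identical to $\nrv(\{\cB{\bar{r}}{y}\}_{y \in \mathcal{Y}})$. Most of the work is concentrated in the identification of these two nerves, where the $\linf$-specific Helly-type property of boxes enters.

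First I would check that $\mathcal{F}$ is a good closed cover of $\A^{\bar{r}}$. Every element $\cB{\bar{r}}{y} \cap \A^{\bar{r}}$ is an intersection of closed boxes, hence itself a (possibly empty) closed box by Proposition \ref{prop:intersection-boxes}(i); the same reasoning applies to any intersection $\bigcap_{y \in \sigma}(\cB{\bar{r}}{y} \cap \A^{\bar{r}})$ for $\sigma \subseteq \mathcal{Y}$, and non-empty boxes are convex and therefore contractible. Since $B_{\bar{r}}(y) \subseteq \cB{\bar{r}}{y}$, the covering hypothesis $\A^{\bar{r}} \subseteq \bigcup_{y \in \mathcal{Y}} B_{\bar{r}}(y)$ upgrades to $\A^{\bar{r}} = \bigcup_{y \in \mathcal{Y}}(\cB{\bar{r}}{y} \cap \A^{\bar{r}})$, and Theorem \ref{app-thm:nerve} yields $\A^{\bar{r}} \simeq \nrv(\mathcal{F})$.

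Next I would establish $\nrv(\mathcal{F}) = \nrv(\{\cB{\bar{r}}{y}\}_{y \in \mathcal{Y}})$ as abstract simplicial complexes. A subset $\sigma \subseteq \mathcal{Y}$ lies in $\nrv(\mathcal{F})$ exactly when $\A^{\bar{r}} \cap \bigcap_{y \in \sigma} \cB{\bar{r}}{y} \neq \emptyset$, which, using $\A^{\bar{r}} = \cB{\bar{r}}{p} \cap \cB{\bar{r}}{q}$, is the same as $\bigcap_{y \in \sigma \cup \{p,q\}} \cB{\bar{r}}{y} \neq \emptyset$. By the Helly-type Proposition \ref{prop:intersection-boxes}(ii) for boxes, this reduces to checking that every pair among these balls meets: the pairs within $\sigma$ are provided by the hypothesis $\sigma \in \nrv(\{\cB{\bar{r}}{y}\}_{y \in \mathcal{Y}})$, the pair $\cB{\bar{r}}{p} \cap \cB{\bar{r}}{q} = \A^{\bar{r}}$ is non-empty by assumption, and it only remains that each $\cB{\bar{r}}{y}$ meet both $\cB{\bar{r}}{p}$ and $\cB{\bar{r}}{q}$ individually, which is equivalent to $\cB{\bar{r}}{y} \cap \A^{\bar{r}} \neq \emptyset$.

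The main obstacle I foresee is handling points $y \in \mathcal{Y}$ whose closed ball misses $\A^{\bar{r}}$ entirely: such $y$'s do not contribute to the cover, yet they can in principle introduce simplices in $\nrv(\{\cB{\bar{r}}{y}\}_{y \in \mathcal{Y}})$ that are absent from $\nrv(\mathcal{F})$. In the intended application, namely with $\mathcal{Y} = \openY = \{y \in S \mid \din(y,p) < 2\bar{r} \textrm{ and } \din(y,q) < 2\bar{r} \}$ from Lemma \ref{lemma:removing-non-delaunay-edge}, every $\cB{\bar{r}}{y}$ meets both $\cB{\bar{r}}{p}$ and $\cB{\bar{r}}{q}$ by construction, so by Proposition \ref{prop:intersection-boxes}(ii) it meets $\A^{\bar{r}}$, and the Helly step above goes through verbatim. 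I would therefore either strengthen the statement to assume each $\cB{\bar{r}}{y}$ meets $\A^{\bar{r}}$, or reduce to that case by discarding the irrelevant $y$'s, thereby completing the identification of nerves and hence the claimed homotopy equivalence.
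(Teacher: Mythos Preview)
Your argument is correct and takes a genuinely different route from the paper's. The paper first applies the Nerve Theorem to the unrestricted family $\{\cB{\bar{r}}{y}\}_{y\in\mathcal{Y}}$ to identify the nerve with the union $\bigcup_{y\in\mathcal{Y}}\cB{\bar{r}}{y}$, and then constructs an explicit deformation retraction of that union onto $\A^{\bar{r}}$ by gluing together straight-line homotopies to Euclidean nearest-point projections, invoking the auxiliary Lemma~\ref{app-lemma:projection-hyperrect} to check that each projection lands in the right box. You instead apply the Nerve Theorem directly to the restricted cover $\{\cB{\bar{r}}{y}\cap\A^{\bar{r}}\}_{y\in\mathcal{Y}}$ of $\A^{\bar{r}}$ and then identify the two nerves combinatorially via the Helly property of boxes (Proposition~\ref{prop:intersection-boxes}(ii)). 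Your approach is shorter and avoids both the projection lemma and the gluing of local retractions; the paper's approach is more hands-on and yields an explicit retraction, but that extra information is not used anywhere. It is also worth noting that the gap you flag (points $y\in\mathcal{Y}$ with $\cB{\bar{r}}{y}\cap\A^{\bar{r}}=\emptyset$) is equally present in the paper's proof, since Lemma~\ref{app-lemma:projection-hyperrect} requires the two boxes to intersect; as you observe, the only invocation of the lemma is with $\mathcal{Y}=\openY$, where this is automatic.
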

\begin{proof}
From the Nerve Theorem \ref{app-thm:nerve}, it follows that $\nrv\left( \{\cB{\bar{r}}{y}\}_{y \in \mathcal{Y}} \right)$ and $\bigcup_{y \in \mathcal{Y}} \cB{\bar{r}}{y}$ are homotopy equivalent, because convex sets and their intersections are contractible.
% Deformation retraction
We show how to define a deformation retraction
\begin{equation*}
    \phi: \bigg( \bigcup_{y\in \mathcal{Y}} 
          \cB{\bar{r}}{y} \bigg)
          \times 
          [0,1]
          \rightarrow
          \A^{\bar{r}},
\end{equation*}
which implies that $\bigcup_{y\in \mathcal{Y}}
\cB{\bar{r}}{y}$ and $\A^{\bar{r}}$ have the same homotopy type.

%---
To obtain $\phi$, we first define $\phi_y: \cB{\bar{r}}{y} \times [0, 1] \rightarrow \A^{\bar{r}}$
for each $y \in \mathcal{Y}$.
Given the Euclidean projection $\uppi_{\cB{\bar{r}}{y}}: \cB{\bar{r}}{y} \rightarrow \A^{\bar{r}}$,
we set 
\begin{equation*}
    \phi_y(x, t) = (1-t) \cdot x + t \cdot \uppi_{\cB{\bar{r}}{y}}(x),
\end{equation*}
for every $x \in \cB{\bar{r}}{y}$ and $t\in [0,1]$. 
It should be noted that $\A^{\bar{r}}$ is a $(d-1)$-dimensional closed hyperrectangle by Proposition \ref{app-prop:delaunay-edge}.
So, we have $\uppi_{\cB{\bar{r}}{y})}(x) \in \cB{\bar{r}}{y} \cap \A^{\bar{r}}$, by Lemma \ref{app-lemma:projection-hyperrect}.
Moreover, the straight line segment from $x$ to $\uppi_{B_{\bar{r}}(y)}(x)$ is fully contained in $B_{\bar{r}}(y)$, by the convexity of this set.
Thus, $\phi_y$ is well-defined and continuous by the continuity of $\uppi_{B_{\bar{r}}(y)}$. 
We set
\begin{equation*}
    \phi(x, t) = \phi_{\hat{y}}(x,t),
\end{equation*}
for every $x\in \bigcup_{y\in \mathcal{Y}} B_{\bar{r}}(y)$ and $t\in [0,1]$, with $\hat{y} \in \mathcal{Y}$ such that $x \in B_{\bar{r}}(\hat{y})$.
This might seem not well-defined, because for a given $x$ all the $\phi_{\hat{y}}$ corresponding to a point in $\hat{\mathcal{Y}} = \{\hat{y} \in \mathcal{Y} \ \vert \ x \in B_{\bar{r}}(\hat{y}) \}$ can be used to define $\phi(x, t)$ for any $t \in [0,1]$.
Luckily, given
$R = \bigcap_{\hat{y} \in \hat{\mathcal{Y}}} B_{\bar{r}}(\hat{y})$, which is a box containing $x$, Proposition \ref{app-lemma:projection-hyperrect} guarantees that 
$\uppi_R: R \rightarrow \A^{\bar{r}}$ is such that $\uppi_R(R) \subseteq R\cap \A^{\bar{r}}$.
Thus, $\phi$ is well-defined because the straight line segment defined by
$(1-t)\cdot x + t\cdot \uppi_{R}(x)$ for $t\in [0,1]$ is contained within $R$, again by convexity.
Furthermore, $\phi$ is continuous by the continuity of the Euclidean projections $\uppi_{B_{\bar{r}}(y)}$, and is a deformation retraction onto $\A^{\bar{r}}$ because $\A^{\bar{r}} \subseteq \bigcup_{y\in \mathcal{Y}} B_{\bar{r}}(y)$ by hypothesis.
\end{proof}
%
%
%----
%
We conclude this section by presenting the proof of the result stated as Lemma \ref{lemma:removing-non-delaunay-edge} in the paper.
% Lemma: removing non-Delaunay
\begin{lemma}
\label{app-lemma:removing-non-delaunay-edge-full}
Let $(r, r+\varepsilon)$ be a single edge length range of \v{C}ech complexes of $S \subseteq (\R^d, \din)$, and $\{\KC{i}\}_{i=0}^{n_i}$ the \v{C}ech edge-by-edge filtration on this range. 
If the edge $e = \{p, q\}$ added going from $\KC{i-1}$ to $\KC{i}$ is non-Delaunay for $1 \leq i \leq n_i$, then $H_k(\KC{i} \setminus \st(e)) = H_k(\KC{i-1})$ and $H_k(\KC{i})$ are isomorphic for $k=0,1$.
\end{lemma}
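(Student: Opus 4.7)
My plan is to decompose $\KC{i} = A \cup B$ with $A := \cl(\st(e))$ and $B := \KC{i} \setminus \st(e)$. Since $e$ is the only edge added in passing from $\KC{i-1}$ to $\KC{i}$ and the cliques it generates all live in $\st(e)$, we have $B = \KC{i-1}$; also $A$ is contractible, being the closed star of a simplex. The reduced Mayer--Vietoris sequence together with $\tilde H_*(A) = 0$ yields precisely the long exact sequence displayed in the sketch of this lemma, so it suffices to prove $\tilde H_k(A \cap B) = 0$ for $k = 0, 1$, where $A \cap B = \cl(\st(e)) \setminus \st(e)$.

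\textbf{A contractible core.} Write $\bar r = \din(p,q)/2$ and $\openY := \{y \in S : \din(y, p) < 2\bar r \text{ and } \din(y, q) < 2\bar r\}$, and set
\[
K_0 := \left\{ \tau \subseteq \openY \;:\; \A^{\bar r} \cap \bigcap_{y \in \tau} \cB{\bar r}{y} \neq \emptyset \right\}.
\]
Since $e$ is non-Delaunay, Proposition \ref{app-prop:delaunay-edge} gives $\A^{\bar r} \subseteq \bigcup_{y \in S \setminus e} B_{\bar r}(y)$, and the triangle inequality forces every $y$ whose open ball meets $\A^{\bar r}$ to lie in $\openY$. Thus the family $\{B_{\bar r}(y)\}_{y \in \openY}$ already covers $\A^{\bar r}$, and Lemma \ref{app-lemma:retraction}, applied to the convex cover $\{\cB{\bar r}{y} \cap \A^{\bar r}\}_{y \in \openY}$ of the box $\A^{\bar r}$, shows that $K_0$ is homotopy equivalent to $\A^{\bar r}$ and therefore contractible. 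Any $\tau \in K_0$ satisfies $\tau \cup e \in \KC{i}$ by Proposition \ref{prop:intersection-boxes}(ii), so $K_0 \subseteq A \cap B$.

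\textbf{Main obstacle and strategy.} The hardest step will be to show that the inclusion $K_0 \hookrightarrow A \cap B$ induces isomorphisms on $H_0$ and $H_1$. Relative to $K_0$ the complex $A \cap B$ carries three sorts of excess: (i) the vertices $p$ and $q$ themselves; (ii) boundary vertices $y' \in S \setminus e$ with $\din(y', p) \leq 2\bar r$ and $\din(y', q) \leq 2\bar r$ but not both strict; and (iii) \v{C}ech edges $\{y, y'\}$ on $\openY$ whose balls meet pairwise but not jointly in $\A^{\bar r}$. I would address (i) by observing that $\{p\} * K_0$ and $\{q\} * K_0$ both lie in $A \cap B$, since any $\{p\} \cup \tau$ with $\tau \in K_0$ extends to $\{p, q\} \cup \tau \in A$; this cones $p$ and $q$ off a contractible subcomplex and contributes nothing to $\tilde H_0$ or $\tilde H_1$. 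For (ii), each boundary vertex $y'$ still has $\cB{\bar r}{y'} \cap \A^{\bar r} \neq \emptyset$ and, by the covering argument above, some $y \in \openY$ with $\{y', y\} \in A \cap B$, so $y'$ retracts onto $K_0$. For (iii), any \v{C}ech edge $\{y, y'\}$ with $y, y' \in \openY$ spawns a triangle $\{p, y, y'\} \in A \cap B$ (automatically \v{C}ech by Proposition \ref{prop:intersection-boxes}(ii), since pairwise box intersections suffice), so a $1$-cycle through $\{y, y'\}$ is homologous to a cycle routed through the cone at $p$ and bounds. Should direct surgery become unwieldy, an alternative is to filter $K_0 \subseteq K_0^{(1)} \subseteq \cdots \subseteq A \cap B$ by adjoining one excess simplex at a time and checking via reduced Mayer--Vietoris on its closed star at each step that no new $H_0$ or $H_1$ class appears. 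Combined with the contractibility of $K_0$, this gives $\tilde H_k(A \cap B) = 0$ for $k = 0, 1$ and finishes the argument.
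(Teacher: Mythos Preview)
Your overall architecture---Mayer--Vietoris with $A=\cl(\st(e))$, $B=\KC{i}\setminus\st(e)$, reduction to $\tilde H_k(A\cap B)=0$, a contractible nerve $K_0$ on $\openY$, then filtering up from $K_0$ to $A\cap B$ by coning off $p,q$ and handling leftover vertices and edges---is exactly the paper's approach. The gap is in your claim that $K_0\subseteq A\cap B$.

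You argue that $\tau\in K_0$ implies $\tau\cup e\in\KC{i}$ via Proposition~\ref{prop:intersection-boxes}(ii). That proposition only tells you $\tau\cup e$ lies in the \v{C}ech complex $K^{\check C}_{\bar r}$, not in the edge-by-edge filtration step $\KC{i}$. The problem is that your $K_0$, defined with \emph{closed} balls, can contain an edge $\{y',y''\}\subseteq\openY$ with $\din(y',y'')=2\bar r$ exactly. Such an edge is one of the length-$2\bar r$ edges being inserted one at a time on the range $(r,r+\varepsilon)$, and it may be scheduled \emph{after} step $i$; in that case $\{y',y''\}\notin\KC{i}$ and hence $\{y',y''\}\notin A\cap B$. (Concretely, with $p=(0,0)$, $q=(2,0)$ in $\R^2$ one has $\openY=(0,2)\times(-2,2)$, and $y'=(1,-1.5)$, $y''=(1,0.5)$ give $\din(y',y'')=2=2\bar r$.) Relatedly, your item (iii) is vacuous in $\ell_\infty$: if $\cB{\bar r}{y}$ and $\cB{\bar r}{y'}$ each meet $\A^{\bar r}$ and meet each other, Proposition~\ref{prop:intersection-boxes}(ii) forces a common point in $\A^{\bar r}$---so every \v{C}ech edge on $\openY$ already lies in your $K_0$, which is precisely why the length-$2\bar r$ issue bites.

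The paper patches this by perturbing the centers $\openY$ to a nearby set $\openX$ so that (a) no pairwise distance equals $2\bar r$, (b) the open balls still cover $\A^{\bar r}$, and (c) the intersection pattern of the open balls is unchanged. Then $K_0$ is taken to be the complex on $\openY$ with the combinatorial structure of $\nrv(\{\cB{\bar r}{x}\}_{x\in\openX})$, which now has only edges strictly shorter than $2\bar r$ and therefore genuinely sits inside $\KC{i-1}\subseteq A\cap B$. After this fix, the rest of your filtration-and-collapse plan (cone to $p$ and $q$; attach each remaining vertex $y'\in\linkY$ together with a triangle pair $\{p,y',y''\},\{q,y',y''\}$ for some $y''\in\openY$ whose open ball hits $\cB{\bar r}{y'}\cap\A^{\bar r}$; then attach remaining edges each with a filling triangle through $p$; finally throw in the higher simplices) goes through just as the paper does it.
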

\begin{proof}
We can apply the reduced Mayer-Vietoris sequence, as given in \cite[Section~4.6]{spanier2012algebraic}, with $A = \cl(\st(e)) \subseteq \cechr$ and $B = \KC{i} \setminus \st(e)$, so that $A \cap B = \cl(\st(e)) \setminus \st(e)$.
We obtain
\begin{align*}
	\cdots \rightarrow 
	\tilde{H}_k(A\cap B) 
	\rightarrow 
	\tilde{H}_k(A) \oplus \tilde{H}_k(B) &
	\rightarrow
	\tilde{H}_k(A \cup B) 
	\rightarrow
	\tilde{H}_{k-1}(A\cap B)
	\rightarrow \cdots
	\\
	& \Downarrow
	\\
	\cdots \rightarrow 
	\tilde{H}_k(\cl(\st(e)) \setminus \st(e)) 
	\rightarrow 
	\tilde{H}_k(\KC{i} \setminus \st(e)) & 
	\rightarrow
	\tilde{H}_k(\KC{i}) 
	\rightarrow
	\tilde{H}_{k-1}(\cl(\st(e)) \setminus \st(e))
	\rightarrow \cdots
\end{align*}
where $\tilde{H}_k(A)$ cancels out, because it is trivial by definition of $A$.
Thus, showing that $\tilde{H}_k(\cl(\st(e)) \setminus \st(e))$ is trivial in homological degrees $k$ and $k-1$, implies that %
$\tilde{H}_k(\KC{i} \setminus \st(e))
\rightarrow 
\tilde{H}_k(\cechr)$ 
is an isomorphism, from the exactness of the Mayer-Vietoris sequence above.
Note that $\KC{i} \setminus \st(e) = \KC{i-1}$ by definition of \v{C}ech edge-by-edge filtration.
% Figure: example contractible nerve
\begin{figure}[tb]
    \centering
    \begin{subfigure}[b]{2in}
        \centering
        \includegraphics[width=2in]{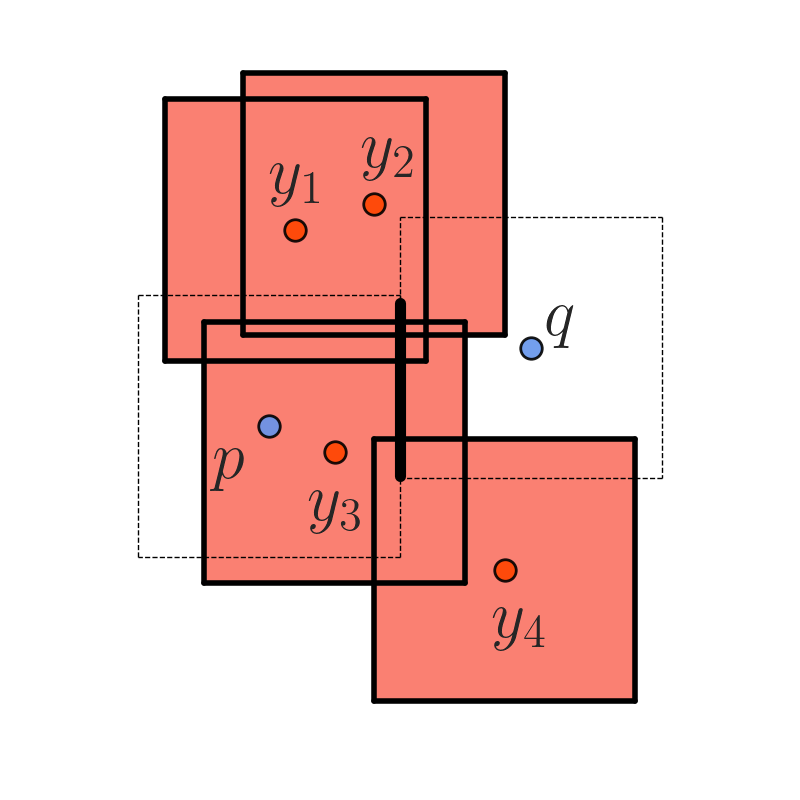}
        \caption{}
        \label{app-fig:example-main1}
    \end{subfigure}
    \begin{subfigure}[b]{2in}
        \centering
        \includegraphics[width=2in]{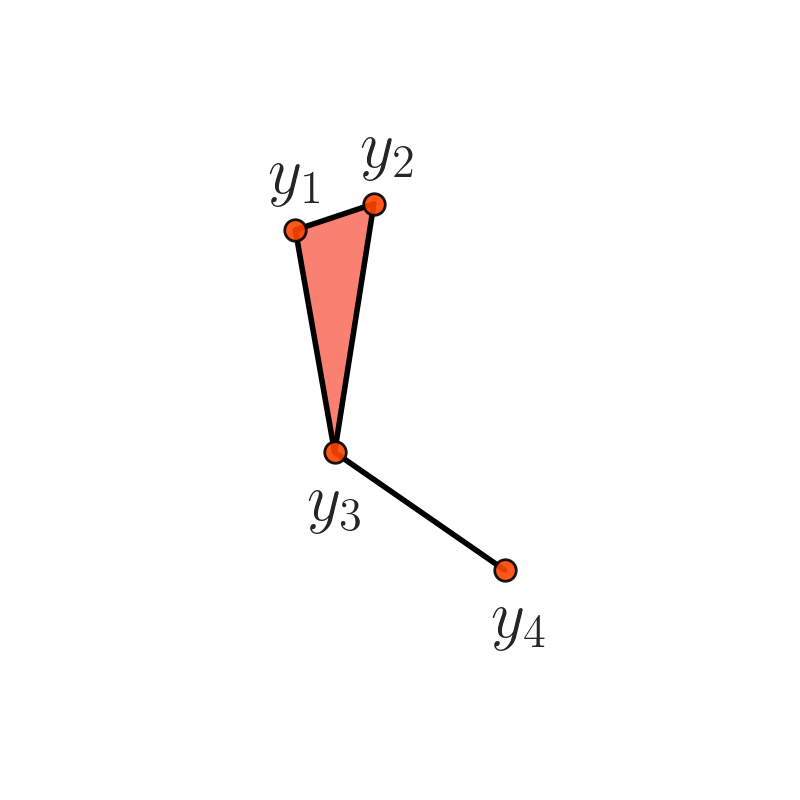}
        \caption{}
        \label{app-fig:example-main2}
    \end{subfigure}
    \begin{subfigure}[b]{2in}
        \centering
        \includegraphics[width=2in]{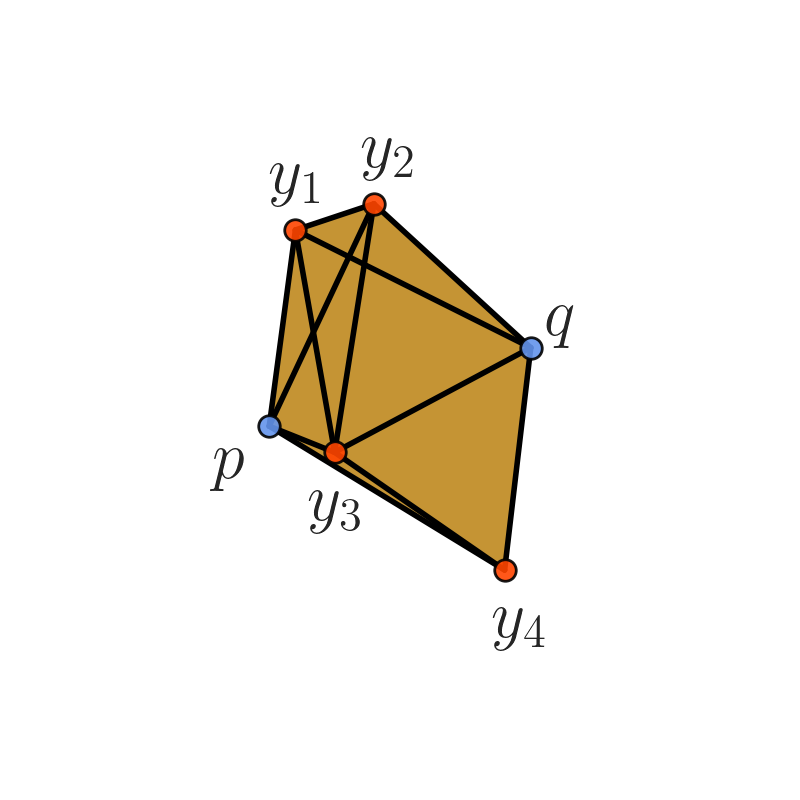}
        \caption{}
        \label{app-fig:example-main3}
    \end{subfigure}
    \caption{\textbf{(a)} $\linf$-Balls centered in the points of $\openY = \{y_1, y_2, y_3, y_4\}$ covering $\A^{\bar{r}}$. 
    \textbf{(b)} $K_0$ is complex on $\openY$ with the structure of the nerve $\text{Nrv}\left(\{\cB{\bar{r}}{x}\}_{x\in \openX}\right)$.
    \textbf{(c)} $K_1$, the union of the cones from $K_0$ to $p$ and $q$.}
    \label{app-fig:example-main}
\end{figure}

%---
We define $\openY = 
\{ y \in S 
\ \vert \ 
\din(y, p) < 2 \bar{r} 
\textrm{ and }
\din(y, q) < 2 \bar{r}
\}
$, where $\bar{r} = \frac{\din(p,q)}{2}$.
This is the set of points in $S$ such that $B_{\bar{r}}(y) \cap \A^{\bar{r}} \neq \emptyset$, where $\A^{\bar{r}} = \cB{\bar{r}}{p} \cap \cB{\bar{r}}{q}$.
Furthermore, $\bigcup_{y \in \openY} B_{\bar{r}}(y)$ covers $\A^{\bar{r}}$ by Proposition \ref{app-prop:delaunay-edge}.
We define 
$$\linkY
=
\{y \in S
\ \vert \ 
\{p, y\}
\textrm{ and }
\{q, y\} 
\textrm{ are edges of }
\KC{i-1}
\},$$
which is the set of vertices of $\cl(\st(e))$.
It follows $\openY \subseteq \linkY$, because $\KC{i-1}$ contains all the edges strictly shorter than $2\bar{r}$.
In particular, $p$ and $q$ are not in $\openY$.
The idea is to build a complex $K_0$ with trivial homology on the vertices of $\openY$, prove that $\finalcomplex$ contains it, and finally that the additional simplices in $\finalcomplex$ do not alter the homology of $K_0$ in degrees zero and one.

%
% \nrv\left( \{\cB{\bar{r}}{y}\}_{y \in \openY} \right)
% =
% \nrvOpenY
A possible candidate for $K_0$ is the subcomplex $\nrvOpenY$, with $\openY$ as set of vertices.
However, this might contain edges of length $2\bar{r}$, i.e. $\{y', y''\} \in \nrvOpenY$ such that $\din(y', y'') = 2\bar{r}$ and $y', y'' \in \openY$.
Thus, some edge $\{y', y''\} \in \nrvOpenY$ might not be in $\KC{i-1}$, which implies that $\{y', y''\}$ is also not in $\cl(\st(e))$.
To solve this issue, we map the points $\openY$ into a set $\openX$ such that $\din(x', x'') \neq 2 \bar{r}$.
In particular, we define $\openX$ as a small perturbation of $\openY$ (i.e. each $x \in \openX$ corresponds to a $y \in \openY$ and the coordinates of $x$ and $y$ are arbitrarily close) such that:
\begin{itemize}
    \item[\emph{(i)}] edges on $\openX$ have length different from $2\bar{r}$, i.e. $\din(x', x'') \neq 2\bar{r}$ for each $x', x'' \in \openX$;
    \item[\emph{(ii)}] the union of open balls on $\openX$ covers $\A^{\bar{r}}$, i.e. $\A^{\bar{r}} \subseteq \bigcup_{x \in \openX} B_{\bar{r}}(x)$;
    \item[\emph{(iii)}] the pattern of intersection of open balls on $\openY$ and $\openX$ is the same.
\end{itemize}
Because the elements in the finite family $\{B_{\bar{r}}(y) \}_{y \in \openY}$ are open sets, it follows that there exists $\openX$ with properties \emph{(i)}, \emph{(ii)}, and \emph{(iii)} above.
Importantly, the nerve $\nrv\left( \{ \cB{\bar{r}}{x} \}_{x \in \openX} \right)$ of the family of closed balls centered in points of $\openX$ only containing edges strictly shorter than $2\bar{r}$.
We define $K_0$ as the complex on $\openY$ with the combinatorial structure of the nerve $\nrv\left( \{ \cB{\bar{r}}{x} \}_{x \in \openX} \right)$, see Figure \ref{app-fig:example-main2}.
So $K_0 \subseteq \cl(\st(e))$, because vertices and edges of $K_0$ are a subset of those of $\cl(\st(e))$ and this is a subcomplex of a flag complex.
Moreover, $K_0$ does not contain $e$, and so does not contain any simplex of $\st(e)$.
It follows that
\begin{align*}
    K_0 \setminus \st(e) 
    &
    \subseteq 
    \cl(\st(e)) \setminus \st(e)
    \\
    & \Downarrow
    \\
    K_0
    & 
    \subseteq 
    \cl(\st(e)) \setminus \st(e)
\end{align*}
We can show the existence of a filtration
$$
K_0 \subseteq
K_1 \subseteq
\ldots 
\subseteq
K_j
\subseteq
\ldots 
\subseteq
K_{n_j}
=
\cl(\st(e)) \setminus \st(e),
$$
such that if $\tilde{H}_k(K_{j-1})$ is trivial for $k=0, 1$, then $\tilde{H}_k(K_{j})$ is also trivial for $k=0,1$ for each $1 \leq j \leq n_j$.

%---
By Lemma \ref{app-lemma:retraction}, $K_0$ has the same homotopy type of $\A^{\bar{r}}$, which is convex and so contractible, see Figure \ref{app-fig:example-main1}.
Hence, $\tilde{H}_k\left(K_0\right)$ is trivial for any $k \geq 0$.

%---
Next, we define $K_1$ as the union of the cone from $K_0$ to $p$ and the cone from $K_0$ to $q$, see Figure \ref{app-fig:example-main3}.
Note that these cones are in $\cl(\st(e)) \setminus \st(e)$ because all the edges from $p$ and $q$ to $K_0$ are strictly shorter than $2\bar{r}$ by definition of $\openY$.
Importantly, the complex $K_1$ has trivial reduced homology because it collapses on $K_0$.

%---
Then, we define each step $K_{j-1} \subseteq K_{j}$ for $2 \leq j \leq |\linkY| - 1$ by adding one of the vertices of $\linkY$ not in $K_{j-1}$. 
In particular, we add each of these vertices together with three edges and two triangles which all belong to $\cl(\st(e)) \setminus \st(e)$.
Let $y' \in \linkY$ be the vertex to be added in $K_{j}$.
We have $\din(p, q) = 2 \bar{r}$, and $\din(y', p) \leq 2 \bar{r}$, $\din(y', q) \leq 2 \bar{r}$ by definition of $\linkY$.
So, $\cB{\bar{r}}{y'} \cap \A^{\bar{r}} = \cB{\bar{r}}{y'} \cap \cB{\bar{r}}{p} \cap \cB{\bar{r}}{q} \neq \emptyset$ by Proposition \ref{prop:intersection-boxes} \emph{(ii)}.
It follows that there exists $z \in \cB{\bar{r}}{y'} \cap \A^{\bar{r}}$, and because $\A^{\bar{r}}$ is covered by $\bigcup_{y \in \openY} B_{\bar{r}}(y)$ it must be that $z \in B_{\bar{r}}(y'')$ for some $y'' \in \openY$.
We have $\din(y', y'') < 2 \bar{r}$, because $\cB{\bar{r}}{y'} \cap B_{\bar{r}}(y'') \neq \emptyset$, and so $\{y', y''\} \in \cl(\st(e)) \setminus \st(e)$.
Thus, the simplices $\{p, y'\}$, $\{q, y'\}$, $\{y', y''\}$, $\{p, y', y''\}$, and $\{q, y', y''\}$ are all in $\cl(\st(e)) \setminus \st(e)$, because $y' \in \linkY$ (i.e. $\{p, y'\}$, $\{q, y'\} \in \KC{i-1}$) and $\{p, y''\}$, $\{q, y''\}$, $\{y', y''\}$ are strictly shorter than $2\bar{r}$.
To conclude, we define 
$$K_{j}
=
K_{j-1} \cup \{y'\}
\cup 
\{p, y'\}
\cup 
\{q, y'\}
\cup
\{y', y''\}
\cup 
\{p, y', y''\}
\cup 
\{q, y', y''\}.
$$

%---
At each step $K_{j-1} \subseteq K_j$ for $2 \leq j \leq |\linkY| - 1$, the edges $\{p, y'\}$ and $\{q, y'\}$ are free faces of $K_j$, which collapses on $K_{j-1}$.
Thus, after adding the set $\linkY$ of vertices of $\cl(\st(e)) \setminus \st(e)$ in $K_1$, we obtain a complex $K_{|\linkY|-1}$ which still has trivial reduced homology.

%---
Then, we define the steps $K_{j-1} \subseteq K_{j}$ for $|\linkY| \leq j \leq n_j - 1$ by adding a single edge $\{y' ,y''\}$ among those in $\cl(\st(e)) \setminus \st(e)$ but not yet in $K_{j-1}$.
In particular, we set
$$
K_j = K_{j-1} \cup \{y', y''\} \cup \{p, y', y''\},
$$
where $\{p, y', y''\}$ can be added because both $\{p, y'\}$ and $\{p, y''\}$ were added in previous steps.
So $\{y', y''\}$ is a free face of $K_j$, which collapses on $K_{j-1}$ for each $|\linkY| \leq j \leq n_j -1$, and we have that $K_{n_j-1}$ has trivial reduced homology.

%---
In the final step $K_{n_j-1} \subseteq K_{n_j}$, we add all the simplices in $K_{n_j} = \cl(\st(e)) \setminus \st(e)$ which are not in $K_{n_j-1}$.
As $K_{n_j-1}$ contains all the vertices and edges of $K_{n_j}$ by definition, in the final step we only add triangles and higher-dimensional simplices in $K_{n_j}$.
These new simplices cannot affect degree-zero homology. 
On the other hand, they could affect degree-one homology by deleting classes in $\tilde{H}_1(K_{n_j-1})$.
But this cannot happen because $\tilde{H}_1(K_{n_j-1})$ is already trivial.  
So, we have that the reduced homology in degrees zero and one of $K_{n_j}$ is trivial.

%---
The proof follows from the exactness of the reduced Mayer-Vietoris sequence as mentioned above, and the fact that isomorphisms in reduced homology translate into isomorphisms in non-reduced homology.
\end{proof}

\clearpage
\section{Notation}
\label{app:notation}

\begin{itemize}
    \item[] \textbf{Preliminaries}
    \item $\st(\tau)$ star of $\tau \subseteq K$
    \item $B_r(p)$ open ball, $\overline{B_r(p)}$ closed ball, $\partial \overline{B_r(p)}$ boundary of closed ball.
    \item $V_p$ $\linf$-Voronoi region, $K^D$ $\linf$-Delaunay complex.
    %===============
    \item[] \textbf{Persistent Homology}
    \item $K_{\mathcal{R}}$ filtration of $K$ parameterized by $\mathcal{R} = \{r_i\}_{i=1}^m$.
    \item $\modulek$ is the $k$-th persistence module of $K_{\mathcal{R}}$.
    \item $\Frips$ Vietoris-Rips filtration, $\Fcech$ \v{C}ech filtration, $\Falpha$ Alpha filtration.
    %===============
    \item[] \textbf{$\linf$-Delaunay Edges}
    \item Given $\sigma \subseteq S$, $\bar{r} = \frac{\diamInfty{\sigma}}{2}$.
    \item $z$ is witness of $\sigma$ if $z \in \bigcap_{p\in \sigma} V_p$ and $\din(z, p) = \bar{r}$ for each $p \in \sigma$.
    \item $\Z{\sigma}$ is the set of witness points of $\sigma$
    \item $A_{\sigma}^{\bar{r}+\varepsilon} 
           = 
           \bigcap_{p \in \sigma}
           \partial \overline{B_{\bar{r}+\varepsilon}(p)}$
           for $\varepsilon \geq 0$ and $\sigma \subseteq S$.
    \item Given an edge $e = \{p, q\}$, then $\A^{\bar{r}} = 
    \partial \overline{B_{\bar{r}}(p)} 
    \cap 
    \partial \overline{B_{\bar{r}}(q)}
    =
    \overline{B_{\bar{r}}(p)} 
    \cap 
           \overline{B_{\bar{r}}(q)}$
           is a non-empty box.
           If a witness of $e=\{p,q\}$ exists, then it must be in $\A^{\bar{r}}$ and not in $\bigcup_{y \in S\setminus \{p, q\}} B_{\bar{r}}(y)$.
    \item[] \textbf{Alpha Flag and Minibox Complexes}
    \item $\minipq = 
                   \prod_{i=1}^d 
                   \big( \min\{p_i, q_i\}, \max\{p_i, q_i\} \big)$.
    \item $\Fflag$ Alpha flag filtration, $\Fmini$ Minibox filtration.
\end{itemize}

\end{document}